
\documentclass[10pt,reqno]{article}
\usepackage{fullpage}

\usepackage[unicode=true]{hyperref}
\usepackage{amsmath}
\usepackage{cite}
\usepackage{amsfonts}
\usepackage{amssymb}
\usepackage{amsthm}
\usepackage{authblk}

\usepackage[pdftex]{color,graphicx}
\setlength{\oddsidemargin}{0in}
\setlength{\textwidth}{6.5in}    

\setlength{\headheight}{0in}
\setlength{\headsep}{0in}
\setlength{\topmargin}{0in}
\setlength{\textheight}{9in}     
 
\usepackage{mypack} 
\usepackage{adjustbox}
\usepackage{soul}
\usepackage{comment}
\usepackage{bbold}
\usepackage{tikz}
\usetikzlibrary{decorations.pathreplacing,angles,quotes}
\usepackage{bbold}

\usepackage{diagbox}
\usepackage{enumerate}

\usetikzlibrary{matrix,arrows,decorations.pathmorphing}
\usepackage{tikz-cd}
\usetikzlibrary{arrows} 

\usetikzlibrary{decorations.markings}
 \usepackage{caption}
\usepackage{subcaption}
 
 \usepackage{pdfpages}

\usepackage{blkarray}

\usepackage{centernot}
\usepackage{mathtools}
\usepackage{stmaryrd}

  \usepackage{arydshln}

\definecolor{bluegray}{rgb}{0.4, 0.6, 0.8}
\definecolor{turquoise}{rgb}{0.2, 0.7, 0.6}
\definecolor{hy-green}{rgb}{0.1, 0.5, 0.1}

 
\usepackage{soul}  
 \newcommand{\suchthat}{\;\ifnum\currentgrouptype=16 \middle\fi|\;}

\title{Mermin polytopes in quantum computation and foundations}

\author{Cihan Okay\footnote{cihan.okay@bilkent.edu.tr}}
\author{Ho Yiu Chung\footnote{hoyiu.chung@bilkent.edu.tr,
}}
\author{Selman Ipek\footnote{selman.ipek@bilkent.edu.tr}}
\affil{Department of Mathematics, Bilkent University, Ankara, Turkey}

\begin{document}
  \maketitle  
  
\begin{abstract}
Mermin square scenario provides a simple proof for state-independent contextuality.  
In this paper, we study polytopes $\MP_\beta$ obtained from the Mermin scenario, parametrized by a function $\beta$ on the set of contexts. 
Up to combinatorial isomorphism,
there are two types of polytopes $\MP_0$ and $\MP_1$ depending on the parity of $\beta$. Our main result is the classification of the vertices of these two polytopes. In addition, we describe the graph associated with the polytopes. 
All the vertices of $\MP_0$ turn out to be deterministic. This result provides a new topological proof of a celebrated result of Fine characterizing noncontextual distributions on the CHSH scenario. 
$\MP_1$ can be seen as a nonlocal toy version of $\Lambda$-polytopes, a class of polytopes introduced for the simulation of universal quantum computation. In the $2$-qubit case, we provide a decomposition of the $\Lambda$-polytope using $\MP_1$, whose vertices are classified, and the nonsignaling polytope of the $(2,3,2)$ Bell scenario, whose vertices are well-known. 
\end{abstract}
  
 \tableofcontents

\section{Introduction}\label{sec:intro}

Central to many of the paradoxes arising in quantum theory is that the act of measurement cannot be understood as merely revealing the pre-existing values of some hidden variables.\footnote{A classic counterexample to this viewpoint is the well-known de Broglie Bohm pilot wave theory \cite{bohm1952suggested}. For more modern approaches seeking to bypass these claims, see e.g., \cite{schmid2020unscrambling,caticha2022entropic}.} Instead, as shown by the `no-go' theorems of Bell \cite{bell1966problem}, and Kochen-Specker (KS) \cite{kochen1975problem}, the outcomes of quantum measurements depend crucially on what else they are being measured with, a phenomenon known as contextuality. (For a recent review, see e.g., \cite{budroni2021quantum}.) A particularly accessible illustration of this quantum mechanical feature using just two spin-$1/2$ particles was given some years ago by Mermin \cite{mermin1993hidden}, an example which is now commonly called Mermin's square.
This scenario, as illustrated in Fig.~(\ref{fig:mermin-scenario}), consists of $9$ measurements $M$ and $6$ contexts $\cC$ given by the rows and the columns of the square grid. Together with the function $\beta$ which assigns a value in $\ZZ_2=\set{0,1}$ to each context this scenario specifies a binary linear system \cite{cleve2014characterization}.
It is known that this binary system $(M,\cC,\beta)$ has a classical solution if and only if 
\begin{equation}\label{eq:beta-coho}
[\beta] = \sum_{C\in \cC} \beta(C) = 0 \mod 2.
\end{equation}
 \begin{figure}[h!] 
  \centering
  \includegraphics[width=.4\linewidth]{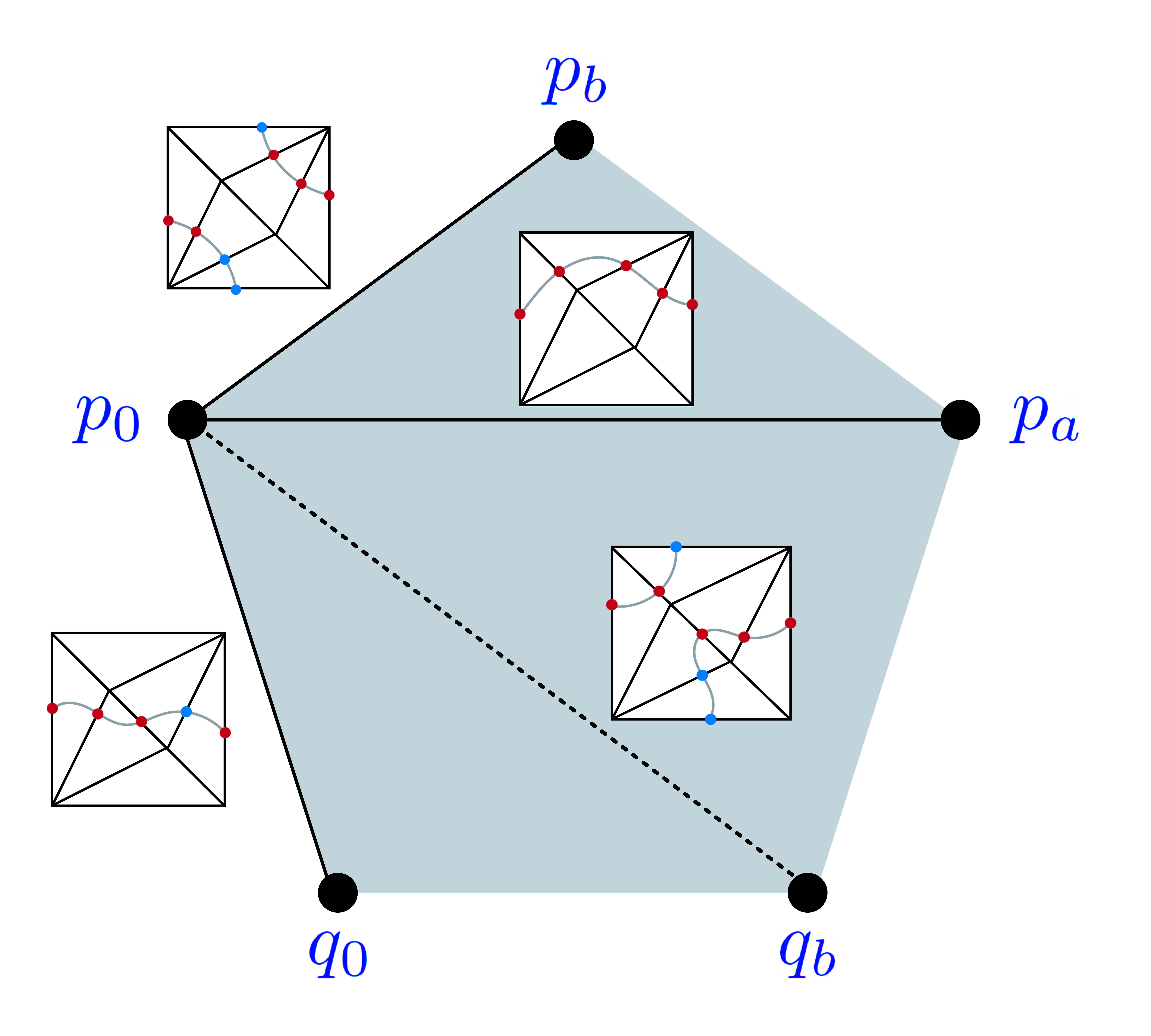}
\caption{Local structure $\MP_1$ at the type $1$ vertex $\VFS$ and type $2$ vertex $\VFe$. The former has only type $2$ neighbors, a single orbit under the action of the stabilizer of the vertex. The latter has both type $1$ (single orbit) and type $2$ neighbors (breaks into two orbits with representatives $\Vnn$ and $\Vtt$). Edges in the polytope are represented by loops on the Mermin torus. $\Vte$ can be connected to $\VFe$ by a path corresponding to a loop but is not a neighbor. 
}
\label{fig:local-MP1}
\end{figure}
However, even in the case of $[\beta]=1$ there is a quantum solution, e.g., over $2$-qubits as given in Fig.~(\ref{fig:mermin-scenario-os}).
The quantity $[\beta]$ is, in fact, cohomological, as first observed in \cite{Coho}. 
The 
cohomological perspective is based on reorganizing the scenario into a  space. Then the Mermin scenario is represented as a torus; see Fig.~(\ref{fig:mermin-scenario-beta1}). 
In this representation, measurements label the edges of the triangles, and $\beta$ assigns a value in $\ZZ_2$ to each triangle.  
Choosing a quantum state induces a nonsignaling distribution on the Mermin scenario with support on each context $C$ consisting of the set $O_\beta(C)$ of  outcome assignments $s:C\to \ZZ_2$ that satisfy $\sum_{m\in C} s(m)=\beta(C)$.
 \begin{figure}[h!]
\centering
\begin{subfigure}{.49\textwidth}
  \centering
  \includegraphics[width=.6\linewidth]{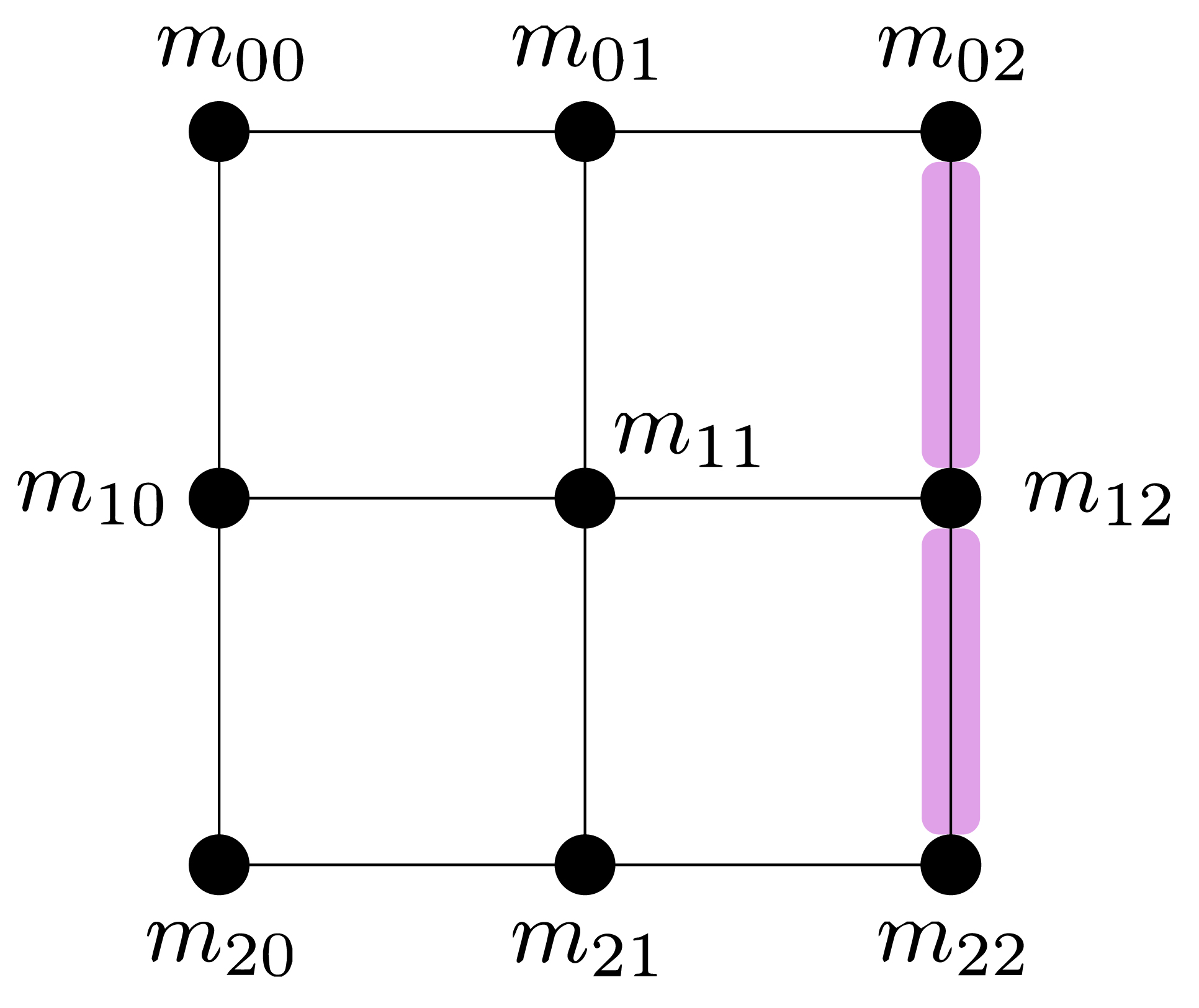}
  \caption{}
  \label{fig:mermin-scenario}
\end{subfigure}%
\begin{subfigure}{.49\textwidth}
  \centering
  \includegraphics[width=.6\linewidth]{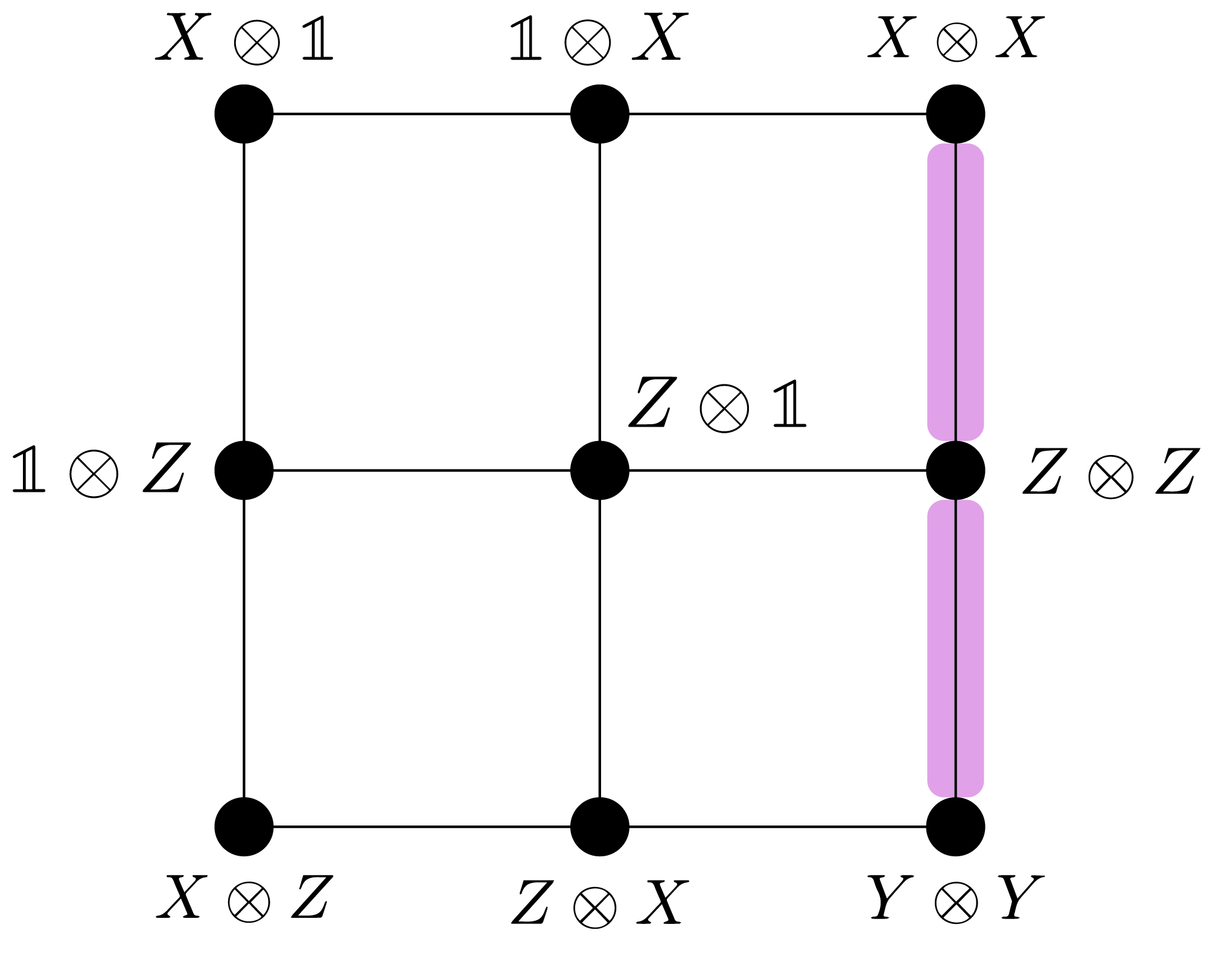}
  \caption{}
  \label{fig:mermin-scenario-os}
\end{subfigure}
\caption{Mermin scenario with $\beta=1$ for the context indicated in red color.  
}
\label{fig:mermin-scenario-and-os}
\end{figure}
Let $\NS_\cC$ denote the nonsignaling polytope for the Mermin scenario. 
We introduce a subpolytope, called the {\it Mermin polytope}, 
\begin{equation}\label{eq:NSbeta}
\MP_\beta \subset \NS_{\cC}
\end{equation}
that consists of nonsignaling distributions, that is, tuples of $p=(p_C)_{C\in \cC}$ probability distributions compatible under marginalization,  such that the support of each $p_C$ is contained in $O_\beta(C)$. 
We show that  the combinatorial isomorphism type of the  polytope $\MP_\beta$ is determined by 
$[\beta]$. 
As canonical representatives for $[\beta]=0$ and $1$ we take the choices of $\beta$'s given in Fig.~(\ref{fig:mermin-scenario-beta0}) and Fig.~(\ref{fig:mermin-scenario-beta3}); respectively. 
The resulting Mermin polytopes will be denoted by $\MP_0$ and $\MP_1$.
One of our main technical contributions is the classification of the vertices of these two polytopes.

\Thm{\label{thm:VertexClassification}
Let $\MP_\beta$ denote the Mermin polytope.
\begin{enumerate}
\item All the vertices of $\MP_0$ are deterministic distributions corresponding to the functions 
$$s:\set{m_{00},m_{01},m_{10},m_{11}}\to \ZZ_2.$$ 
There are $16$ vertices. 

\item For $\MP_1$ the vertices are given by  pairs $(\Omega,s)$ where $\Omega\subset M$ is a maximal closed noncontextual (cnc) set and $s:\Omega\to \ZZ_2$ is an outcome assignment. 
There are two types of vertices: 
\begin{itemize}
\item Type $1$: When $\Omega$ is of type $1$. 
There are $48$ vertices of this type.
\item Type $2$: When $\Omega$ is of type $2$.  
There are $72$ vertices of this type.
\end{itemize}   
\end{enumerate} 
}

Our vertex classification result relies on 
the symmetries of the Mermin polytopes. 
We identify a subgroup $G_\beta$ of the combinatorial automorphisms of $\MP_\beta$. 
We show that $G_0$ acts transitively on the vertices of $\MP_0$. This means that for any pair of vertices,
there is a symmetry of the polytope that moves one  
to the other. For $\MP_1$ the symmetry group $G_1$ acts transitively within each type of vertices. 
We also study the stabilizer group of the vertices, that is, symmetry elements that fix a given vertex, and the action of this group on the neighbor vertices to obtain a description of the graph associated to the polytopes.
In  the graph of $\MP_1$ the main structural elements are the loops on the Mermin torus that give the edges of the graph connecting a pair of neighbor vertices; see Fig.~(\ref{fig:local-MP1}).

\Thm{\label{thm:GraphClassification}
Let $\MP_\beta$ denote the Mermin polytope.
\begin{enumerate}
\item The graph of $\MP_0$ is the complete graph $K_{16}$.

\item The graph of $\MP_1$ consists of $120$ vertices and the local structure at the type $1$ and $2$ vertices is depicted in Fig.~(\ref{fig:local-MP1}). 
\end{enumerate}
}

\begin{figure}[h!]
\centering
\begin{subfigure}{.33\textwidth}
  \centering
  \includegraphics[width=.6\linewidth]{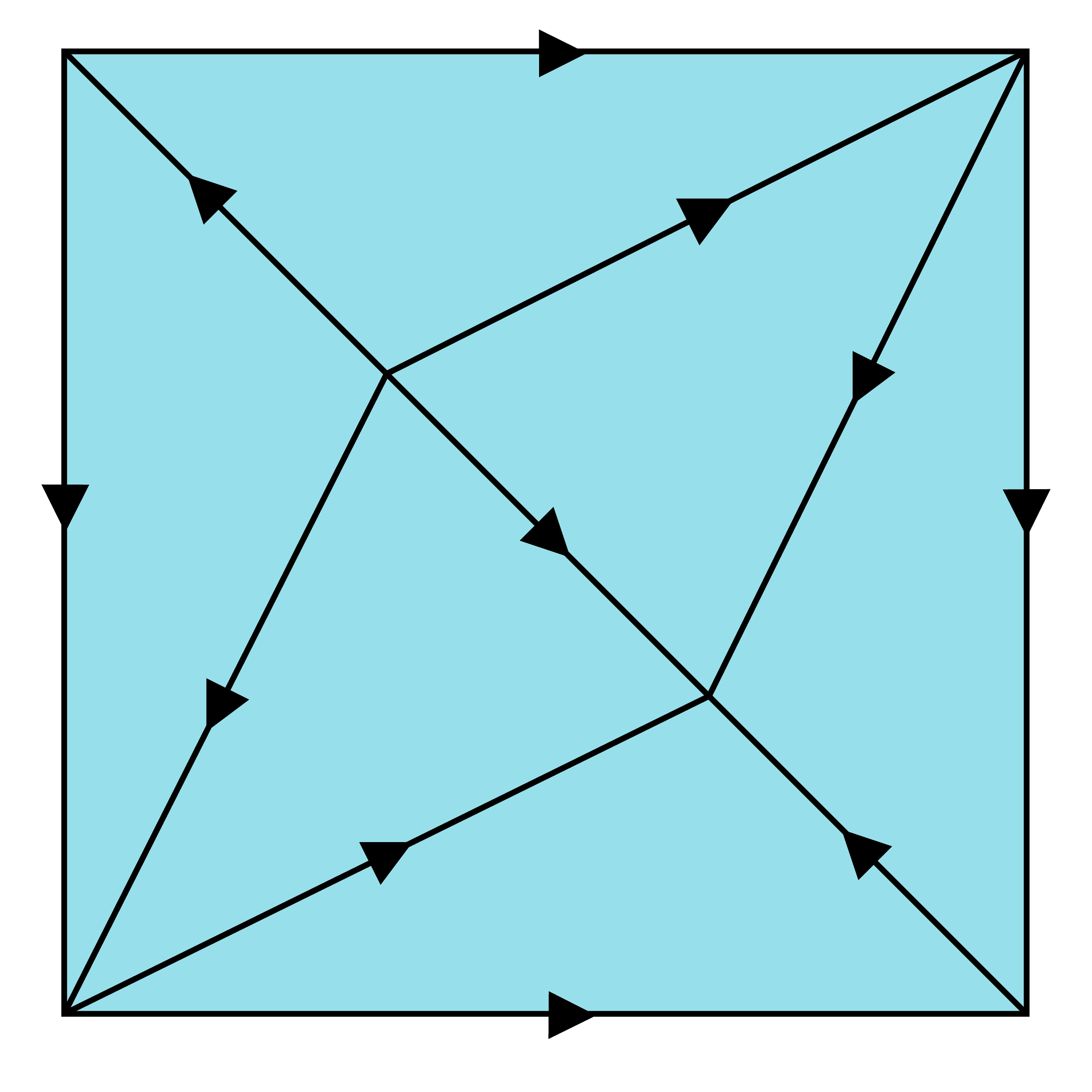}
  \caption{}
  \label{fig:mermin-scenario-beta0}
\end{subfigure}%
\begin{subfigure}{.33\textwidth}
  \centering
  \includegraphics[width=.6\linewidth]{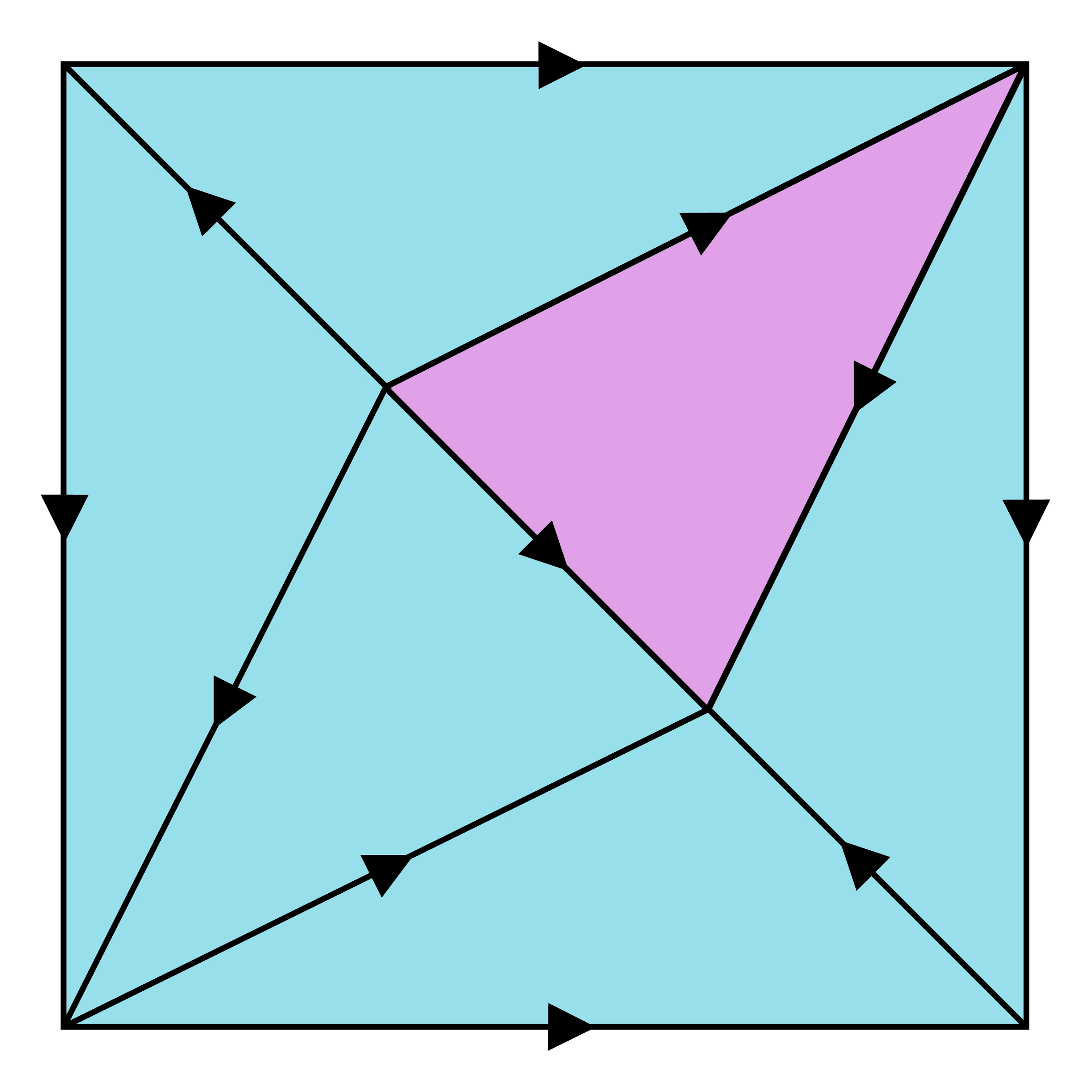}
  \caption{}
  \label{fig:mermin-scenario-beta1}
\end{subfigure}
\begin{subfigure}{.33\textwidth}
  \centering
  \includegraphics[width=.6\linewidth]{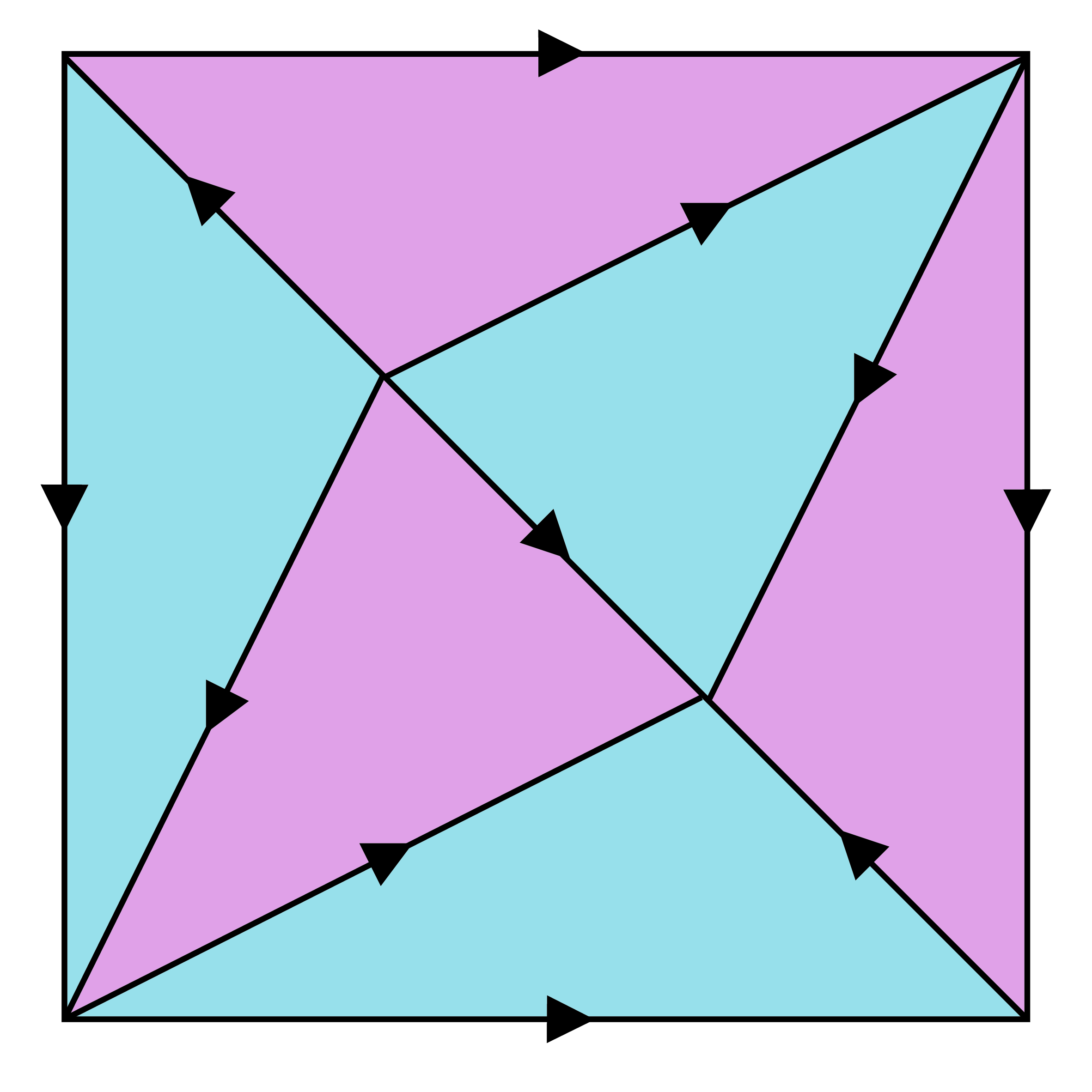}
  \caption{}
  \label{fig:mermin-scenario-beta3}
\end{subfigure}
\caption{
Mermin scenario represented as a torus: Top and bottom (left and right) edges are identified. In this representation $\beta$ assigns $\set{0,1}$ to each triangle. Red color indicates that $\beta=1$; otherwise $\beta=0$.
}
\label{fig:mermin-scenario-beta}
\end{figure}

Let us put our result into context: The polytope $\MP_0$ can be best studied within the framework of simplicial distributions introduced in \cite{okay2022simplicial}. In this framework, nonsignaling distributions can be interpreted as distributions on spaces,  as in Fig.~(\ref{fig:mermin-ns}). The nonsignaling conditions are encoded at the faces of the triangles. The Mermin scenario 
can be regarded as an extension of the well-known Clauser, Horne, Shimony, Holt (CHSH) \cite{clauser1969proposed} scenario, a Bell scenario consisting of two parties and two measurements $x_i,y_j$, where $i,j\in \ZZ_2$, for each party with binary outcomes; see Fig.~(\ref{fig:mermin-ns-measurements}). 
A fundamental result for the CHSH scenario is Fine's theorem \cite{fine1982hidden}. This theorem says that a distribution on the CHSH scenario is noncontextual if and only if the CHSH inequalities are satisfied. Our vertex classification for $\MP_0$ can be turned into a new topological proof of Fine's theorem. This proof diverges from Fine's original argument; see \cite[Thm. 4.12]{okay2022simplicial} for an alternative topological proof closer to Fine's original argument. 
We present our topological proof of Fine's theorem 
in Section \ref{sec:fine-thm}.

The other polytope, $\MP_1$, can be seen as a toy model
of a 
more complicated polytope introduced in \cite{zurel2020hidden} for classical simulation of universal quantum computation. 
For $n$-qubits, the polytope $\Lambda_n$ used in this classical simulation is defined as the polar dual of the $n$-qubit stabilizer polytope. 
These polytopes are only fully understood in the case of a single qubit: $\Lambda_1$  is a $3$-dimensional cube containing the Bloch sphere. 
The combinatorial structure of $\Lambda_n$ for $n\geq 2$ is yet to be understood. 
This mathematical problem is the main obstacle to quantifying the complexity of the $\Lambda$-simulation algorithm, a fundamental question in the study of quantum computational advantage.
The next case, $\Lambda_2$, is only understood numerically
(e.g., using  Polymake \cite{gawrilow2000polymake}).
A geometric
understanding of $\Lambda_2$ will bring insight into 
the structure of $\Lambda$-polytopes with   higher number of qubits. 
Tensoring a vertex of $\Lambda_2$ with an $(n-2)$-qubit stabilizer state produces a vertex in $\Lambda_n$ 
\cite[Theorem 2]{okay2021extremal}. 
Some of the vertices of $\Lambda_2$ are similar to the vertices of $\MP_1$. These vertices are also described by   cnc sets \cite{raussendorf2020phase}.
In fact, the Mermin polytope $\MP_1$ can be seen as a nonlocal version of $\Lambda_2$. 
The local part is captured by the 
nonsignaling polytope $\NS_{232}$ of the two party Bell scenario, consisting of two measurements with binary outcomes per parties.
Our decomposition result provides a description of $\Lambda_2$ in terms of two well-understood polytopes: $\NS_{232}$ whose vertices are described in \cite{jones2005interconversion} and $\MP_1$ described in Theorem \ref{thm:VertexClassification}.
 
Our main contributions in this paper can be summarized as follows:
\begin{itemize}
\item We define families of Mermin polytopes parametrized by a function $\beta$
and classify the corresponding polytopes by 
the
cohomology class $[\beta]$ (Proposition~\ref{pro:MPbeta-cohomology}).

\item The symmetry groups $G_\beta$ of each equivalence class of Mermin polytopes are described and we demonstrate that they are isomorphic (Proposition~\ref{pro:G0-G1}).

\item A complete characterization of the vertices for both classes of Mermin polytopes is given (Theorem~\ref{thm:VertexClassification}).

\item $G_0$ acts transitively on the vertices of $\MP_0$ (Lemma \ref{lem:g0-transitive}) and $G_1$ acts transitively on the vertices of $\MP_1$ of a fixed type (Lemma \ref{lem:stabilizer-MP1}). The latter result also describes the stabilizers of each type of vertices.

\item Graphs of both Mermin polytopes are described (Theorem \ref{thm:Graph-MP0} and Theorem \ref{thm-MP1-graph}).

\item We exploit the relationship between the Mermin and CHSH scenarios to provide a 
new topological proof of Fine's theorem \cite{fine1982hidden,fine1982joint} (Theorem~\ref{thm:fine}). An important step is  the vertex classification for  $\MP_0$, which implies that any distribution on the Mermin torus is noncontextual (Corollary \ref{cor:Mermin-noncontextual}).

\item The $\Lambda_{2}$ polytope is decomposed into local and nonlocal polytopes. The former is a well-known nonsignaling polytope $NS_{232}$ \cite{jones2005interconversion}, while the nonlocal part is precisely the Mermin polytope 
(Theorem~\ref{thm:Lambda2NS}).
\end{itemize}

The rest of the paper is organized as follows. In Section~\ref{sec:mermin-polytopes} we formalize the Mermin scenario and the notion of Mermin polytopes. In Section~\ref{sec:mermin-vertices} we characterize the vertices of the Mermin polytopes.
In Section \ref{sec:Graph} we describe the graphs of the polytopes. In Section~\ref{sec:applications} we apply the vertex characterization to 
problems in
quantum foundations and quantum computation. More involved proofs for Propositions~\ref{pro:MPbeta-cohomology} and \ref{pro:G0-G1} can be found in Appendices~\ref{sec:ProofPro} and \ref{sec:proof-g0-g1}, respectively. Appendix \ref{sec:Stabilizer} contains the description of the stabilizer groups of the vertices of $\MP_1$.


\section{Mermin polytopes} \label{sec:mermin-polytopes}

Mermin polytopes mentioned in this paper are certain subpolytopes of nonsignaling polytopes associated to the Mermin scenario. In this section we introduce these polytopes formally and show that up to combinatorial isomorphism of polytopes there are two types denoted by $\MP_0$ and $\MP_1$. Our main result is a classification theorem for  the vertices of these polytopes.


\subsection{Definition}

A {\it measurement scenario}, or more briefly a {\it scenario}, consists of the following data:
\begin{itemize}
\item a set $M$ of measurements,
\item a collection $\cC$ of subsets $C\subset M$, called contexts, that cover the whole set of measurements, i.e. 
$$M=\cup_{C\in \cC} C,$$
\item a set of outcomes, which through the paper is fixed as $\ZZ_2=\set{0,1}$.
\end{itemize} 
Since the outcome set is fixed we will write $(M,\cC)$ to denote a scenario.
For a set $U$ we will write $\ZZ_2^U$ 
for the set $\set{s:U\to \ZZ_2}$ of functions on a context $C\in \cC$.
The {\it nonsignaling polytope} on this scenario, denoted by $\NS_\cC$, consists of collections $(p_C)_{C\in \cC}$ of probability distributions, each given by a function $p_C:\ZZ_2^C\to \RR_{\geq 0}$ where $\sum_{m\in C}p(m)=1$, satisfying the nonsignaling condition given by
$$
p_C|_{C\cap C'} = p_{C'}|_{C'\cap C}\;\;\; \forall C,C'\in \cC.
$$
The {\it restriction} $p_C|_{C\cap C'}$ corresponds to marginalization of the distribution to the intersection. A distribution $p$ is called {\it noncontextual} if there exists a distribution $d:\ZZ_2^M\to \RR_{\geq 0}$ such that $p_C=d|_C$ for all $C\in \cC$. Otherwise, $p$ is called {\it contextual}. For more details see \cite{abramsky2011sheaf}. We will write $\NS_\cC$ for the polytope of nonsignaling distributions on the scenario $(M,\cC)$.

We are interested in polytopes associated to binary linear systems \cite{cleve2014characterization}. 
A {\it binary linear system} consists of  a scenario $(M,\cC)$ together with a function $\beta:\cC\to \ZZ_2$. For each $C$ we will write
$$
O_\beta(C) = \set{s:C\to \ZZ_2:\, \sum_{m\in C} s(m)=\beta(C)}\,\subset \ZZ_2^C.
$$
A function in this set will be referred to as an {\it outcome assignment} on the context $C$.
We introduce a subpolytope 
\begin{equation}\label{eq:NSbeta}
\NS_{\cC,\beta} \subset \NS_{\cC}
\end{equation}
that consists of nonsignaling distributions $p=(p_C)_{C\in \cC}$ such that
$$
\supp(p_C) \subset O_\beta(C)\;\;\; \forall C\in \cC
$$
where $\supp(p_C)$ stands for the support of $p_C$, i.e.,  the set of functions $s:C\to \ZZ_2$ 
such that $p_C(s)> 0$.

\Def{\label{def:MerminPolytope}
{\rm
The {\it Mermin scenario} consists of 
\begin{itemize}
\item the measurement set $M=\set{m_{ij}:\, i,j\in \ZZ_3}$, and
\item the cover $\cC$ given by two types of contexts:
\begin{itemize}
\item Horizontal: $\cC^\hor=\set{C_i^\hor:\,i\in \ZZ_3}$ where $C_i^\hor= \set{m_{ij}:\,j\in \ZZ_3}$,
\item Vertical: $\cC^\ver=\set{C_j^\ver:j\in \ZZ_3}$ where $C_j^\ver=\set{m_{ij}:\,i\in \ZZ_3}$.
\end{itemize}
\end{itemize}
The {\it Mermin polytope} for a function $\beta:\cC\to \ZZ_2$ is defined to be $\MP_\beta=\NS_{\cC,\beta}$.
Analogously we can consider quasiprobability distributions on the Mermin scenario with   restricted support. We will write $\MP_\beta^\RR$ for this polytope. 
}
}

In this paper we will study the Mermin polytope associated to the Mermin scenario  $(M,\cC)$.

\subsection{Topological representation}\label{sec:top-rep}

In \cite{mermin1993hidden} it was shown that the Mermin scenario can be represented by a torus with a certain triangulation. In this representation contexts are represented by triangles. We will follow the more recent approach developed in \cite{okay2022simplicial} to represent nonsignaling distributions in a topological way. Given a context $C=\set{x,y,z}$ in $\cC$ we represent the distribution $p_C$ as in Fig. (\ref{fig:single-triangle}).
For a measurement $x$ we write $p_x^0$ for the probability of measuring outcome $0$. Similarly given a pair $x,y$ of measurements $p_{xy}^{ab}$ denotes the probability for the outcome assignment $(x,y)\mapsto (a,b)$.
\begin{figure}[h!]
\centering
\begin{subfigure}{.33\textwidth}
  \centering
  \includegraphics[width=.6\linewidth]{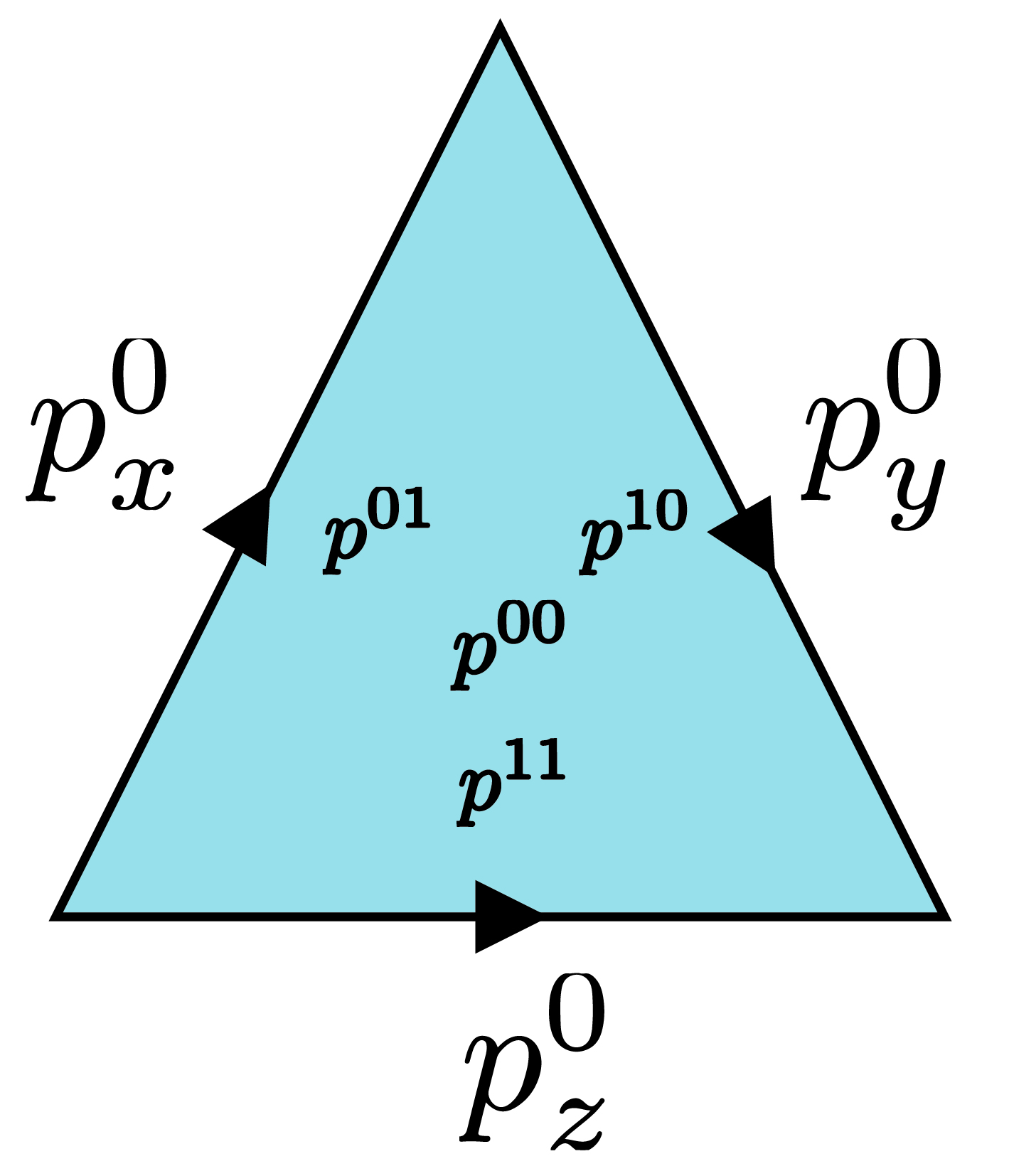}
  \caption{}
  \label{fig:triangle-beta0}
\end{subfigure}%
\begin{subfigure}{.33\textwidth}
  \centering
  \includegraphics[width=.6\linewidth]{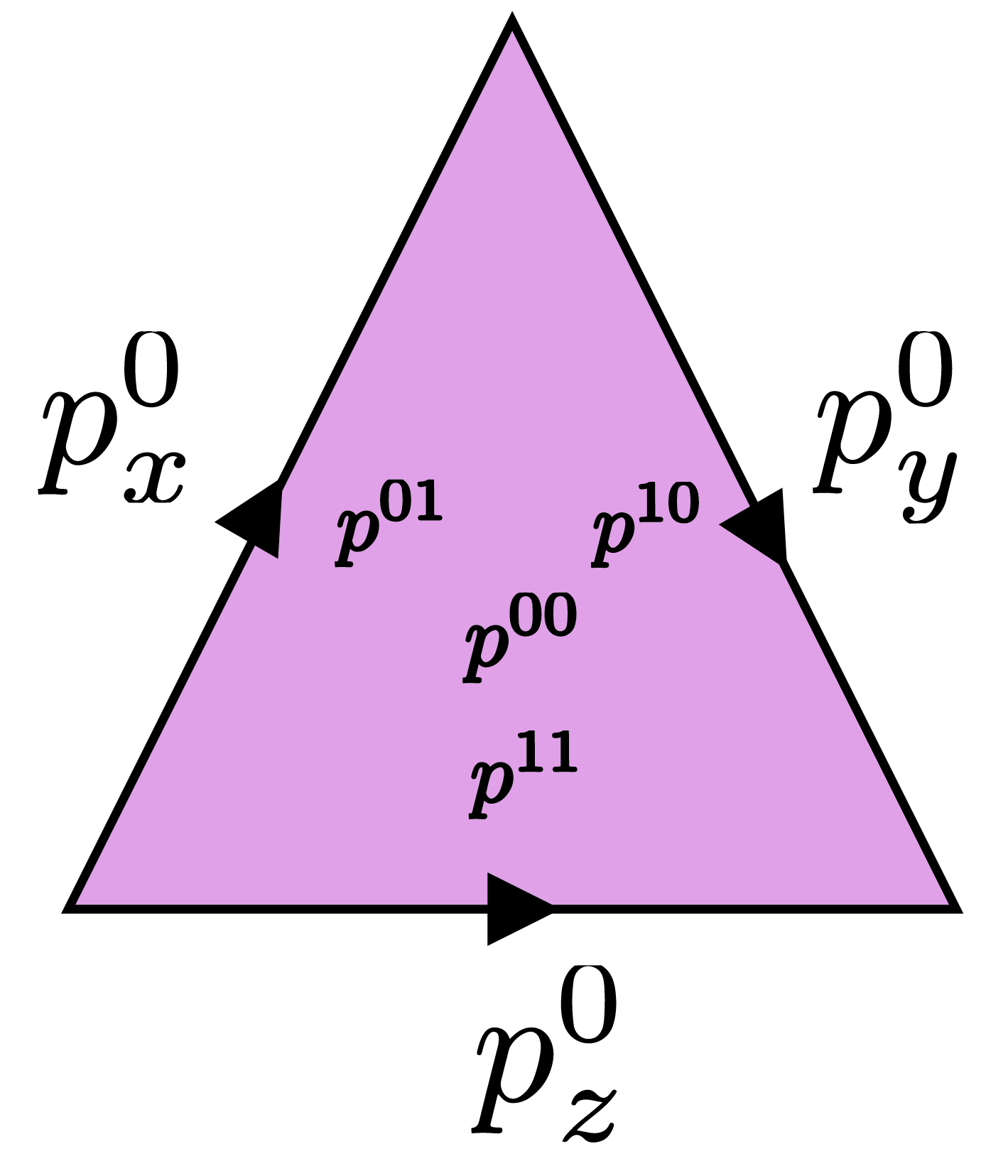}
  \caption{}
  \label{fig:triangle-beta1}
\end{subfigure}
\caption{(a) Triangle with $\beta=0$. (b) Triangle with $\beta=1$. The marginal at $z$ is given by Eq.~(\ref{eq:z-beta0}), or Eq.~(\ref{eq:z-beta1}); respectively.    
}
\label{fig:single-triangle}
\end{figure}
Given a triangle with a probability distribution as in Fig.~(\ref{fig:single-triangle}) the probabilities at the $x,y$ edges are given by
$$
\begin{aligned}
p_x^0 &= p^{01}+p^{00} \\
p_y^0 &= p^{10}+p^{00}.
\end{aligned}
$$
Fig.~(\ref{fig:triangle-beta0}) represents the case where $\beta=0$. In this case  
\begin{equation}\label{eq:z-beta0}
p_{z}^0 = p^{00}+p^{11},
\end{equation}
whereas if $\beta=1$ as in Fig.~(\ref{fig:triangle-beta1}) then 
\begin{equation}\label{eq:z-beta1}
p_{z}^0 = p^{01}+p^{10}.
\end{equation}
Therefore, in effect $z$ is the XOR measurement $x\oplus y$ in the first case, and the NOT of the XOR measurement $\overline{x\oplus y}$ in the second. 

In Fig.~(\ref{fig:mermin-scenario-beta}) Mermin scenario with various choices of $\beta$'s are represented on a torus.
In this framework, $\beta$   assigns $0$ or $1$ to each triangle, hence can be interpreted as a cochain from algebraic topology.
The value given by the sum in Eq.~(\ref{eq:beta-coho})
has a special meaning in this context known as the cohomology class of $\beta$.  In this paper we don't assume familiarity with cochains, or with other topological notions such as cohomology; see \cite{Coho} for more on the cohomological perspective.

\Pro{\label{pro:MPbeta-cohomology}
Given two functions $\beta,\beta':\cC\to \ZZ_2$ the Mermin polytope
$\MP_\beta$ is combinatorially isomorphic to $\MP_{\beta'}$ if and only if $[\beta]=[\beta']$.
}

Proof of this result is given in  Appendix \ref{sec:ProofPro}.
As a consequence there are two types of Mermin polytopes, up to combinatorial isomorphism, 
corresponding to the cases $[\beta]=0$ and $1$.

\subsection{The even case: $\MP_0$}\label{sec:mp0-case}

Let $\beta_0:\cC\to \ZZ_2$ denote the function defined by
\begin{equation}\label{eq:beta0}
\beta_0(C)=0,\;\;\; \forall C\in \cC.
\end{equation}
We will simply write $\MP_0$ to denote the
  Mermin polytope $\MP_{\beta_0}$. Note that this notation is justified by the observation that the isomorphism type of $\MP_\beta$ only depends on $[\beta]$ as proved in Proposition \ref{pro:MPbeta-cohomology}.
Our goal in this section is to relate this polytope to 
a famous bipartite Bell scenario, usually referred to as the CHSH scenario.

\begin{figure}[h!]
\centering
\begin{subfigure}{.49\textwidth}
  \centering
  \includegraphics[width=.6\linewidth]{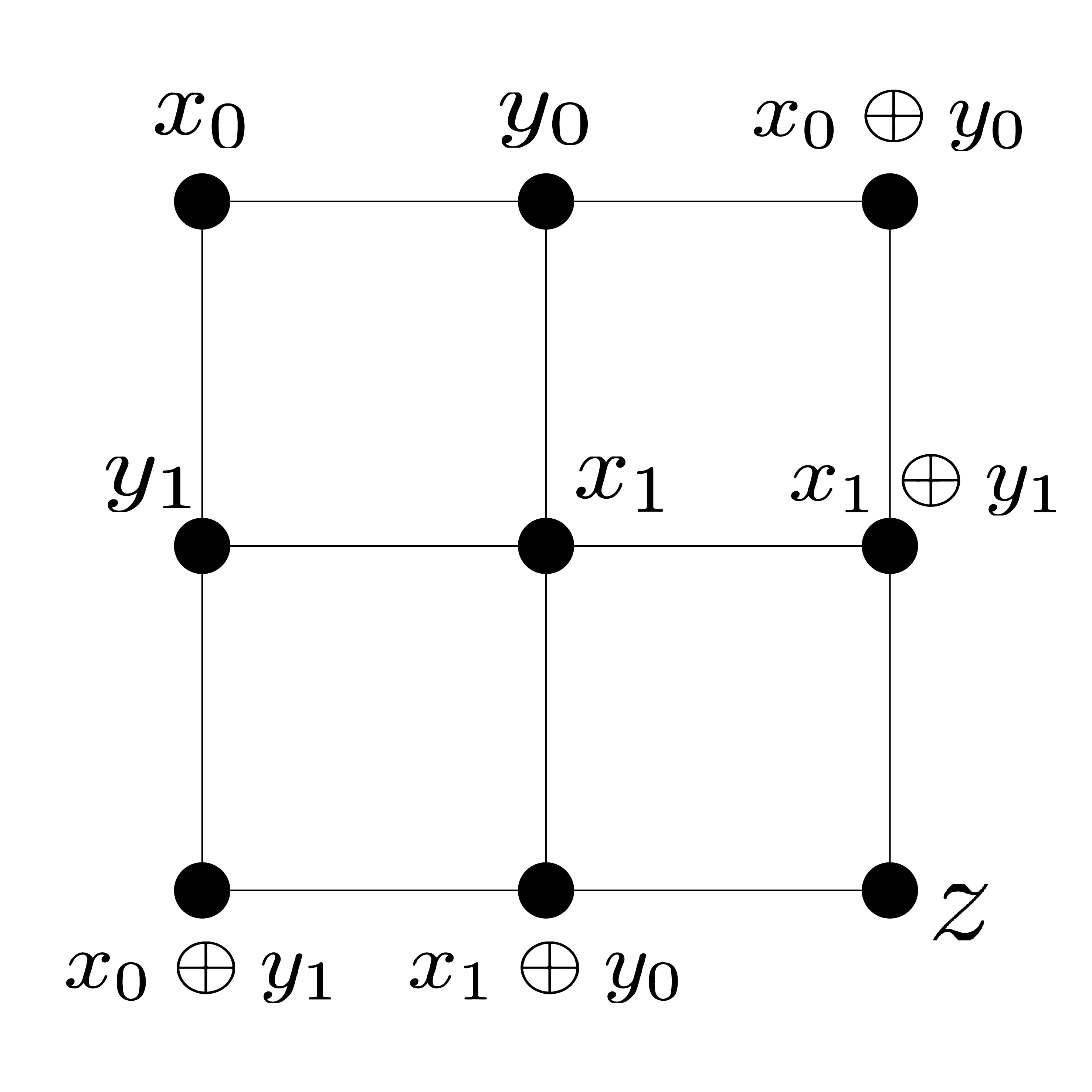}
  \caption{}
  \label{fig:mermin-ns-measurements}
\end{subfigure}%
\begin{subfigure}{.49\textwidth}
  \centering
  \includegraphics[width=.6\linewidth]{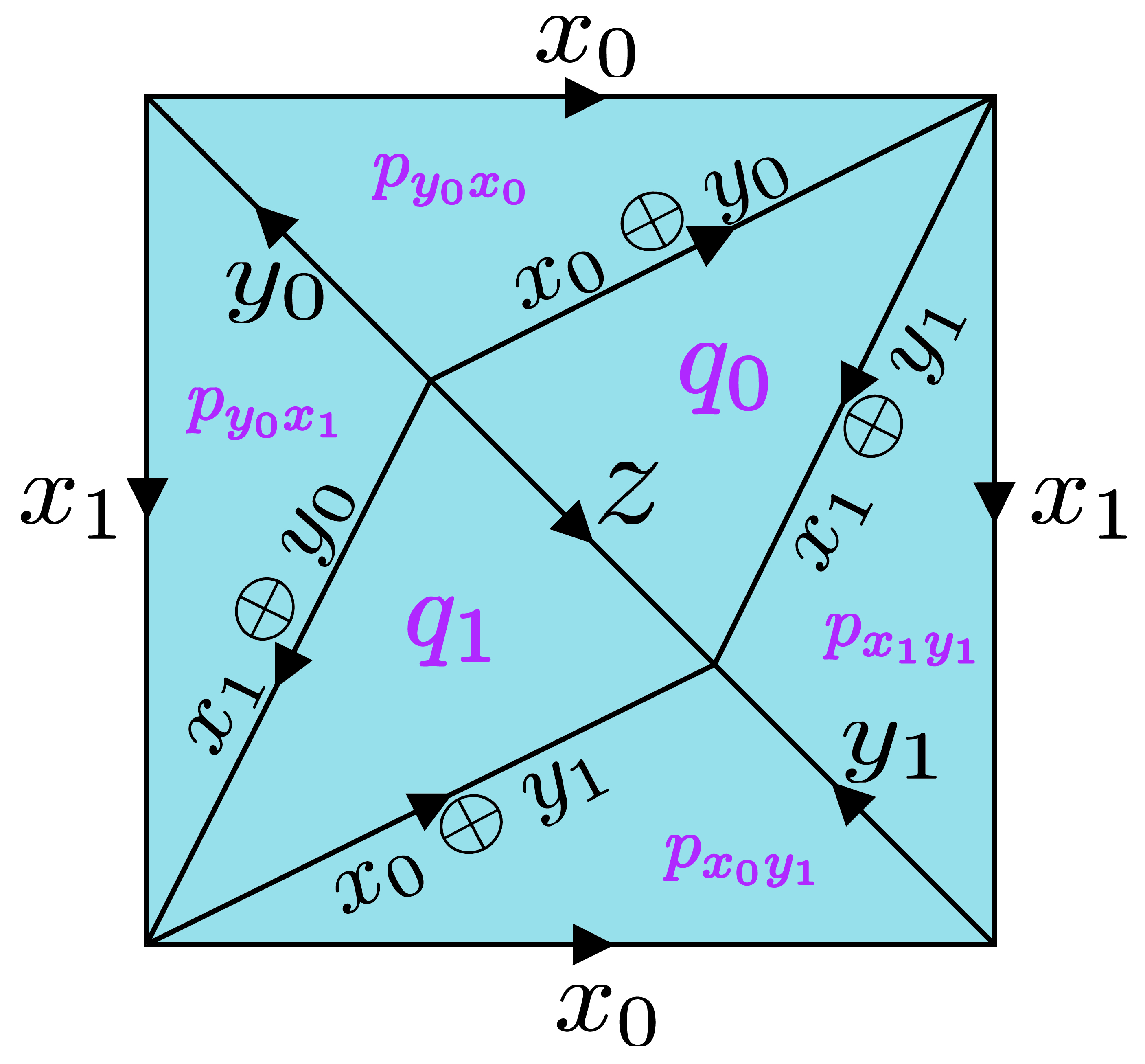}
  \caption{}
  \label{fig:mermin-ns}
\end{subfigure}
\caption{(a) Mermin scenario in the conventional representation. Vertices correspond to measurement labels. (b) Mermin scenario in the topological representation. Measurements label the edges.
}
\label{fig:ns}
\end{figure}

 The CHSH scenario is a particular type of Bell scenario for $2$ parties, $2$ measurements per party and $2$ outcomes per measurement. More precisely, this scenario consists of
\begin{itemize}
\item the measurement set $\set{x_i,y_j:\,i,j\in \ZZ_2}$ where $x_i$'s are for Alice and $y_j$'s are for Bob, and
\item  the contexts $\set{x_i,y_j}$ where $i,j\in \ZZ_2$.
\end{itemize}
Mermin scenario can be obtained from the CHSH scenario by adding two additional contexts $\set{x_0\oplus y_0,x_1\oplus y_1, z}$ and $\set{x_0\oplus y_1,x_1\oplus y_0,z}$, where $z=x_0\oplus y_0 \oplus x_1\oplus y_1$, consisting of the XOR's of the measurements of Alice and Bob; see Fig.~(\ref{fig:mermin-ns-measurements}). See Fig.~(\ref{fig:mermin-ns}) for a topological representation. For the convenience of the reader we list the nonsignaling conditions 
\begin{equation}\label{eq:Mermin-ns}
\begin{aligned}
p_{x_0}^0 &= p_{y_0x_0}^{10} + p_{y_0x_0}^{00} = p_{x_0y_1}^{01} + p_{x_0y_1}^{00}\\
p_{y_0}^0 &= p_{y_0x_0}^{01} + p_{y_0x_0}^{00} = p_{y_0x_1}^{01} + p_{y_0x_1}^{00}\\
p_{x_1}^0 &= p_{y_0x_1}^{10} + p_{y_0x_1}^{00} = p_{x_1y_1}^{01} + p_{x_1y_1}^{00}
 \\
p_{y_1}^0 &= p_{x_0y_1}^{10} + p_{x_0y_1}^{00} = p_{x_1y_1}^{10} + p_{x_1y_1}^{00}
 \\
p_{x_0\oplus y_0}^0  &= p_{y_0x_0}^{11} + p_{y_0x_0}^{00} = q_{0}^{01} + q_{0}^{00} \\
p_{x_1\oplus y_0}^0  &= p_{y_0x_1}^{11} + p_{y_0x_1}^{00} = q_{1}^{01} + q_{1}^{00} \\
p_{x_0\oplus y_1}^0  &= p_{x_0y_1}^{11} + p_{x_0y_1}^{00} = q_{1}^{10} + q_{1}^{00} \\
p_{x_1\oplus y_1}^0  &= p_{x_1y_1}^{11} + p_{x_1y_1}^{00} = q_{0}^{10} + q_{0}^{00} \\
p_{z}^0  &= q_{0}^{11} + q_{0}^{00} = q_{1}^{11} + q_{1}^{00}.
\end{aligned}
\end{equation}

\Pro{\label{pro:noncontextual-torus}
A distribution $p$ on the CHSH scenario is noncontextual if and only if it extends to a distribution on the Mermin scenario.
}
\Rem{{\rm
This result first appeared in \cite{okay2022simplicial}. Its proof relies on Fine's theorem characterizing noncontextual distributions using the CHSH inequalities. We will provide a proof of this result independent of Fine's theorem (see Proposition \ref{pro:simpcont-4.7}) by describing all the vertices of $\MP_0$. Then this observation will be used to provide a new topological proof of Fine's theorem.
}}

\begin{figure}[h!]
\centering
\includegraphics[width = .3\linewidth]{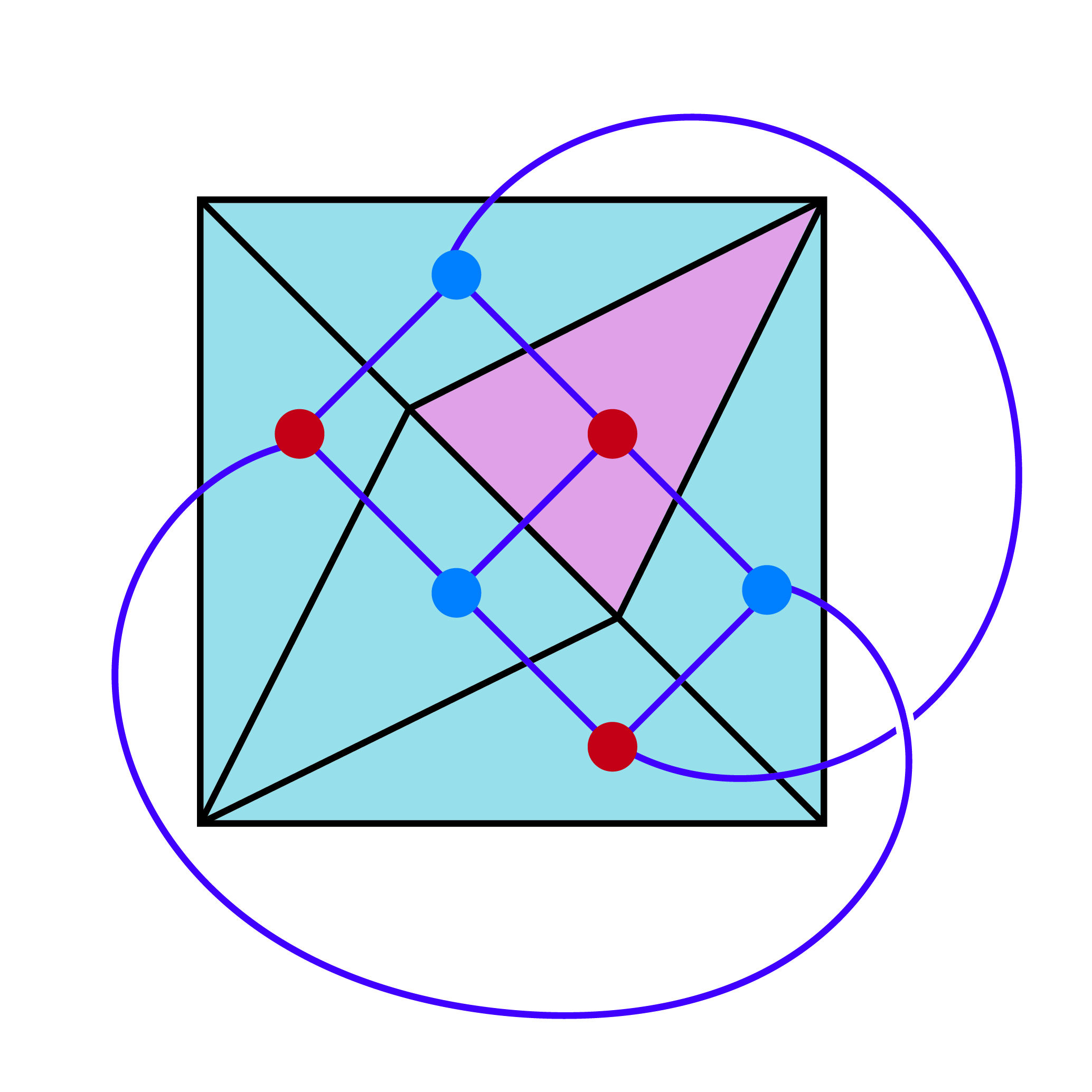}
\caption{Graph of $K_{3,3}$.
}
\label{fig:mermin-scenario-k33}
\end{figure}

Next we discuss the symmetries of $\MP_0$. 
For a polytope $P$ let $\Aut(P)$ denote the {\it group of combinatorial automorphisms} of the polytope. 
We begin by describing certain elements of this symmetry group. 
First we consider a graph obtained from the Mermin scenario. The vertices of this graph are given by the contexts, i.e., $\cC=\cC^\hor \sqcup \cC^\ver$, and the edges are given by the set $M$ of measurements. 
The resulting graph is the {\it bipartite complete graph} $K_{3,3}$; see Fig~(\ref{fig:mermin-scenario-k33}).   
The automorphism group $\Aut(K_{3,3})$ of this graph is generated by the following operations \cite{sreekumar2021automorphism}: 
\begin{enumerate}[(1)]
\item Permutation of the vertices in $\cC^\hor$ while keeping $\cC^\ver$ fixed.  
\item Permutation of the vertices in $\cC^\ver$ while keeping $\cC^\hor$ fixed.
\item The permutation exchanging 
$$
\begin{aligned}
C_1^\ver \leftrightarrow C_0^\hor\\
C_2^\ver \leftrightarrow C_2^\hor\\
C_0^\ver \leftrightarrow C_1^\hor
\end{aligned}
$$
\end{enumerate}
Denoting the symmetric group on $n$ letters by $\Sigma_n$ the symmetry group can be expressed as a semidirect product
$$
\Aut(K_{3,3}) = ( \Sigma_3 \times \Sigma_3  ) \rtimes \ZZ_2.
$$
Each factor represents a type of symmetry given in (1), (2) and (3); respectively. Geometrically the symmetry operation (3) corresponds to a reflection about the diagonal  in the torus; see Fig.~(\ref{fig:mermin-ns}).

\begin{figure}[h!]
\centering
\begin{subfigure}{.49\textwidth}
  \centering
  \includegraphics[width=.9\linewidth]{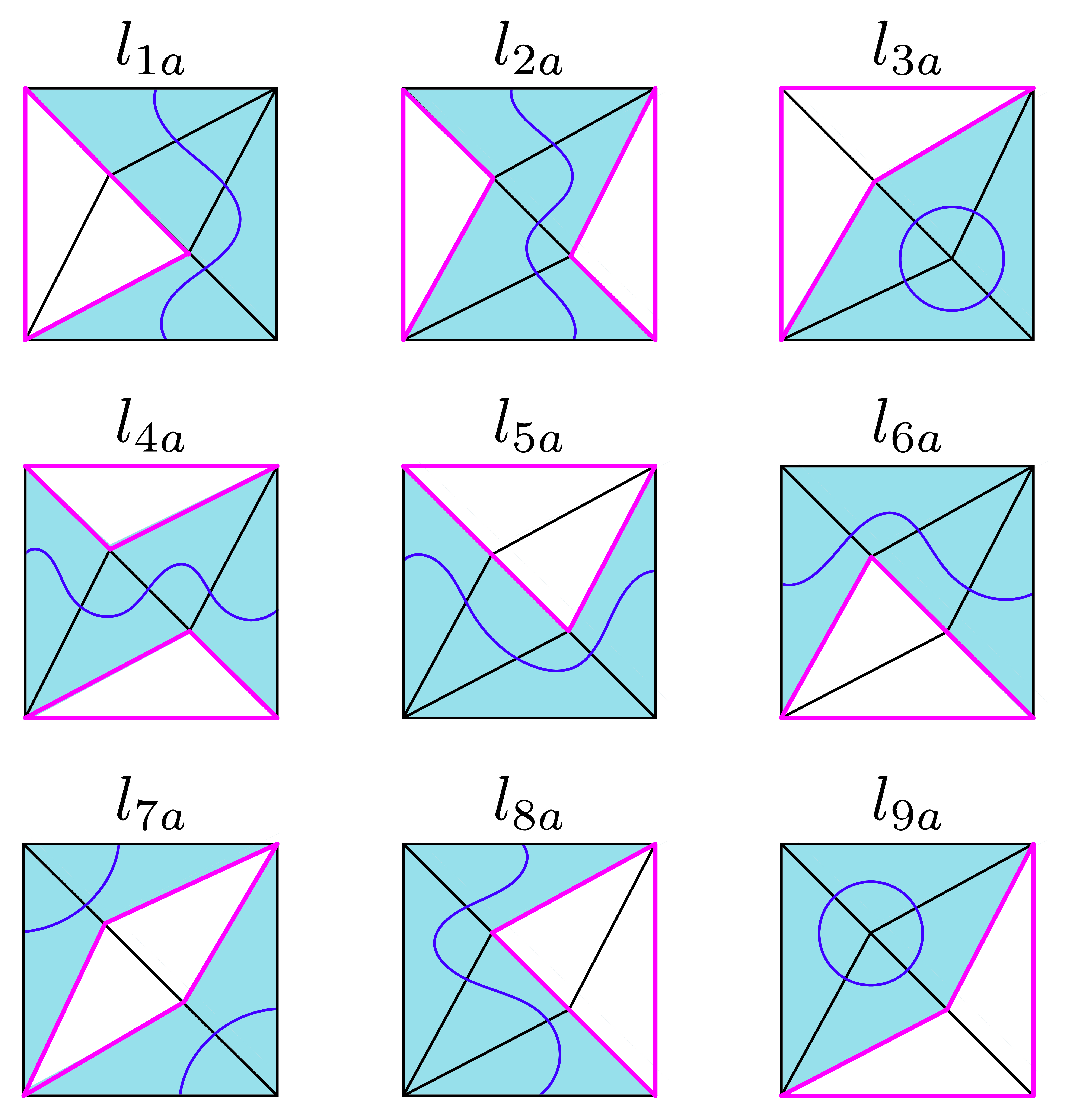}
  \caption{}
  \label{fig:mermin-torus-loops-a}
\end{subfigure}%
\begin{subfigure}{.43\textwidth}
  \centering
  \includegraphics[width=.64\linewidth]{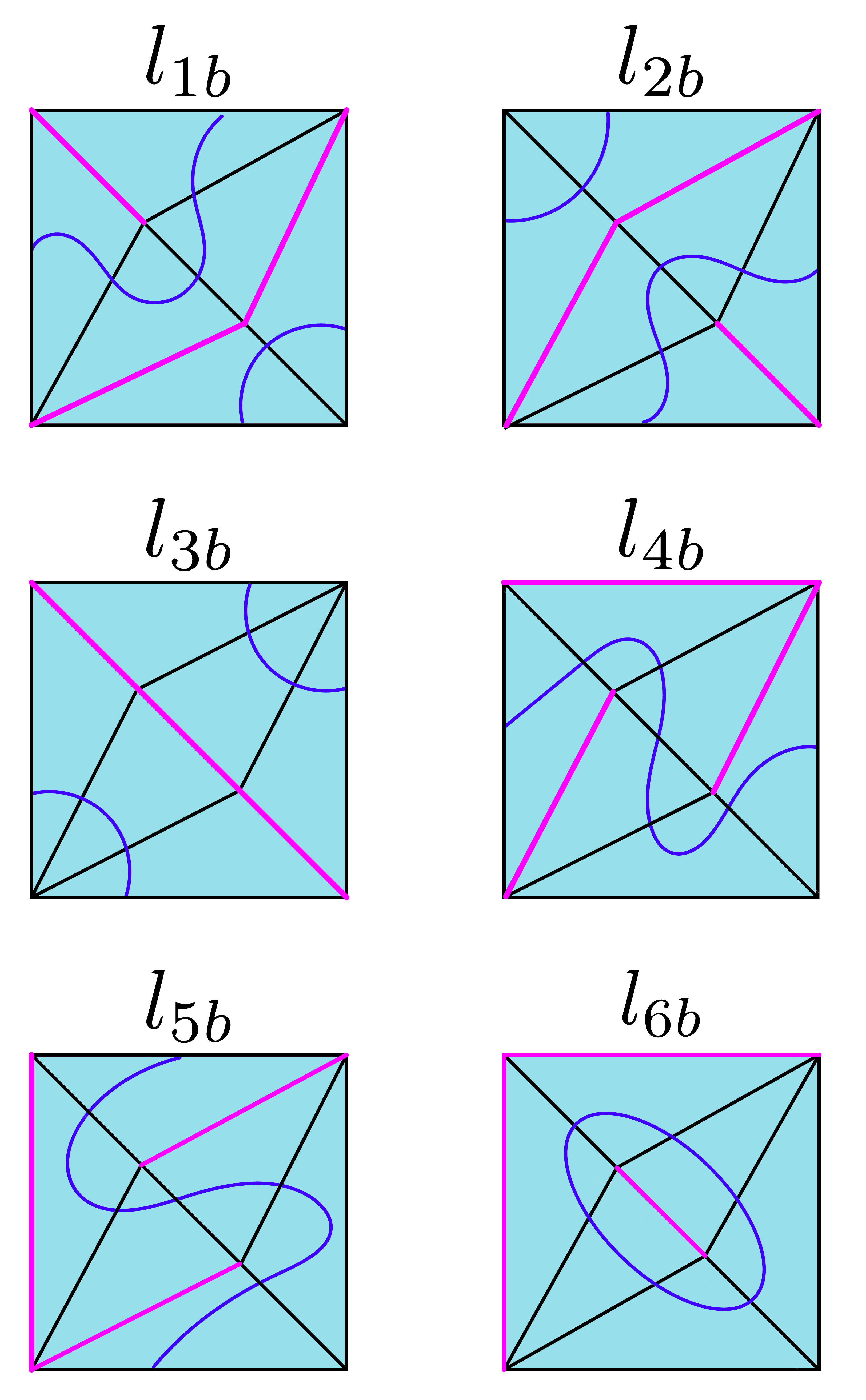}
  \caption{}
  \label{fig:mermin-torus-loops-b}
\end{subfigure}
\caption{There are two types of loops denoted by $l_{1a},l_{2b},\cdots,l_{9a}$ and $l_{1b},l_{2b},\cdots,l_{6b}$.   
}
\label{fig:mermin-torus-loops}
\end{figure}

Another kind of symmetry of $\MP_0$ comes from flipping the outcomes of the measurements in $M$. Each such symmetry operation can be represented by a loop in the graph $K_{3,3}$. Let $\ell(K_{3,3})$ denote the set of loops on the graph; see Fig.~(\ref{fig:mermin-torus-loops}) for the complete list of loops. For each such loop $l$ there is a group element $g_l$ that acts on $\MP_0$ by flipping the outcomes of the measurements that live on the loop. 
Let $G_\ell$ denote the subgroup of $\Aut(\MP_0)$ generated by the elements $g_l$ for $l\in \ell(K_{3,3})$.

\Lem{\label{lem:decomposition-loop}
$G_\ell$ is isomorphic to $\ZZ_2^4$ with the canonical generators given by the loops $l_{x_0}=l_{2a}$ (flipping $x_0$), $l_{x_1}=l_{4a}$ (flipping $x_1$), $l_{y_0}=l_{9a}$ (flipping  $y_0$) and $l_{y_1}=l_{3a}$ (flipping $y_1$).
}
\Proof{
Proof of this result follows from directly verifying that $g_l$ for each loop $l$ in Fig.~(\ref{fig:mermin-torus-loops}) can be decomposed into a product of these canonical generators. For example, $g_{l_{1a}} = g_{l_{x_0}} g_{l_{y_1}} $ and $g_{l_{1b}} = g_{l_{x_0}} g_{l_{y_1}} g_{l_{x_1}}$.
Similarly for the remaining loops.
}

We will write $G_0$ for the subgroup of  $\Aut(\MP_0)$ generated by the two subgroups $\Aut(K_{3,3})$ and $G_\ell$. This group can be expressed as an extension
\begin{equation}\label{eq:G0}
0\to G_\ell \to G_0 \to \Aut(K_{3,3}) \to 1,
\end{equation}
that is $G_\ell$ is a normal subgroup of $G_0$ and the quotient group $G_0/G_\ell$ is given by $\Aut(K_{3,3})$.



\subsection{The odd case: $\MP_1$} \label{sec:MP1}
Let $\beta_1:\cC\to \ZZ_2$ denote the function defined by
\begin{equation}\label{eq:beta1}
\beta_1(C)=0,\;\;\; \forall C\in \cC^\hor\;\; \text{ and }
\;\;
\beta_1(C)=1,\;\;\; \forall C\in \cC^\ver.
\end{equation}
We will write $\MP_1$ for the 
the Mermin polytope $\MP_{\beta_1}$.
(This notation is justified by Proposition \ref{pro:MPbeta-cohomology}.)
Our goal in this section is to provide a quantum mechanical description of  $\MP_1$. Using this description we will study the symmetries of the polytope.

\begin{figure}[h!] 
  \centering
  \includegraphics[width=.3\linewidth]{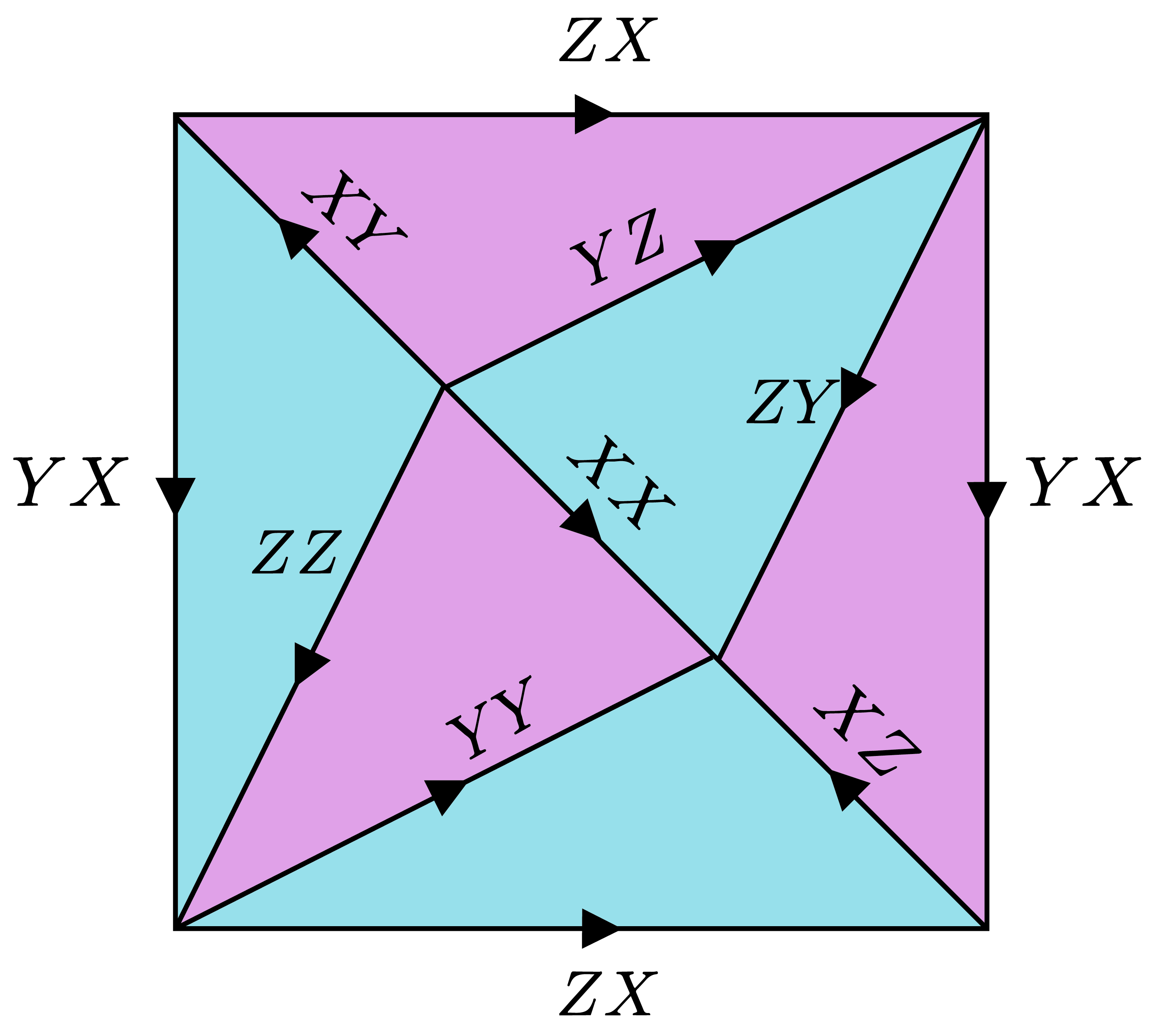}
\caption{ 
Mermin square whose edges are labeled by nonlocal Pauli operators. The tensor product is omitted from the notation.
}
\label{fig:mermin-nonlocal}
\end{figure}

The connection to quantum theory is via the notion of binary linear systems.
A {\it quantum solution} to the Mermin square binary linear system $(M,\cC,\beta)$   consists of unitary operators $A_{ij}\in U(\CC^d)$ where $i,j\in \ZZ_3$ such that
\begin{itemize}
\item $A_{ij}^2=\one$ for all $i,j\in \ZZ_3$,
\item $\set{A_{ij}:\,i\in \ZZ_3}$ and $\set{A_{ij}:\,j\in \ZZ_3}$ consist of pairwise commuting unitaries,
\item $A_{i0}A_{i1}A_{i2}=(-1)^{\beta(C_i^\hor)}$ and  $A_{0j}A_{1j}A_{2j}=(-1)^{\beta(C_j^\ver)}$ for all $i,j\in \ZZ_3$.
\end{itemize} 
A quantum solution over $U(\CC)$ is called a {\it classical solution}. It is known that a classical solution exists if and only if $[\beta]=0$. This can   either be proved directly by an argument similar to Mermin's proof of contextuality \cite{mermin1993hidden}, or by   cohomological arguments \cite{Coho}. Nonexistence of a classical solution is an indication of quantum contextuality in the sense of Kochen--Specker. As in the case of Mermin's proof, quantum solutions can come from Pauli operators.
The {\it $n$-qubit Pauli operators}\footnote{We only consider the ones whose eigenvalues are $\pm 1$.} are given by
$$
A= A_1\otimes A_2\otimes \cdots \otimes A_n,\;\;\; A_i\in \set{ \one, X,  Y,  Z},
$$
where $\one,X,Y,Z$ are the $2\times 2$ Pauli matrices. The {\it Pauli group}, denoted by $P_n$, consists of operators of the form $i^{\alpha} A$ where $\alpha \in \ZZ_4$.

We partition the set of $2$-qubit Pauli operators into local and nonlocal parts:  
\begin{itemize}
\item Local $2$-qubit Pauli operators
$$
\set{X\otimes \one, Y\otimes \one,Z\otimes \one, \one\otimes X, \one\otimes Y,\one\otimes Z}.
$$
\item Nonlocal $2$-qubit Pauli operators
$$
\set{X\otimes X, X\otimes Y,X\otimes Z,Y\otimes X,Y\otimes Y,Y\otimes Z,Z\otimes X,Z\otimes Y,Z\otimes Z}.
$$
\end{itemize}
For $\beta_1$ defined as in Eq.~(\ref{eq:beta1})
 nonlocal   Pauli operators constitute a quantum solution; see Fig.~(\ref{fig:mermin-nonlocal}). 
For a pair $A,B$ of distinct and commuting Pauli operators let $\Pi_{AB}^{ab}$ denote the projector onto the simultaneous eigenspace corresponding to the eigenvalues $(-1)^a$ and $(-1)^b$ of $A$ and $B$; respectively. More concretely, we have $\Pi_{AB}^{ab} = (\one + (-1)^a A+(-1)^b B + (-1)^{a+b+\beta}AB)/4$.
These projectors constitute the set $\sS_2$ of $2$-qubit stabilizer states. 
There is a corresponding local vs nonlocal decomposition:
\begin{equation}\label{eq:stabilizer-loc-nloc}
\sS_2 = \sS_2^\loc \sqcup \sS_2^\nloc
\end{equation}
where 
\begin{itemize}
\item  $\sS_2^\loc$   consists of projectors $\Pi_{AB}^{ab}$ where $A,B$ are local Pauli operators. 
\item  $\sS_2^\nloc$  consists of projectors $\Pi_{AB}^{ab}$ where $A,B$ are nonlocal Pauli operators. 
\end{itemize}


\Lem{\label{lem:M1-quantum}
$\MP_1$ can be identified with the set of Hermitian operators $\rho\in \Herm((\CC^2)^{\otimes n})$ of trace $1$ such that $\Tr(\rho A)= 0$  for all   local Pauli's $A$ and $\Tr(\rho\Pi_{AB}^{ab})\geq 0$ for all pairwise commuting nonlocal Pauli operators $A,B$ and $a,b\in \ZZ_2$.
} 
\Proof{
Let $Q$ denote the set of operators $\rho$ described in the statement. The Born rule gives a map $p:Q\to \MP_1$ sending $\rho \mapsto p_\rho$. If we know $p_\rho^{ab}$ for all $a,b\in \ZZ_2$ then we can compute the expectation $\Span{A}_\rho$ for any nonlocal Pauli. Since by assumption $\Span{B}_\rho=0$ for every local Pauli $B$ this way we can determine $\rho$. In other words, we can define a map $e:\MP_1 \to Q$ by sending a distribution $d$ to the operator  
$$
\rho_d = \frac{1}{4} \left(  \one + \sum_{A} \alpha_A A \right)
$$  
where $A$ runs over nonlocal Pauli's and $\alpha_A$ is the expectation obtained from $d$. Then $e$ is the inverse of $p$. Therefore $p$ is a bijection.
}

Lemma \ref{lem:M1-quantum} provides a quantum mechanical description of $\MP_1$. In particular, some of the symmetries of $\MP_1$ come from quantum mechanics, that is, by conjugation with a Clifford unitary. 
The {\it $n$-qubit Clifford group} $\Cl_n$ is the quotient of 
the normalizer of $P_n$, the  group of unitaries $U\in U((\CC^{2})^{\otimes n})$ such that $U A U^\dagger \in P_n$ for all $A\in P_n$,
by the central subgroup $\set{e^{i\theta} \one:\, 0\leq \theta< 2\pi}$.  
Acting by the elements of $\Cl_1$ on each qubit 
preserves the set of nonlocal Pauli's. 
By Lemma \ref{lem:M1-quantum} this group acts on  the polytope $\MP_1$. 
Additionally, the $\SWAP$ gate that permutes the parties is also a symmetry of the polytope. Let us define the following subgroup of $\Cl_2$:
\begin{equation}\label{eq:G1}
G_1=\Span{\Cl_1\times \Cl_1,\SWAP}.
\end{equation}
As we observed this is also a subgroup of $\Aut(\MP_1)$. Next we will express $G_1$ as an extension similar to the one for $G_0$ given in Eq.~(\ref{eq:G0}). First recall that $\Cl_1$ has two parts: the Pauli part isomorphic to $\ZZ^2_2$ generated by conjugation with $X$ and $Z$, and the symplectic part $\Sp_2(\ZZ_2)$. 
The latter group is isomorphic to $\Sigma_3$ since in the single qubit case the symplectic action is determined by the permutation of the subgroups $\Span{-\one, X}$, $\Span{-\one, Y}$, $\Span{-\one, Z}$.  
We can express this decomposition as an exact sequence
\begin{equation}\label{eq:G1-exact}
0\to \ZZ_2^4 \to G_1 \to  \Aut(K_{3,3}) \to 1.
\end{equation}
The quotient is given by $\Aut(K_{3,3})$ since $G_1$ (up to signs) permutes the set of contexts which in return induces as action on the graph $K_{3,3}$. By comparing sizes we conclude that the quotient group is the whole automorphism group of the graph.

\Pro{\label{pro:G0-G1}
There is an isomorphism of groups $\phi:G_1\to G_0$.
}

\noindent The proof can be found in Appendix~\ref{sec:proof-g0-g1}.

\section{Vertices of the Mermin polytopes}\label{sec:mermin-vertices}

The description of Mermin polytopes is most naturally given in terms of the intersection of a finite number of half-spaces, or $H$-representation. However, by the Minkowski-Weyl theorem \cite{ziegler2012lectures} there is an equivalent representation of a polytope in terms of the convex hull of a finite number of vertices, called the $V$-representation. The problem of switching from the $H$ to the $V$-description is called the vertex enumeration problem; see e.g., \cite{chvatal1983linear}. Here we do precisely this and enumerate the vertices of $MP_{\beta}$, using the rich structure of these polytopes to aid in this task.
 
\subsection{Closed noncontextual subsets}

We recall some definitions  from \cite{raussendorf2020phase}.

\Def{\label{def:Omega}
A subset $\Omega \subset M$ is called {\it closed}  if $M\cap C\neq \emptyset$ implies $C\subset M$.
An {\it outcome assignment} on a closed subset $\Omega$ is a function $s:\Omega \to \ZZ_2$ such that
$$
s(c) = s(a) + s(b) +\beta(a,b)
$$
for all $C=\set{a,b,c}\subset \Omega$.
A closed subset $\Omega$ is called {\it noncontextual} if it admits an outcome assignment. We call subsets $\Omega$ obeying both of these properties closed noncontextual, or cnc sets. And as noted in \cite{raussendorf2020phase}, it suffices to consider just the maximal cnc sets, which we do from here forth.
}


\begin{figure}[h!]
\centering
\begin{subfigure}{\textwidth}
  \centering
  \includegraphics[width=.4\linewidth]{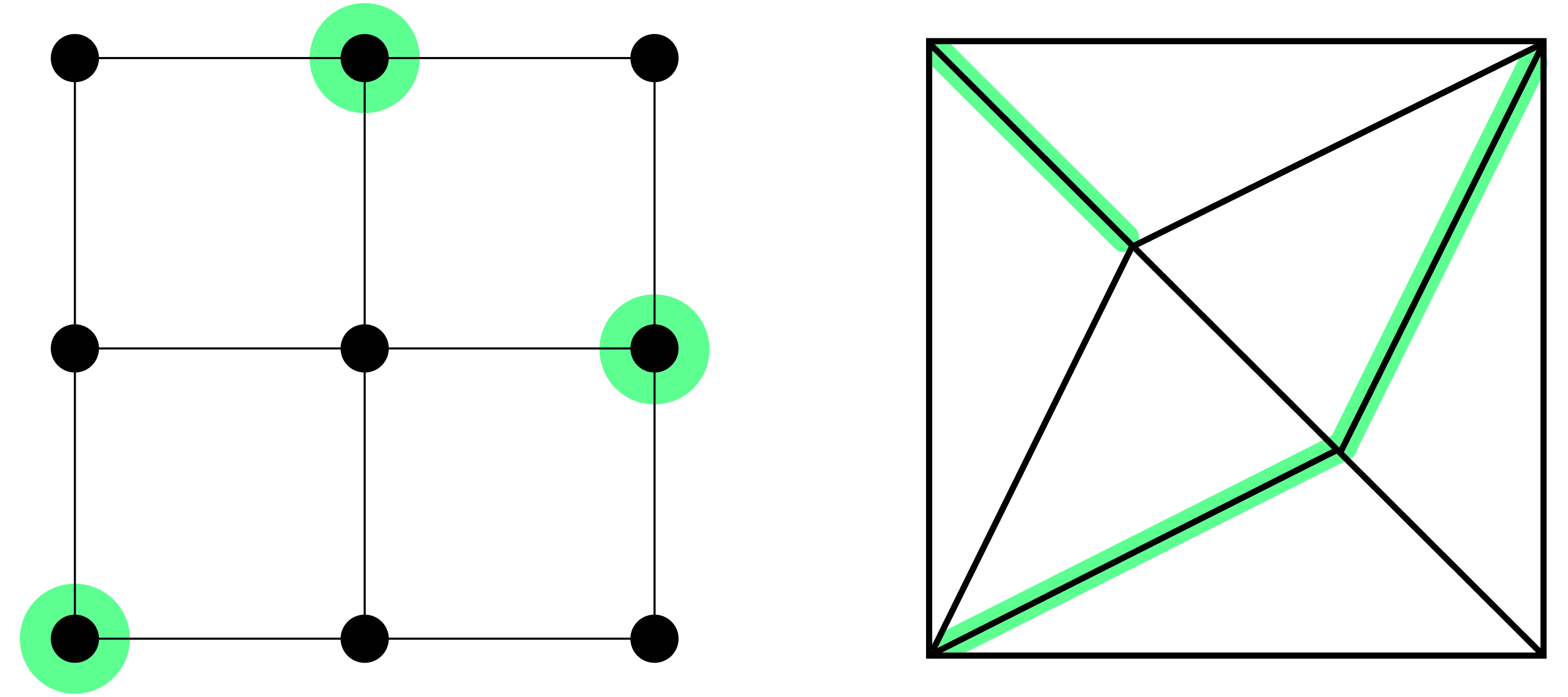}
  \caption{Type 1 cnc set: All edges correspond to anti-commuting operators, thus never in the same triangle.}
  \label{fig:cnc-type1}
\end{subfigure}\\%
\begin{subfigure}{\textwidth}
  \centering
  \includegraphics[width=.4\linewidth]{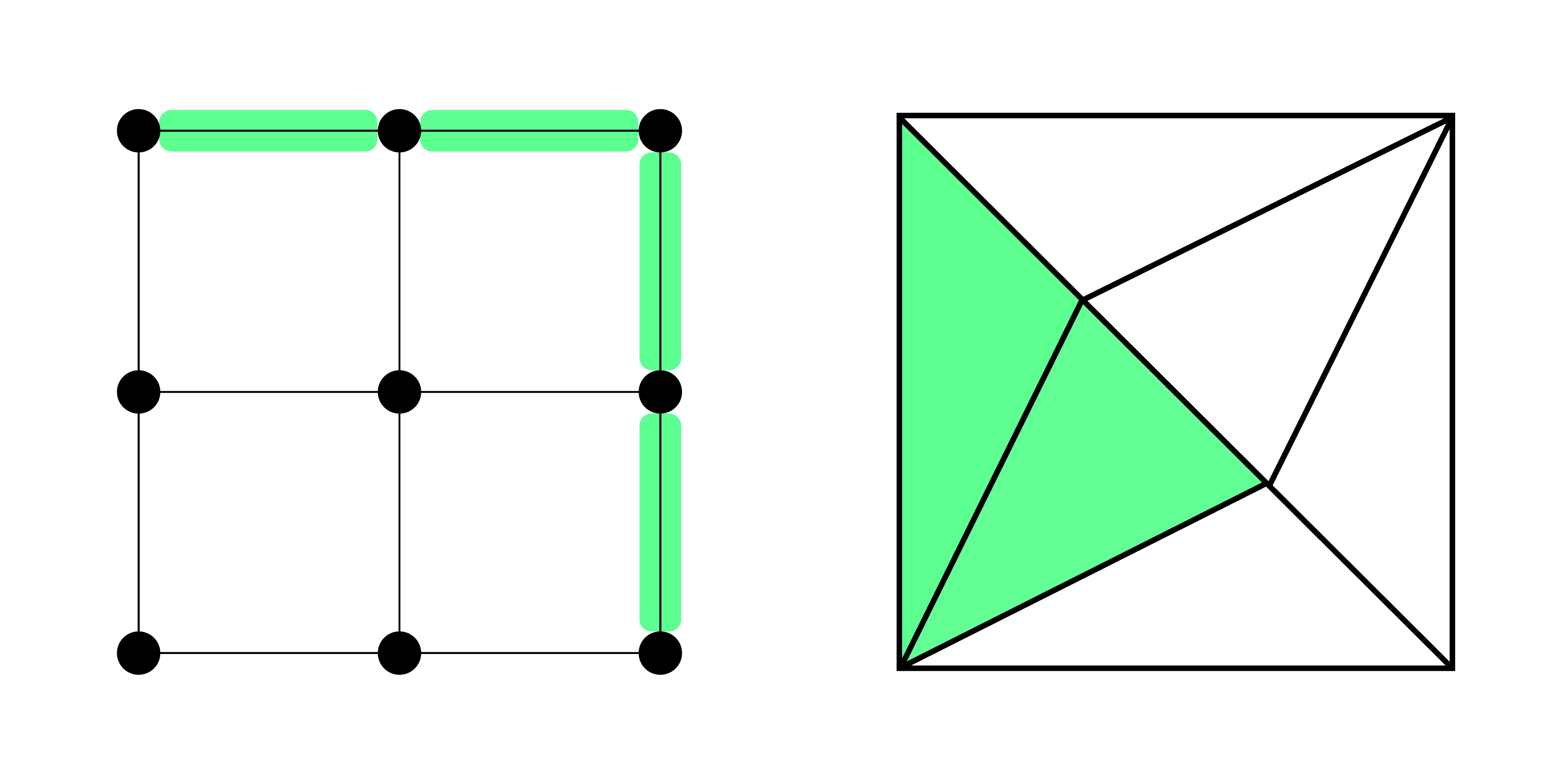}
  \caption{Type 2 cnc set: Two adjacent triangles.  }
  \label{fig:cnc-type2}
\end{subfigure}
\caption{
}
\label{fig:cnc}
\end{figure}

As we observed in the description of $\MP_1$ (see Section \ref{sec:MP1}) contexts of the Mermin scenario can be realized by the commutation relation of nonlocal Pauli operators. Next we make this connection more precise.
Recall that the measurements $m_{ij}\in M$ are labeled by pairs $(i,j)\in \ZZ_3^2$. 
We consider a map $\iota:\ZZ_3\to \ZZ_2^2-\set{0}$ defined as follows:
$$
\iota(0) = (0,1),\;\;\; \iota(1)=(1,1),\;\;\; \iota(2)=(1,0).
$$
The corresponding Pauli operators are $T_{(0,0)}=\one$, $T_{(0,1)}=X$, $T_{(1,1)}=Y$ and $T_{(1,0)}=Z$. 
On the other hand, $2$-qubit Pauli operators can be labeled by $(v,w)\in E=\ZZ_2^2\times \ZZ_2^2$, which corresponds to $T_v\otimes T_w$.
This way we obtain an embedding
\begin{equation}\label{eq:embedding-M-E}
M\to E,\;\;\; m_{ij}\mapsto (\iota(i),\iota(j)). 
\end{equation}
Throughout we will use this identification. Given $(v,w)$ and $(v',w')$ there is a symplectic form
$$
[(v,w),(v',w')] = v\cdot w' + v'\cdot w \mod 2.
$$
We say that such a pair {\it commutes} if $[(v,w),(v',w')]=0$; otherwise we say that they {\it anticommute}.
A subspace $I\subset E$ is called {\it isotropic} if each pair of elements in this subspace commute.
Observe that contexts in $\cC$ are precisely the maximal isotropic subspaces of $M$.

\Lem{\label{lem:Omega-structure}
The structure of maximal closed noncontextual subsets of $M$ with respect to $\beta_1$ is given as follows:
\begin{itemize}
\item {Type $1$: the subset $\Omega$ consists of three distinct pairwise anticommuting elements; i.e., none lie within the same context. We have $6$ such sets $\Omega$ and  an outcome assignment $s:\Omega \to \ZZ_2$ is  a function; see Fig.~(\ref{fig:cnc-type1}).}
\item  Type $2$: the subset $\Omega$ is a union of {two distinct contexts with a single measurement $m \in M$ lying on their (nonempty) intersection, and hence consists of $5$ elements total. There are $9$ such subsets $\Omega$, one for each $m\in M$. Additionally, there are $3$ elements that generate the set and an outcome assignment $s:\Omega \to \ZZ_2$ is determined by a function on these $3$ generators; see Fig.~(\ref{fig:cnc-type2}).}
\end{itemize}
}
\Proof{
{This result follows from Raussendorf, et al. \cite{raussendorf2020phase}, specifically Lemma 3. The case of the Mermin square is treated in more detail in Example 2. Quoting their results, there are $48 = 2^{3}\times 6$ Type 1 cnc sets and $72 = 2^{3} \times 9$ Type 2 cnc sets.}
}


\begin{figure}[h!]
\centering
\begin{subfigure}{.49\textwidth}
  \centering
  \includegraphics[width=.6\linewidth]{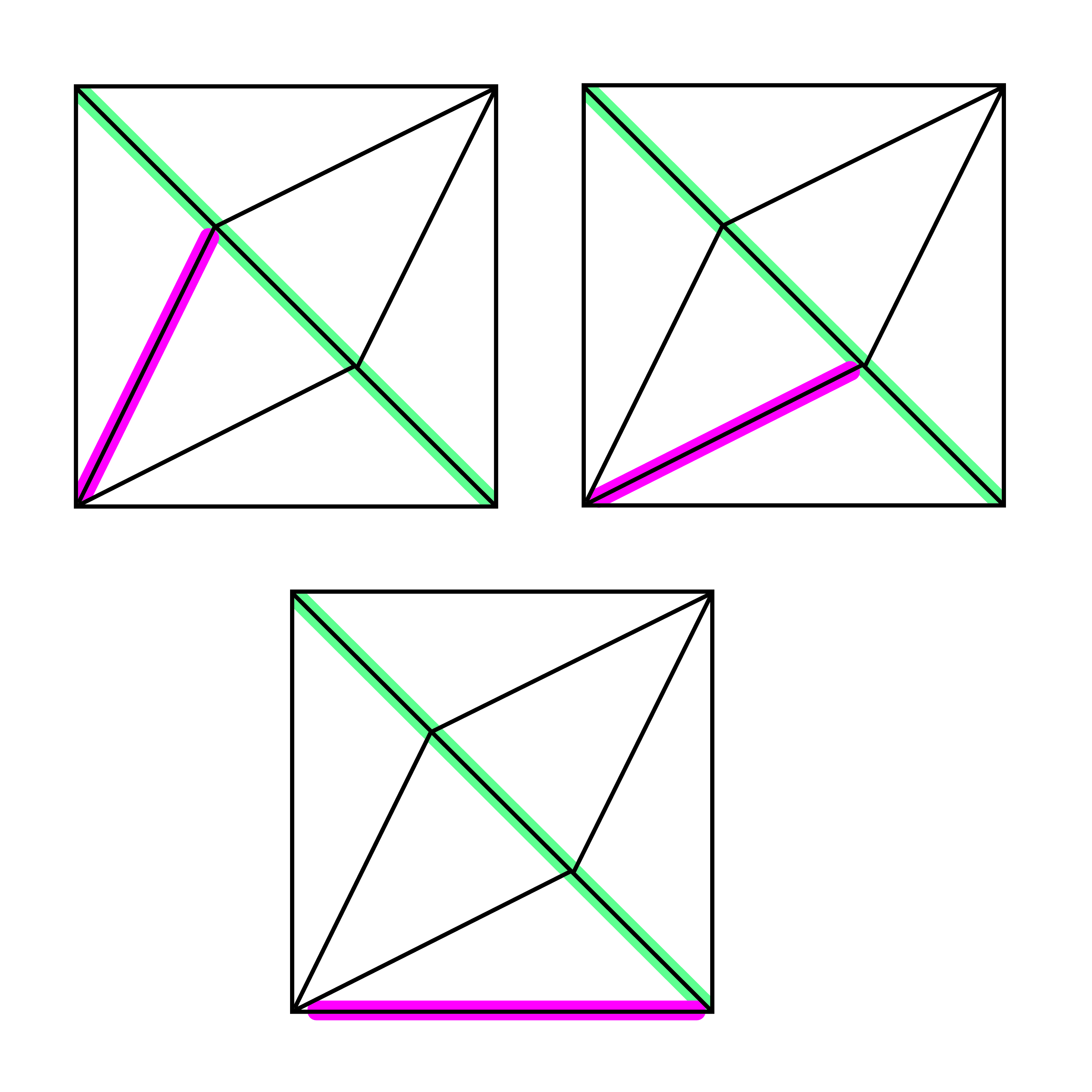}
  \caption{}
  \label{fig:cnc1-transitive}
\end{subfigure}%
\begin{subfigure}{.49\textwidth}
  \centering
  \includegraphics[width=.6\linewidth]{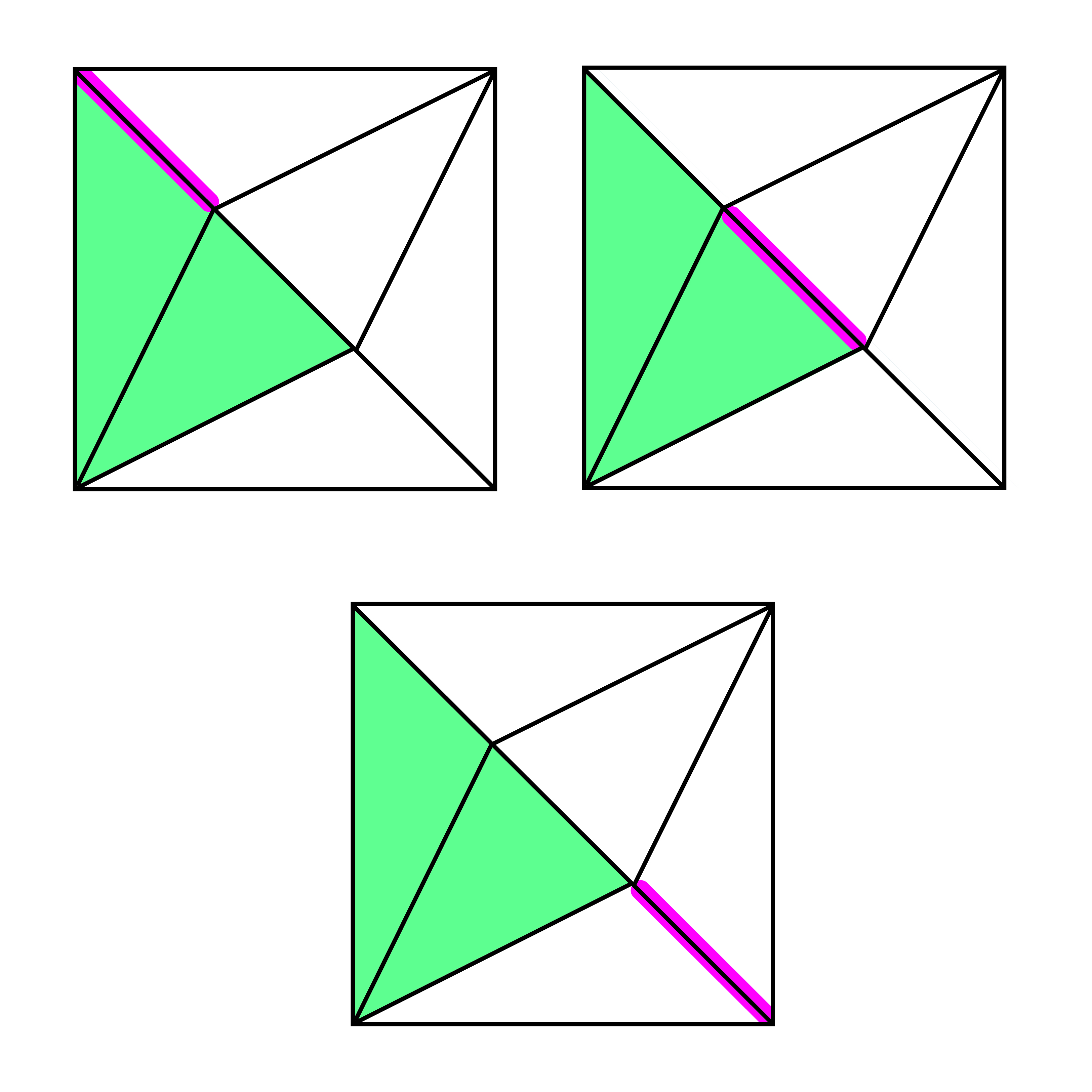}
  \caption{}
  \label{fig:cnc2-transitive}
\end{subfigure}
\caption{Type $1$ and $2$ cnc sets indicated by green color and the nonlocal Pauli's that flip only one outcome indicated by red color. 
}
\label{fig:cnc-transitive}
\end{figure}

\Lem{\label{lem:g1-transitive-vert}
$G_1$ acts transitively on the set
$$
\set{(\Omega,s):\, \Omega \text{ is a maximal cnc set of type }k\text{, and }s:\Omega\to \ZZ_2\text{ is an outcome assignment}}
$$
where 
$k$ is either $1$, or $2$.
}
\Proof{
We begin with type $1$ cnc sets. First let us ignore the outcome assignment. A type $1$ cnc set is specified by three pairwise anticommuting nonlocal Pauli operators. There are $6$ of these sets, three are of the form $\set{A\otimes X, A\otimes Y,A\otimes Z}$ and the remaining three are of the form $\set{X\otimes A, Y\otimes A,Z\otimes A}$, where $A=X,Y,Z$. To move from one such cnc set to another one one can use a local Clifford unitary and the SWAP gate if needed. Now including the outcome assignments, for a fixed cnc set $\Omega$ we can move from one outcome assignment to the other by flipping the signs of the outcomes   by conjugating with a nonlocal Pauli operator that commutes with two of them, but anticommutes with the remaining one; see Fig.~(\ref{fig:cnc1-transitive}).

In the case of type $2$ cnc sets we can move between the $\Omega$ sets since the $\Aut(K_{3,3})$ quotient of $G_1$ acts transitively on the edges of the torus (or the dual $K_{3,3}$ graph) \cite[Sec. 3.2]{godsil2001algebraic}. To move between the outcome assignments on a fixed type $2$ cnc set $\Omega$ we can conjugate with a nonlocal Pauli; see Fig.~(\ref{fig:cnc2-transitive}).  
}

\Cor{\label{cor:action-maximal}
$\Aut(K_{3,3})$ acts transitively on the set of maximal cnc sets of a fixed type.
}
\Proof{
As we observed in the proof of Lemma \ref{lem:g1-transitive-vert}  the $G_1$ action factors through the action of the quotient group $\Aut(K_{3,3})$ when the outcome assignments are ignores. 
}

\subsection{Vertex classification}
 
In the rest of this section we will prove the following result. Recall the embedding $M\subset E$ given in Eq.~(\ref{eq:embedding-M-E}).
 
\Thm{\label{thm:VertexClassification}
There is a bijection between the set of vertices of $\MP_\beta$ and the set of functions $s:\Omega\to \ZZ_2$ satisfying the following properties: 
\begin{enumerate}[(i)]
\item For $\MP_0$ the subset $\Omega=M$ and the functions are group homomorphisms $s:M\to \ZZ_2$. Each such function is given by specifying 
$s(v_i,w_j)\in \ZZ_2$ for $v_i,w_j\in \set{(0,1),(1,1)}$. 
In particular, there are $16$ vertices. 

\item For $\MP_1$ the subset $\Omega$ is a maximal closed noncontextual subset (one of the two types in Lemma \ref{lem:Omega-structure}) and $s:\Omega \to \ZZ_2$ is an outcome assignment.
There are two types of vertices: 
\begin{itemize}
\item Type $1$: When $\Omega$ is of type $1$. In particular, there are $48$ vertices of this type.
\item Type $2$: When $\Omega$ is of type $2$. In particular, there are $72$ vertices of this type.
\end{itemize}   
\end{enumerate}
Given a function $s:\Omega\to \ZZ_2$ as in (i) or (ii)  the corresponding vertex $p\in \MP_\beta$ is uniquely determined by 
{
\begin{eqnarray}\label{eq:prob-from-marginal}
p_{m}^{0}=%
\left\lbrace
\begin{array}{ll}
\frac{1+ (-1)^{s(m)}}{2}                   & \forall m\in \Omega\\
0                   & {\rm\text{otherwise}}
\end{array}
\right.~.%
\end{eqnarray}
}
}

We begin with some recollections from polytope theory; see \cite{chvatal1983linear, ziegler2012lectures}. Let $P(A,b)=\set{x:\, Ax \geq b}$ denote a polytope where $A\in \RR^{n\times m}$ and $b\in \RR^{n}$.  
Assume that $P(A,b)\subset \RR^m$ is full dimensional. 
Let us establish some terminology. If an inequality is satisfied with equality then we call that inequality \emph{tight}. For a point $p \in P(A,b)$ we refer to the \emph{active set} at $p$ as a subset $\mathcal{Z}_{p}\subset \{1,\cdots,n\}$ which indexes the set of tight inequalities at $p$. 
A 
point $p$
is a vertex of $P(A,b)$ if and only if there exists a subset of tight inequalities $Z \subseteq \mathcal{Z}_{p}$ with $|Z|=m$ such that
$$
v = A[Z]^{-1} b,
$$ 
where $A[Z]$ is the matrix obtained from $A$ by removing all the rows whose index is not in $Z$. Note that $|Z|\leq |\mathcal{Z}_{p}|$.

Let us apply these observations to Mermin polytopes $\MP_\beta$. We can express $\MP_\beta$ in the form $P(A,b)$. Let us
write $x\circ y$ to mean the XOR measurement $x\oplus y$ if $\beta(x,y)=0$, or the NOT of the XOR measurement $\overline{x\oplus y}$ if $\beta(x,y)=1$.

\Pro{\label{pro:P-MP}
The Mermin polytope $\MP_\beta$ has a description in the form $P(A,b)$ where $b=-\one_{9\times 1}$, $A \in \RR^{24\times 9}$ is a matrix whose rows are labeled by the set
$$
S=\set{(C,ab):\, C\in \cC,\; a,b\in \ZZ_2},
$$
columns labeled by $M$ (once both sets are ordered) and for $C=\set{x,y,x\circ y}$ its entries are given by
\begin{eqnarray}
A_{(C,ab),m} =
\left\lbrace
\begin{array}{ll}
(-1)^{a}                    & m=x\\
(-1)^{b}                    & m=y\\
(-1)^{a+b+\beta(C)}      & m=x\circ y\\
0                    & \text{otherwise.}
\end{array}
\right.%
\label{eq:row-A-matrix}
\end{eqnarray}
}

{\noindent Before we proceed to proving Proposition \ref{pro:P-MP}, let us prove the following useful lemma:
\Lem{\label{lem:dist-by-edge}
The distribution $p_{C}^{ab}$ in a single triangle (see Fig.~(\ref{fig:single-triangle})) with edges labeled $ \{x,y,z\}$ and outcomes $a,b,c \in \ZZ_2$, respectively, with $c = a+b+\beta_{C}$, is uniquely determined by the marginals along the edges $\left \{p_{\{x\}}^{a},p_{\{y\}}^{b},p_{\{z\}}^{c}\right \}$ according to
\begin{eqnarray}
p_{C}^{ab} = \frac{1}{2}\left(p_{\{x\}}^{a} + p_{\{y\}}^{b} - p_{\{z\}}^{c+1}   \right )~.%
\label{eq:dist-by-edge}
\end{eqnarray}
}

\Proof{%
First note that we have:
\begin{eqnarray}
p_{\{x\}}^{0} &=& p_{C}^{00}+p_{C}^{01}\label{eq:marginal-px}\\
p_{\{y\}}^{0} &=& p_{C}^{00}+p_{C}^{10}\label{eq:marginal-py}\\
p_{\{z\}}^{\beta_{C}} &=& p_{C}^{00}+p_{C}^{11}\label{eq:marginal-pz}
\end{eqnarray}
from which the normalization condition $\sum_{ab}p_{C}^{ab} = 1$ becomes:
\begin{eqnarray}
p_{C}^{00}+ p_{C}^{01}+ p_{C}^{10}+ p_{C}^{11} &=& p_{\{x\}}^{0} + p_{\{y\}}^{0} + p_{\{z\}}^{\beta_{C}} - 2p_{C}^{00} = 1~.\notag
\end{eqnarray}
From this we can obtain:
\begin{eqnarray}
p_{C}^{00} &=& \frac{1}{2} \left ( p_{\{x\}}^{0} + p_{\{y\}}^{0} + p_{\{z\}}^{\beta_{C}} -1  \right )\notag\\
&=& \frac{1}{2} \left ( p_{\{x\}}^{0} + p_{\{y\}}^{0} - p_{\{z\}}^{\beta_{C}+1}  \right )~,%
\label{eq:p00-edge}
\end{eqnarray}
where in the second line we used $p_{\{z\}}^{c+1} = 1 -  p_{\{z\}}^{c}$. Equation (\ref{eq:dist-by-edge}) then follows by inserting Eq.~(\ref{eq:p00-edge}) into Eqns.~(\ref{eq:marginal-px})-(\ref{eq:marginal-pz}) and solving for the remaining $p_{C}^{ab}$.
}

\Rem{%
{\rm Notice that if two contexts $C$ and $C'$ intersect on an edge $m = C\cap C^{\prime}$ then distributions $p_{C}^{ab}$, $p_{C^\prime}^{ab}$ represented as in Eq.~(\ref{eq:dist-by-edge}) will automatically satisfy the nonsignaling conditions if one and the same marginal $p_{\{m\}}^{a}$ is used in both.}
}

\noindent The $24$ probabilities $p_{C}^{ab}$ can therefore be uniquely expressed by the marginal probabilities $p_{\{m\}}^{a}$, where $m \in M$. In particular, any $p_{C}^{ab}$ can be expressed by just the $0$-outcome marginals $ p_{\{m\}}^{0}$ since their complement is given by $p_{\{m\}}^{1} = 1 - p_{\{m\}}^{0}$. These nine marginal probabilities therefore serve as a system of coordinates for $\MP_{\beta}$, which can be embedded in $\RR^{9}$.\\

\noindent We now introduce a new set of coordinates in terms of the \emph{expectation values} of the measurement outcomes, denoted $\bar{m}$. The two are related by an affine transformation. 
The expectation value of a measurement $m\in M$ is then given by $$\bar{m} := \sum_{a} (-1)^{a}p_{\{m\}}^{a} = p_{\{m\}}^{0} - p_{\{m\}}^{1}.$$ Using that $p^{1}_{\{m\}} = 1 - p_{\{m\}}^{0}$ and solving for $p_{\{m\}}^{0}$ we obtain the desired relationship
\begin{eqnarray}
p_{\{m\}}^{0} = \frac{1}{2}\left ( 1+ \bar{m}\right ).\label{eq:expectation-marginal}
\end{eqnarray}\\


\begin{proof}[Proof of Proposition~\ref{pro:P-MP}]
{Note that $\MP_{\beta}$ is defined as the intersection of the half-space inequalities $p_{C}^{ab} \geq 0$ intersected by the affine subspace generated by the nonsignaling conditions and normalization. By Lemma~\ref{lem:dist-by-edge} this is equivalent to requiring the nonnegativity of Eq.~(\ref{eq:dist-by-edge}), for every $C\in\mathcal{C}$ and $a,b\in\ZZ_{2}$. Plugging in Eq.~(\ref{eq:expectation-marginal}) for $p_{\{m\}}^{0}$ in terms of $\bar{m}$ gives us the expression
\begin{eqnarray}\label{eq:pCab}
p_{C}^{ab} = \frac{1}{4}\left (1 + (-1)^{a}\bar{x} + (-1)^{b}\bar{y} + (-1)^{a+b+\beta}\bar{z}  \right ).
\end{eqnarray}
Requiring nonnegativity and rearranging yields
\begin{eqnarray}
A_{(C,ab)}^{T} \, x \geq -1
\end{eqnarray}
where $A_{(C,ab)}$ is given as in Eq.~(\ref{eq:row-A-matrix}) and $x \in \RR^{9}$ has components $\bar{m}$, where $m\in M$. The $24$ inequalities defining the polytope  can now be compactly expressed as $A \, x \geq b $, which concludes the proof.
}
\end{proof}
}




Let $Z \subset S$ be a subset of indices such that $|Z| = 9$. For each $C\in \cC$ let us write $n(C)=|Z\cap \set{(C,ab):\, a,b\in \ZZ_2}|$. The numbers $n(C)$ satisfy the following properties:
\begin{itemize}
\item $\sum_{C\in \cC} n(C)=9$ since $|Z|=9$.
\item $0\leq n(C)\leq 3$ since $\sum_{a,b\in \ZZ_2}p_C^{ab}=1$ for each context.
\end{itemize} 
Our case classification will be in terms of the following numbers:
$$
n_k = |\set{C\in \cC:\, n(C)=k} | 
$$
Table (\ref{tab:zero-cases}) displays all the cases that can occur. These cases will be denoted by $(n_3,n_2)$.\\
 
 \begin{table}[h!] 
\centering
{\renewcommand{\arraystretch}{1.5}%
\begin{tabular}{|c  c  c |}
\hline
$n_3$ & $n_2$ & $n_1 $ \\
\hline
$3$ & $0$ & $0$ \\
\hdashline
$2$ & $1$ & $1$ \\
$2$ & $0$ & $3$ \\
\hdashline
$1$ & $3$ & $0$ \\
$1$ & $2$ & $2$ \\
$1$ & $1$ & $4$ \\
\hdashline
$0$ & $3$ & $3$ \\
\hline%
 \end{tabular}}%
\caption{Each row displays the number of contexts with the indicated number of zeros. These are the triples $(n_3,n_2,n_1)$ satisfying $0\leq n_i\leq 3$ and $3n_3+2n_2+n_1=9$.  
}\label{tab:zero-cases}
\end{table}

\noindent A triangle  representing a context $C$ is called a {\it deterministic triangle} if $p_C$ is a deterministic distribution.
An edge labeled by a measurement $x\in C$ is called a {\it deterministic edge} if $p_C|_{\set{x}}$ is a deterministic distribution.

\Lem{\label{lem:two-edges}
A triangle $C$ with two deterministic edges is deterministic.
$$
\centering
  \includegraphics[width=.4\linewidth]{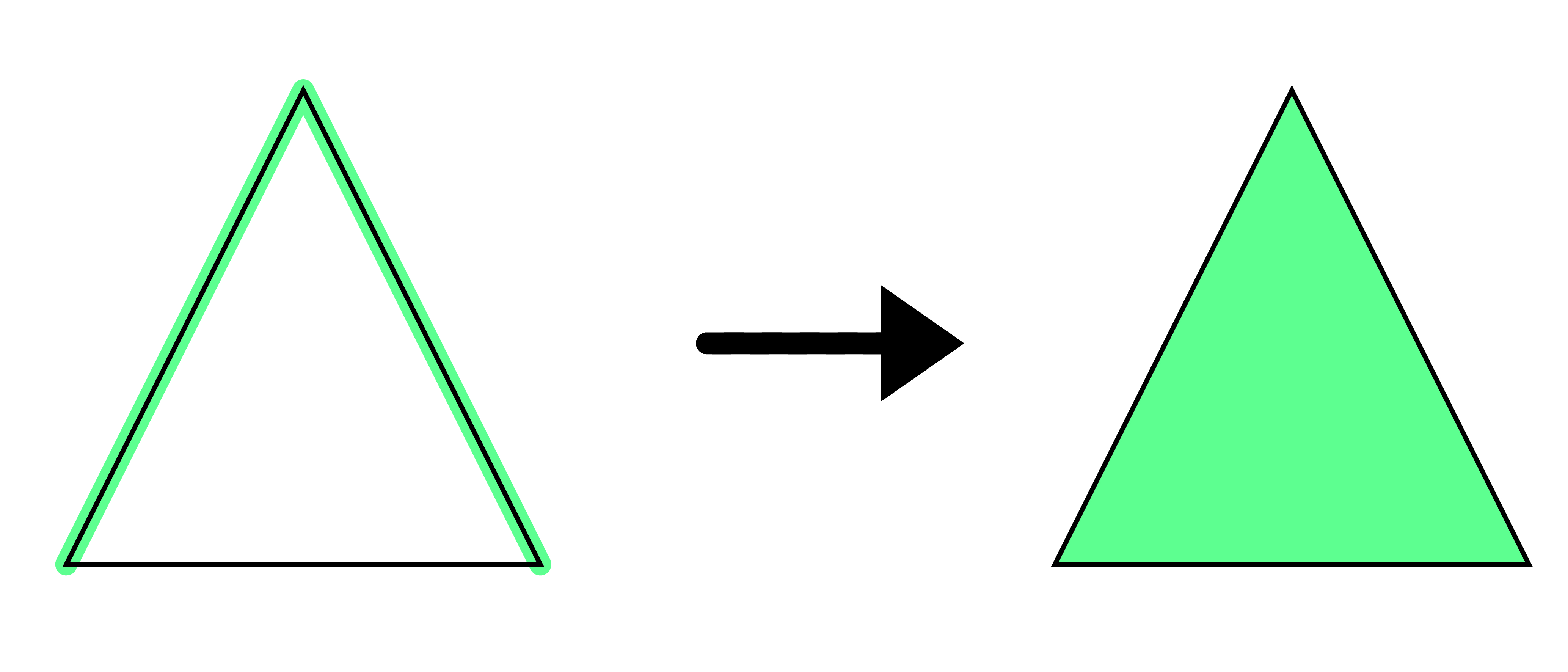}
$$
}  
\Proof{We can assume $\beta=0$ on the triangle, the case $\beta=1$ is treated similarly.
Let $p_C=\set{p_C^{ab}}_{a,b}$ be a distribution on the triangle where $C=\set{x,y,x\oplus y}$. 
Assume that $p|_{x}=\delta^a_x$ and $p|_{y}=\delta^b$ for some $a,b\in \ZZ_2$. This implies
$$
p^{\bar a0}+p^{\bar a1}=p^{0\bar b}+p^{1\bar b} =0,
$$
where $\bar{a} = a +1$. In every case three of the four probabilities are zero giving us a deterministic distribution. Other cases where $x,x\oplus y$ and $y,x\oplus y$ are deterministic are also treated similarly
}

\Lem{\label{lem:det-triangle-rank}
Let $p$ be a distribution on a single triangle $C$.
Let $Z_{C} = \{A_{(C,ab)}:\,a,b\in \ZZ_{2}\}$ be the set of tight inequalities.
Then $rank(A[Z_{C}]) = |Z_C|$.  
}
\Proof{
We can assume $\beta=0$, the case $\beta=1$ is similar.
Assume that $Z_C$ is nonempty, otherwise the rank is zero.
Let us write $C=\set{x,y,x\oplus y}$.
We will use the symmetry group $G_C$ generated by flipping the outcomes of $x,y$, which is isomorphic to $\ZZ_2^2$.
By Lemma \ref{lem:two-edges} there are two cases (up to $G_C$ action):
\begin{itemize}
\item Single deterministic edge: $p|_{\set{x}}=\delta^0$. In this case
\begin{eqnarray}
\rank(A) =
\rank
\begin{bmatrix}
1 & -1 & -1 \\
1 & 1  & 1
\end{bmatrix} =2,
\end{eqnarray}
where a row corresponds to outcomes of $(x,y,x\oplus y)$ in this order.
\item Deterministic triangle: $p=\delta^{00}$. In this case
\begin{eqnarray}
\rank(A) =
\rank
\begin{bmatrix}
1 & -1 & -1 \\
-1 & 1  & -1 \\
-1  & -1 & 1
\end{bmatrix} =3.
\end{eqnarray}
\end{itemize}
}

Next, we consider distributions on the {\it diamond} $D$, which is obtained by gluing two triangles, $C$ and $C'$, along a common edge. We will denote the common edge by $z$. See Fig.~(\ref{fig:rank-diamond}).

\Lem{\label{lem:rank-two-triangles}
Let $p$ be a distribution on the diamond $D$.  
Let $Z_{C}$, $Z_{C'}$ be the set of tight inequalities in triangles $C$, $C'$; respectively. Define $Z_{C,C'} = Z_{C}\cup Z_{C'}$.
Then
$$
\rank(A[Z_{C,C'}]) = 
\left\lbrace
\arraycolsep=3.5pt\def\arraystretch{1.5}
\begin{array}{ll}
|Z_{C,C'}|-1 &  z\text{ is deterministic,}\\
|Z_{C,C'}|  & \text{otherwise.} 
\end{array}
\right.
$$
}
\Proof{
We assume $\beta=0$ on both triangles, the case $\beta=1$ on one of the triangles is treated similarly.
Assume that $Z_{C,C'}$ is nonempty, otherwise the rank is zero.
We will also use symmetries to reduce the number of cases.
Let us write $C=\set{x,z,x\oplus z}$ and $C'=\set{x',z,x' \oplus z}$ for the contexts.
Let $G_D$ denote the symmetry group generated by flipping the outcomes of $x,z,x'$, which is isomorphic to $\ZZ_2^3$.
First, let us assume that $z$ is not deterministic. 
The cases where either $Z_C$ or $Z_{C'}$ are empty can be deduced from Lemma \ref{lem:det-triangle-rank}.
By Lemma \ref{lem:two-edges} the remaining cases are as follows (up to $G_D$ action):
\begin{itemize}
\item $|Z_C|=|Z_{C'}|=1$ with $p_C^{00}=p_{C'}^{00}=0$. In this case
\begin{eqnarray}
\rank(A[Z_{C,C'}]) =
\rank
\begin{bmatrix}
1 & 1 & 1 & 0 & 0 \\
0 & 0 & 1 & 1 & 1
\end{bmatrix} =2,
\end{eqnarray}
where a row corresponds to outcomes of $(x, x\oplus z, z, x',x'\oplus z)$ in this order.
\item $|Z_C|=2$, $|Z_{C'}|=1$ with $p_C|_{\set{x}}=\delta^0$ and $p_{C'}^{00}=0$. In this case
\begin{eqnarray}
\rank(A[Z_{C,C'}]) =
\rank
\begin{bmatrix}
1 & 1 & 1 & 0 & 0 \\
1 & -1 & -1 & 0 & 0 \\
0 & 0 & 1 & 1 & 1 \\
\end{bmatrix} =3.
\end{eqnarray}
The case $|Z_C|=1$, $|Z_{C'}|=2$ is similar. 
\item $|Z_C|=2$, $|Z_{C'}|=2$ with $p_C|_{\set{x}}=p_{C'}|_{\set{x'}}=\delta^0$. In this case
\begin{eqnarray}
\rank(A[Z_{C,C'}]) =
\rank
\begin{bmatrix}
1 & 1 & 1 & 0 & 0 \\
1 & -1 & -1 & 0 & 0 \\
0 & 0 & 1 & 1 & 1 \\
0 & 0 & -1 & 1 & -1
\end{bmatrix} =4. 
 \end{eqnarray}
\noindent
Next, we consider the case where $z$ is deterministic. Again up to the action of $G_D$  we have the following cases:
\item $|Z_C|=2$, $|Z_{C'}|=2$ with $p_C|_{\set{z}}=p_{C'}|_{\set{z}}=\delta^0$. In this case
\begin{eqnarray}
\rank(A[Z_{C,C'}]) =
\rank
\begin{bmatrix}
1 & 1 & 1 & 0 & 0 \\
-1 & -1 & 1 & 0 & 0 \\
0 & 0 & 1 & 1 & 1 \\
0 & 0 & 1 & -1 & -1
\end{bmatrix} =3.%
\label{eq:rank-one-det}
 \end{eqnarray}
\item $|Z_C|=3$, $|Z_{C'}|=2$ with $p_C=\delta^{00}$ and
$p_{C'}|_{\set{z}}=\delta^0$. In this case
\begin{eqnarray}
\rank(A[Z_{C,C'}]) =
\rank
\begin{bmatrix}
1 & -1 & -1 & 0 & 0 \\
-1 & 1 & -1 & 0 & 0 \\
-1 & -1 & 1 & 0 & 0 \\
0 & 0 & 1 & 1 & 1 \\
0 & 0 & 1 & -1 & -1
\end{bmatrix} =4. 
 \end{eqnarray}
The case $|Z_C|=2$, $|Z_{C'}|=3$ is similar. 
\item $|Z_C|=3$, $|Z_{C'}|=3$ with $p_C=p_{C'}=\delta^{00}$. In this case
\begin{eqnarray}
\rank(A[Z_{C,C'}]) =
\rank
\begin{bmatrix}
1 & -1 & -1 & 0 & 0 \\
-1 & 1 & -1 & 0 & 0 \\
-1 & -1 & 1 & 0 & 0 \\
0 & 0 & -1 & 1 & -1 \\
0 & 0 & 1 & -1 & -1 \\
0 & 0 & -1 & -1 & 1 
\end{bmatrix} =5. 
 \end{eqnarray}
\end{itemize}
}

\begin{figure}[h!]
\centering
\begin{subfigure}{.49\textwidth}
  \centering
  \includegraphics[width=.3\linewidth]{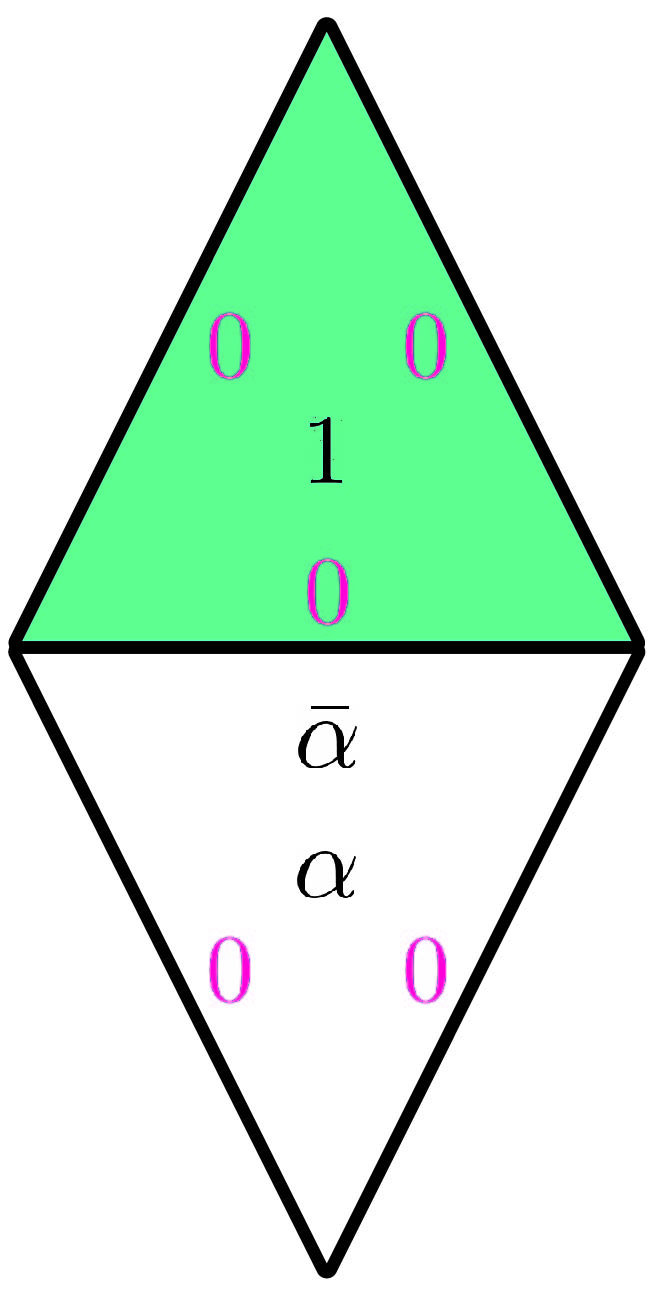}
  \caption{}
  \label{fig:rank-diamond}
\end{subfigure}%
\begin{subfigure}{.49\textwidth}
  \centering
  \includegraphics[width=.55\linewidth]{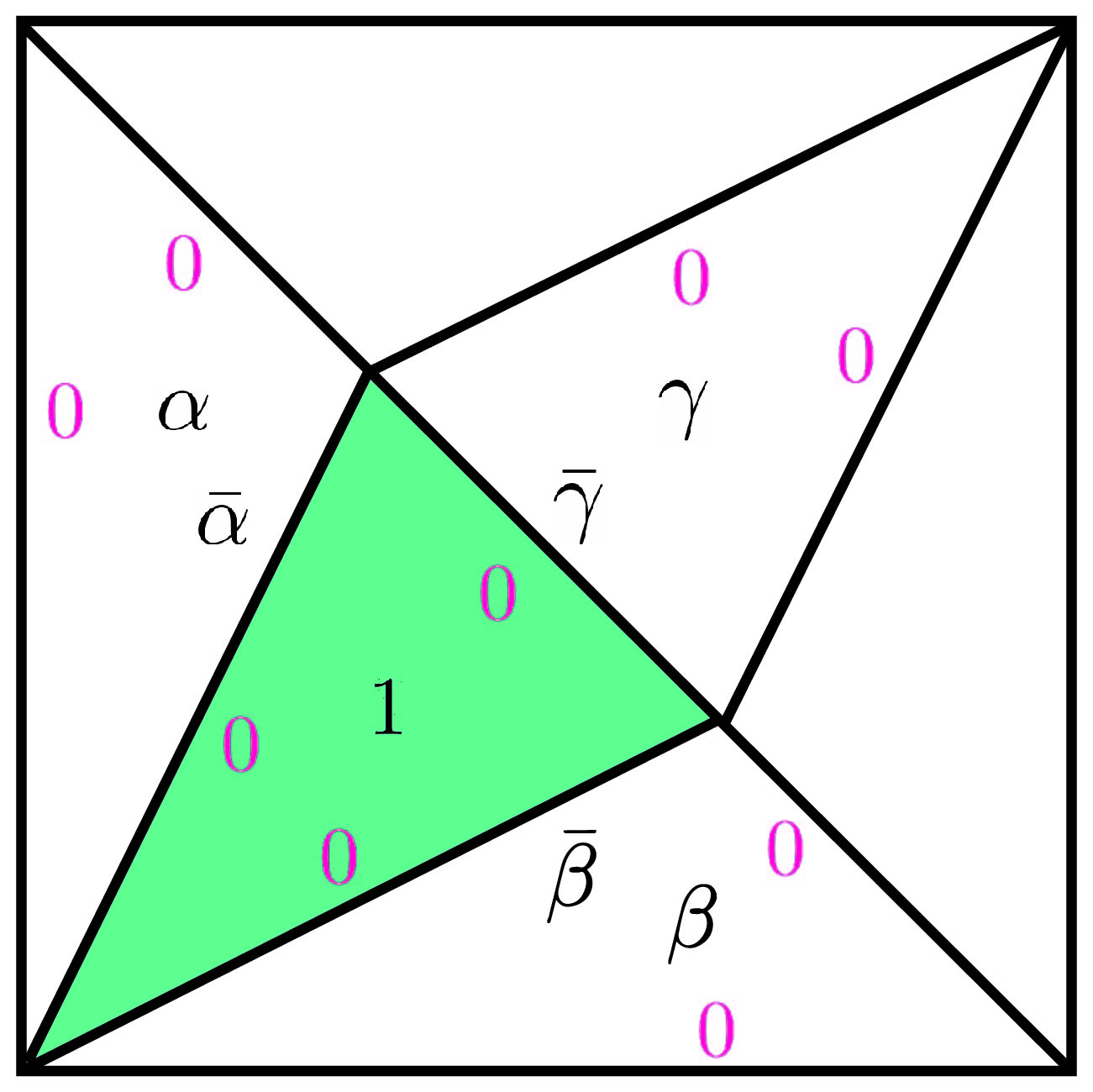}
  \caption{}
  \label{fig:rank-one-triangle}
\end{subfigure}
\caption{(a) Two triangles glued along a common edge. (b) A distribution on the torus with one deterministic triangle.
}
\label{fig:rank-stuff}
\end{figure}

\Lem{\label{lem:successive-rank-lem}
Let $p$ be a distribution in $\MP_\beta$ and  $Z$ denote the set of tight inequalities.  Assume that there exists a deterministic triangle $C$.
Then $rank(A[Z]) \geq 6$.
}
\Proof{%
Considering the action of $G_\beta$ will simplify the discussion. Our argument does not depend on $\beta$, so we assume $\beta=0$. Up to the action of the symmetry group we can assume that $p_C$ has the form given in 
Fig.~(\ref{fig:rank-one-triangle}). Let us write $C=\set{x,y,z}$ and $C_t=\set{t,t',t\oplus t'}$ where $t=x,y,z$ for the adjacent triangles. Then
\begin{eqnarray}
\rank(A[Z]) =
\rank
\begin{bmatrix}
1 & -1 & -1 & 0 & 0 & 0 & 0 & 0 & 0\\
-1 & 1 & -1 & 0 & 0 & 0 & 0 & 0 & 0\\
-1 & -1 & 1 & 0 & 0 & 0 & 0 & 0 & 0\\
-1 & 0 & 0 & 1 & -1 & 0 & 0 & 0 & 0\\
-1 & 0 & 0 & -1 & 1 & 0 & 0 & 0 & 0\\
0 & -1 & 0 & 0 & 0 & 1 & -1 & 0 & 0\\
0 & -1 & 0 & 0 & 0 & -1 & 1 & 0 & 0\\
0 & 0 & -1 & 0 & 0 & 0 & 0 & 1 & -1\\
0 & 0 & -1 & 0 & 0 & 0 & 0 & -1 & 1\\
\end{bmatrix} =6,
 \end{eqnarray} 
where a row corresponds to outcomes of $(x,y,z,x',x\oplus x',y',y\oplus y',z',z\oplus z')$. 
Lemma \ref{lem:det-triangle-rank} and Lemma \ref{lem:rank-two-triangles} can be used to compute the rank. We are looking at a region obtained by gluing three diamonds along a triangle. The rank of the matrix above is the sum of the ranks of each diamond minus two times the rank of the deterministic triangle.
 
}

\Rem{\label{rem:two-adjacent}
{\rm 
Lemma~\ref{lem:successive-rank-lem} implies that
to obtain a vertex we must fix three additional (linearly independent) zeros. This forces at least one additional edge to be deterministic, which by Lemma~\ref{lem:two-edges} implies that we must have at least two adjacent deterministic triangles.
}
}

We will use the number of deterministic triangles and the number of deterministic edges (that lie out side the boundary of the triangles) to organize the cases. 
In Fig.~(\ref{fig:MerminCases}) we see a diagram that illustrates all the possibilities. The base cases consist of three deterministic edges. Successive application of Lemma \ref{lem:two-edges} together with Lemma~\ref{lem:successive-rank-lem} and Remark~\ref{rem:two-adjacent} reduces the diagram to three main cases:
\begin{enumerate}
\item[(C1)] All triangles are deterministic.
\item[(C2)] Two adjacent deterministic triangles.
\item[(C3)] Three anticommuting deterministic edges.
\end{enumerate}
Note that if there are more than three deterministic edges again Lemma  \ref{lem:two-edges} can be used to reduce to (C1). The case where there is only one deterministic triangle and no additional deterministic edges does not appear since Lemma \ref{lem:rank-two-triangles} 
implies that such a configuration can have rank at most $8$.

\Rem{\label{rem:rep}%
{\rm Note that up to $\text{Aut}(K_{3,3})$ there are only three representative cases. For (C1) this is obvious since all triangles are deterministic. For cases (C2) and (C3) we observe that these correspond to type $2$ and type $1$ cnc sets, respectively. Thus by Corollary~\ref{cor:action-maximal} it suffices to consider a single representative for each case. The representatives are given in  Fig.~(\ref{fig:MerminCases}).
}
}

\begin{figure}[h!]
\centering
  \includegraphics[width=0.7\linewidth]{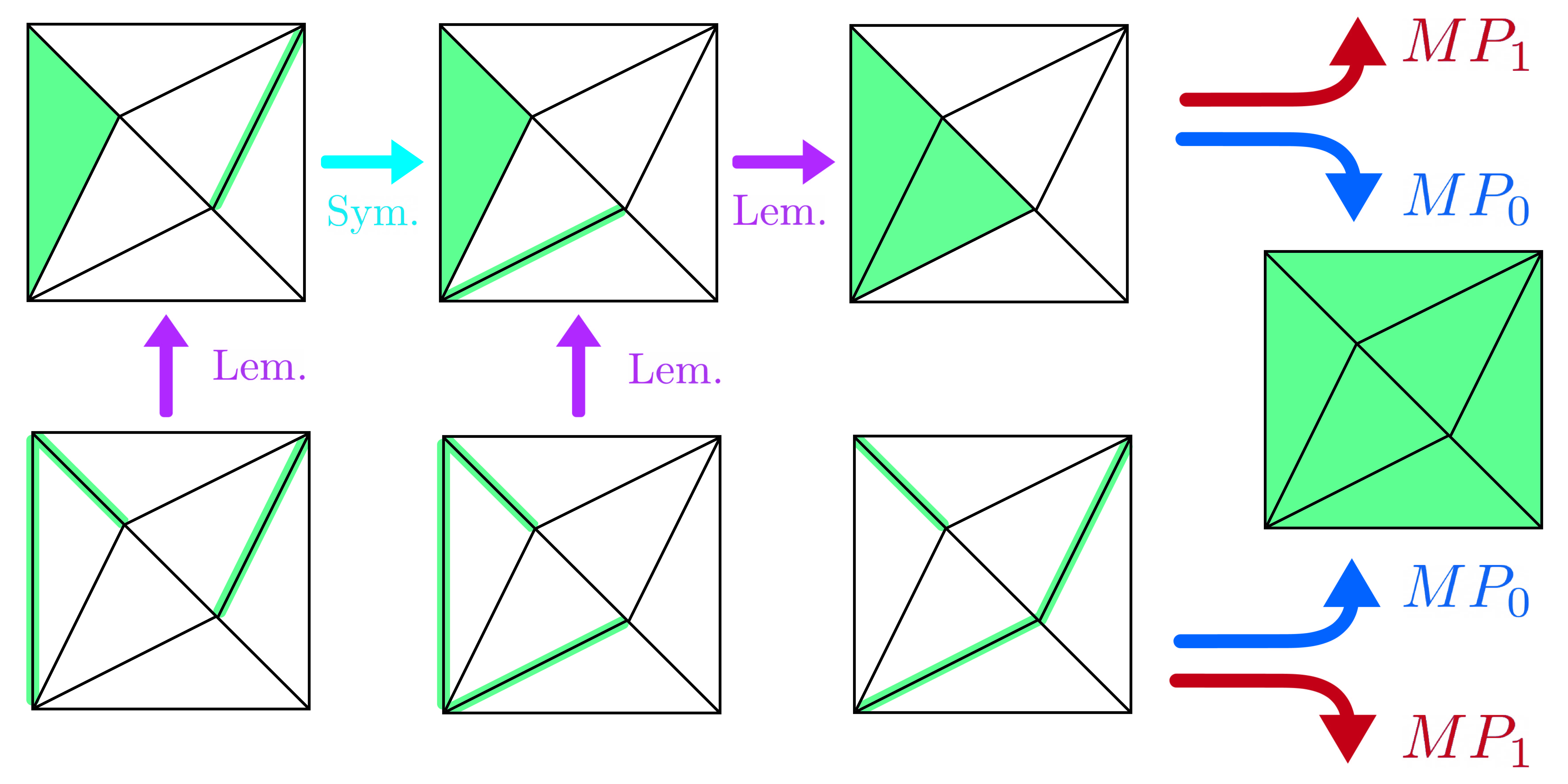}
\caption{Starting from the base case of three deterministic edges, we can obtain the vertices of $MP_{\beta}$ by repeated application of Lemma~\ref{lem:two-edges}. Up to symmetry we have three representative cases: (C1) All deterministic triangles (C2), two adjacent deterministic triangles, and (C3) three anti-commuting deterministic edges. For $\beta = 0$ all of these cases lead to a deterministic distribution. For $\beta = 1$ deterministic distributions are not allowed and we have type 1 and type 2 cnc distributions corresponding to cases (C2) and (C3), respectively.   
}
\label{fig:MerminCases}
\end{figure}



\Lem{\label{lem:C1}
Assume $p\in \MP_\beta$ is a vertex that satisfies (C1). Then $p$ belongs to $\MP_0$. No distribution in $\MP_1$ is deterministic.
}
{
\Proof{First let us note that (C1) implies that $A[Z]$ has full rank. To see this, take three mutually nonadjacent (i.e., the set of edges $C_{i}\cap C_{j}$ is empty) triangles as deterministic, which implies that all edges are deterministic. By applying Lemma~\ref{lem:det-triangle-rank} for each triangle (after an appropriate permutation of columns) we have that $A[Z]$ 
has full rank. Next observe that a set of deterministic edges 
implies a classical solution to the binary linear system $(M,\mathcal{C},\beta)$. Since this is possible only for $[\beta] = 0$, we have (C1) defines a vertex of $\MP_{0}$, but not of $\MP_{1}$.
}
}

\Lem{\label{lem:C2}
Let $p$ be a vertex of $\MP_\beta$ that satisfies (C2).
Then either
\begin{enumerate}[(i)]
\item $p$ satisfies (C1), or
\item $p$ is a type $2$ vertex of $\MP_1$.
\end{enumerate}
}
{
\Proof{
Consider a configuration for (C2), which specifies a cnc set $\Omega$ of type $2$. 
We can fix a deterministic distribution on $\Omega$. Any other choice can be dealt with similarly by the help of Lemma~\ref{lem:g1-transitive-vert}. 
Then we use the compatibility conditions, as shown below:
$$
\centering
  \includegraphics[width=.6\linewidth]{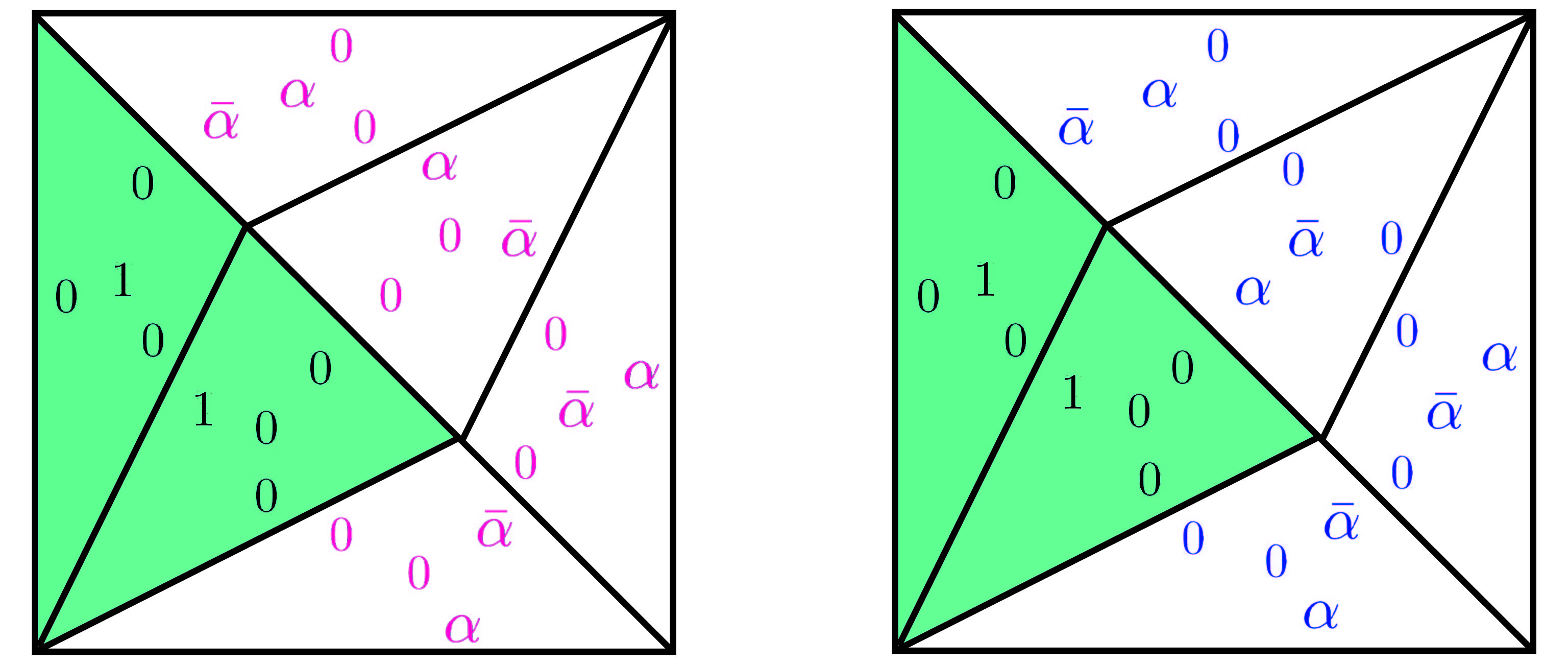}
$$
Here we have $\beta_{0}$ ($\beta_{1}$) in red (blue). For $\beta = 0$ there is a one parameter family of distributions. A vertex is specified by choosing $\alpha = 0,1$, which implies a deterministic distribution and thus reduces to (C1). For $\beta = 1$ the compatibility conditions imply that $\alpha = 1-\alpha = 1/2$. 
}
}

\Lem{\label{lem:C3}
Let $p$ be a vertex of $\MP_\beta$ that satisfies (C3).
Then either
\begin{enumerate}[(i)]
\item  $p$ satisfies (C1), or 
\item $p$ is a type $1$ vertex of $\MP_1$. 
\end{enumerate}
}  
{
\Proof{Similar to the case (C2) let us consider a configuration, choose a convenient distribution consistent with the case (C3) (other choices can be handled using symmetry, i.e., Lemma~\ref{lem:g1-transitive-vert}), and solve for the probabilities using the compatibility conditions:
$$
\centering
  \includegraphics[width=.6\linewidth]{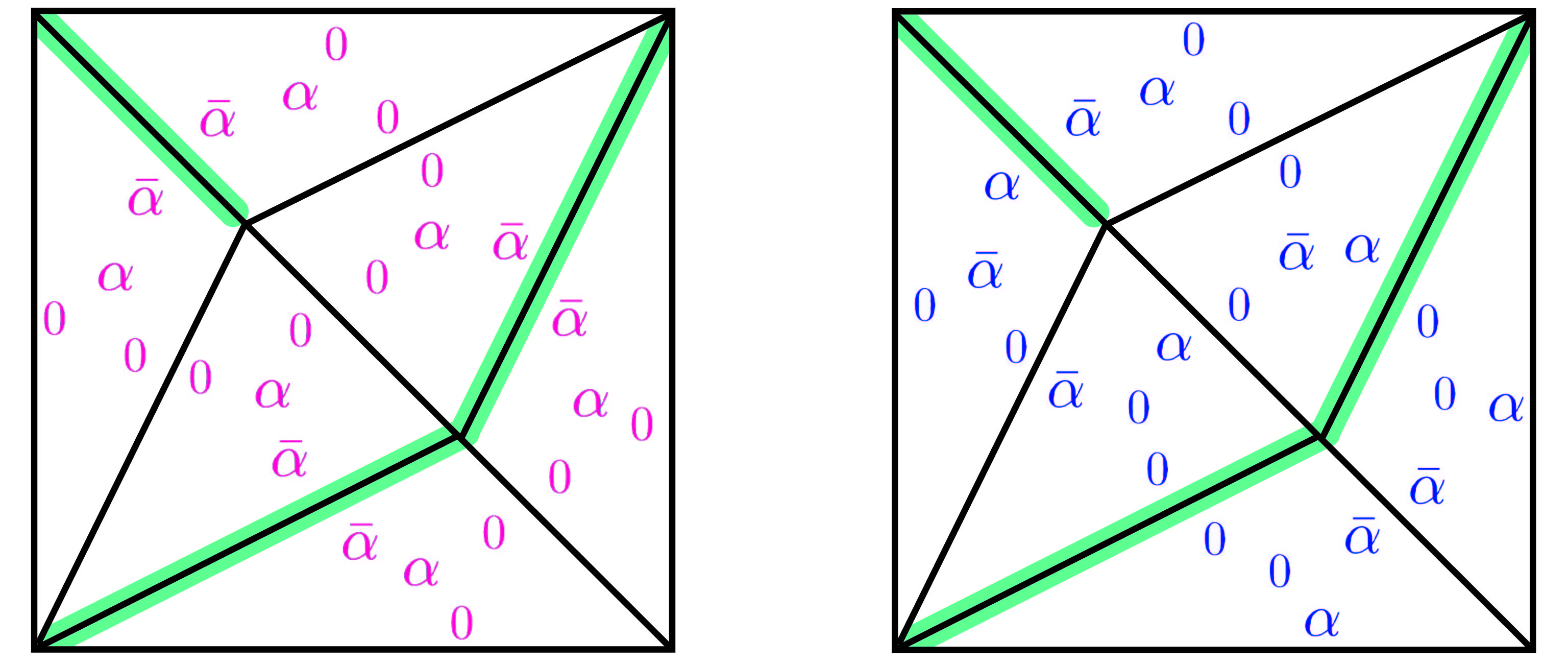}
$$
As with (C2), here for $\beta = 0$ we have a one-parameter family of distributions {(red)} where vertices are specified by $\alpha = 0,1$, reducing to the deterministic case (C3). For $\beta =1$ we have that $\alpha = 1-\alpha = 1/2$ (blue). 
}
} 
  
\begin{proof}[{\bf Proof of Theorem \ref{thm:VertexClassification}}]
To begin, the diagram in Fig.~(\ref{fig:MerminCases}) implies that we need only consider cases (C1)-(C3). Focusing first on $\beta = 0$, Lemmas~\ref{lem:C1}-\ref{lem:C3} imply that all vertices of $\MP_{0}$ are deterministic. 
These vertices are determined by the marginals on the measurements $(v_i,w_j)$ where $v_i,w_j\in \set{(0,1),(1,1)}$. Hence there are $16$ such vertices.

Turning now to $\MP_{1}$, note that by Lemma~\ref{lem:C1} that no deterministic distribution is a point of $\MP_{1}$, thus by Lemmas~\ref{lem:C2} and \ref{lem:C3}, the only vertices of $\MP_{1}$ are those of the form of (C2) and (C3). Observe that for (C2) and (C3) that $p_{\{m\}}^{0}\in \{0,1\}$ (i.e., the edge is deterministic) if and only if $m\in\Omega$, where $\Omega$ are the maximal cnc sets described in Lemma~\ref{lem:Omega-structure}, and $p_{\{m\}}^{0}=1/2$  (or $\bar{m} = 0$) for all other observables $m \notin \Omega$. For example, the deterministic edges in (C3) are described by a type $1$ cnc set since they correspond to a maximal set of anti-commuting observables. Using Lemma~\ref{lem:Omega-structure}, we know that there are $48 = 2^{3}\times 6$ type 1 and $72 = 2^{3}\times 9$ type $2$  cnc sets, which then correspond to $48$ type $1$ and $72$ type $2$ vertices of $\MP_{1}$, respectively.
\end{proof}

\section{Graph of the Mermin polytopes} 
\label{sec:Graph}

In this section, we determine the graph of $\MP_\beta$ consisting of the vertices of the polytope together with the edges connecting two neighbor vertices in the polytope.

\subsection{Graph of $\MP_0$} 

\Lem{\label{lem:g0-transitive}%
$G_{0}$ acts transitively on deterministic vertices of $\MP_{0}$. 
}
\Proof{%
Take an arbitrary deterministic vertex $p\in \MP_{0}$ and act on it by $G_{\ell}\subset G_{0}$. There are $15$ elements of $G_{\ell}$ listed in Fig.~(\ref{fig:mermin-torus-loops-a}) and the action of each permutes the outcomes of a different subset of measurements and thus generates $15$ distribution distinct from $p$. Since there are $16  = 2^{4}$ outcome assignments 
in total, we obtain all possible deterministic distributions by the action of $G_{0}$
} 

Let $q$ denote the deterministic distribution in $\MP_0$ given by $q_C^{00}=1$ for all triangles $C$; see Fig.~(\ref{fig:can-det}). We will take this as the canonical vertex of this polytope. The other vertices can be obtained by using the action of the loops as a consequence of Lemma \ref{lem:g0-transitive}. We will write 
$$
q_l = g_l\cdot q\;\;\text{ where } g_l\in G_\ell,\; l\in \ell(K_{3,3})
$$
for the remaining vertices obtained via the action of $G_\ell$ (see Eq.~(\ref{eq:G0})).

\begin{figure}[h!]
\centering
  \includegraphics[width=0.3\linewidth]{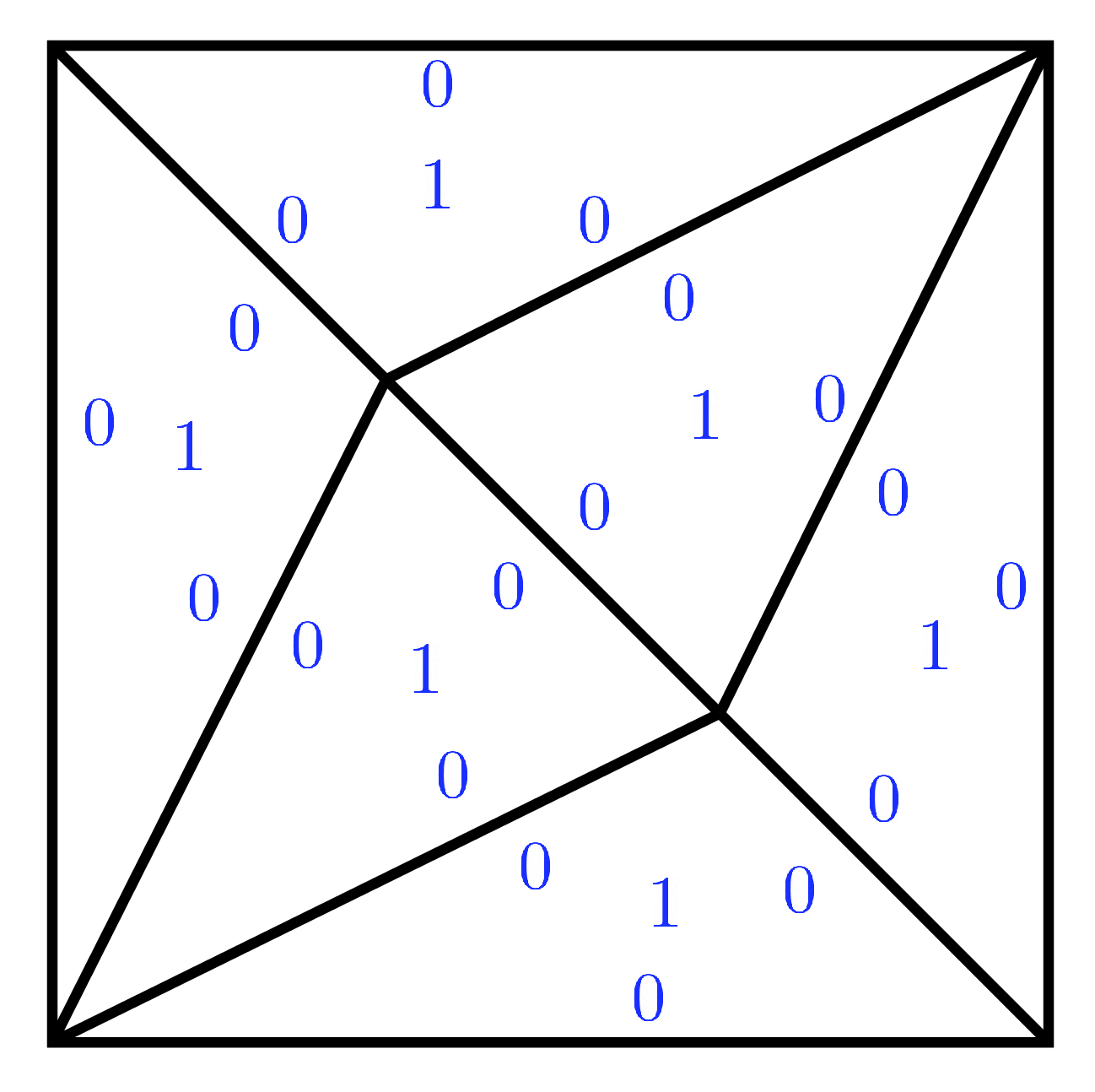}
\caption{  
}
\label{fig:can-det}
\end{figure}

\Cor{\label{cor:Stab-MP0}
Let $p$ be a vertex of $\MP_0$. Then $\Stab_{G_0}(p)$ is isomorphic to $\Aut(K_{3,3})$. Moreover, the stabilizer acts transitively on the set of remaining vertices.
}
\Proof{
By Lemma \ref{lem:g0-transitive} we know that the action of $G_0$ on the set of vertices is transitive. Therefore the stabilizers of each vertex are isomorphic.
It suffices to compute the stabilizer of the canonical vertex $q$. By definition of $q$, permutation of the contexts does not change it. That is, $\Aut(K_{3,3})\subset \Stab_{G_0}(p)$. Since there are $16$ vertices, this implies $|G_0/\Stab_{G_0}(p)|=16$ and we have  $\Aut(K_{3,3})= \Stab_{G_0}(p)$.

For the second part of the statement 
observe that the set of edges in a loop is precisely the complement
$\Omega^c$ 
of a maximal cnc set (Definition \ref{def:Omega}). Therefore there is a one-to-one correspondence between the set of loops and the set of maximal cnc sets (both types combined).
Since $\Aut(K_{3,3})$ acts transitively on the set of cnc sets (Corollary \ref{cor:action-maximal}), it also acts transitively on the set $\ell(K_{3,3})$ of loops. This implies that the action of the stabilizer, that is $\Aut(K_{3,3})$, on the vertices $\set{g_l\cdot q:\, l\in\ell(K_{3,3})}$ is transitive since $\sigma\cdot q_l = g_{\sigma\cdot l}\cdot q$ for $\sigma \in \Aut(K_{3,3})$, where $\sigma \cdot l$ is the loop obtained by the permutation action of $\sigma$.
}


\Thm{\label{thm:Graph-MP0}
The graph of $\MP_0$ is the complete graph $K_{16}$.
}
\begin{proof} 
Let us consider $q$ and another vertex $q_l=g_l\cdot q$. By Corollary \ref{cor:Stab-MP0} we can assume $l=l_{x_0}$ corresponding to flipping the outcome of $x_0$; see Lemma \ref{lem:decomposition-loop}. It suffices to show that $q$ and $q_l$ are neighbors. The distribution $p(\alpha)=\alpha q + \bar \alpha q_l$, where $0\leq \alpha\leq 1/2$, is given as follows:
$$
\includegraphics[width=.8\linewidth]{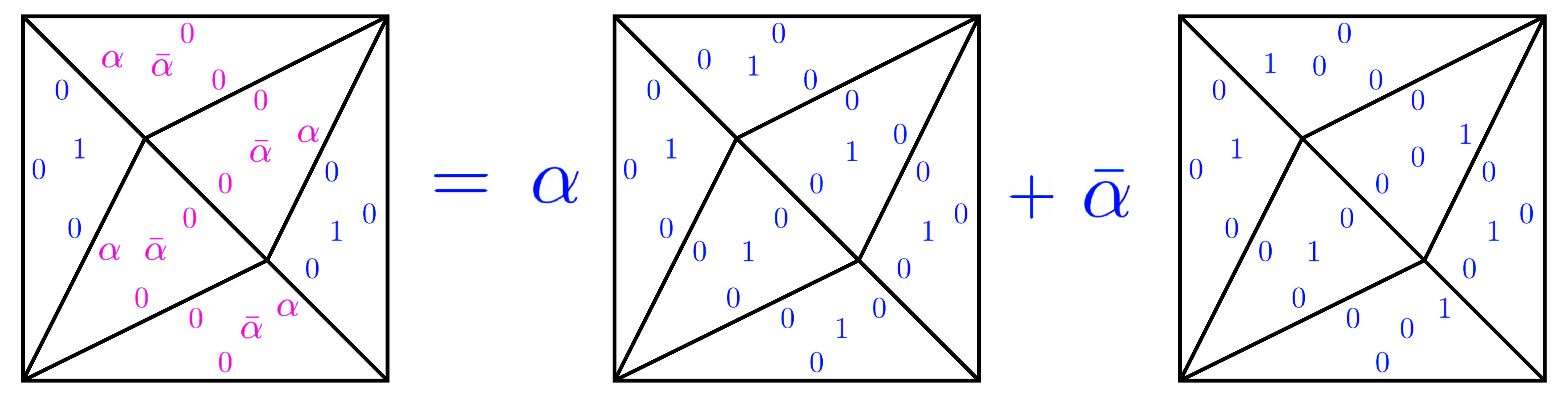}
$$
\noindent
Note that $p(\alpha)$ for $\alpha \in (0,1/2)$ specifies an edge in $\MP_0$ from $q$ to $q_l$ since the rank of $A[Z]$, where $Z$ is the set of tight inequalities, is equal to $8$. This is because, the zeros in $Z$ together with the nonsignaling conditions leaves a single parameter, that is $\alpha$. 
\end{proof}

\subsection{Graph of $\MP_1$}
\label{sec:GraphMP1}

Our goal is to describe the graph of $\MP_1$.
We will follow  a similar approach to the vertex classification. This time we consider $8$ linearly independent inequalities instead of $9$. Considering the number of deterministic edges on the torus representation is a good way to organize the cases. 
Our main technical result describes an edge between two neighboring vertices of $\MP_1$ in terms of the loops on the torus given in Fig.~(\ref{fig:mermin-torus-loops}). We begin by introducing some notation: We have seen that the complement of a loop $l\in \ell(K_{3,3})$ corresponds to a cnc set (Definition \ref{def:Omega}). Denoting a maximal cnc set that corresponds to loop $l$ by $\Omega_l$ we will write $\Omega_l^c$ for its complement, consisting of the edges that belong to the loop $l$.
A {\it signed loop} consists of a loop together with a function  
$$\varphi:\Omega_l^c \to \set{\pm 1}.$$ 
Corresponding to this function we will define a collection $p^\varphi=(p^\varphi)_{C\in \cC}$ of functions $p^\varphi:\ZZ_2^C \to \RR$ such that $\sum_{s} p^\varphi(s)=0$.
Note that this is similar to a distribution but the values sum to zero instead of one, and can be negative.
For $C=\set{x,y,z}$ our definition of $p^\varphi_C$ uses a version of Eq.~(\ref{eq:pCab}):
$$
(p^\varphi_C)^{ab} = \frac{1}{4}((-1)^a \varphi(x)+(-1)^b \varphi(y)+(-1)^{a+b+\beta(C)} \varphi(z) ).
$$


\begin{figure}[h!]
\centering
\begin{subfigure}{0.33\textwidth}
\centering
  \includegraphics[width=0.8\linewidth]{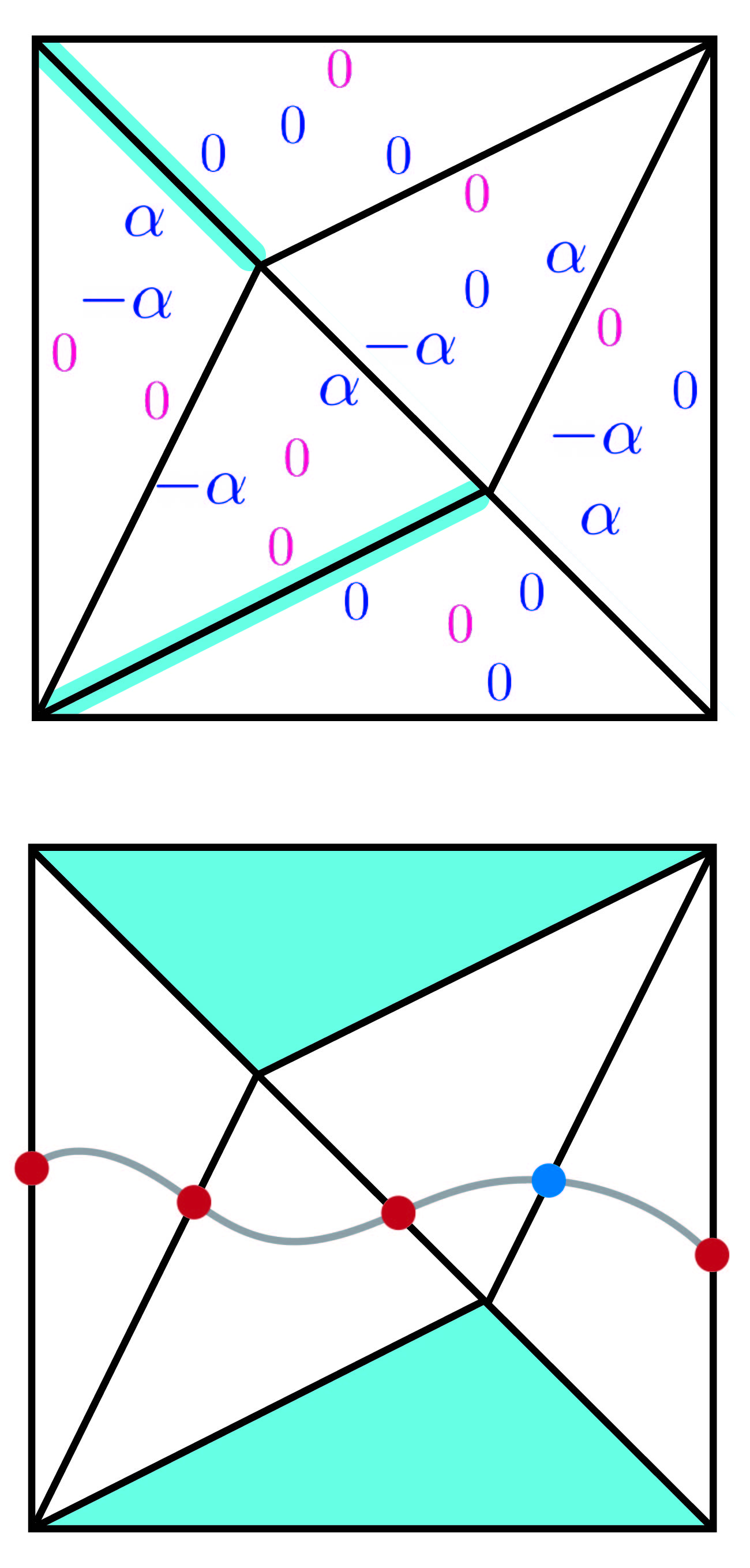}
\caption{
}
\label{fig:signed-loop-58to57}
\end{subfigure}
\begin{subfigure}{0.33\textwidth}
\centering
  \includegraphics[width=0.8\linewidth]{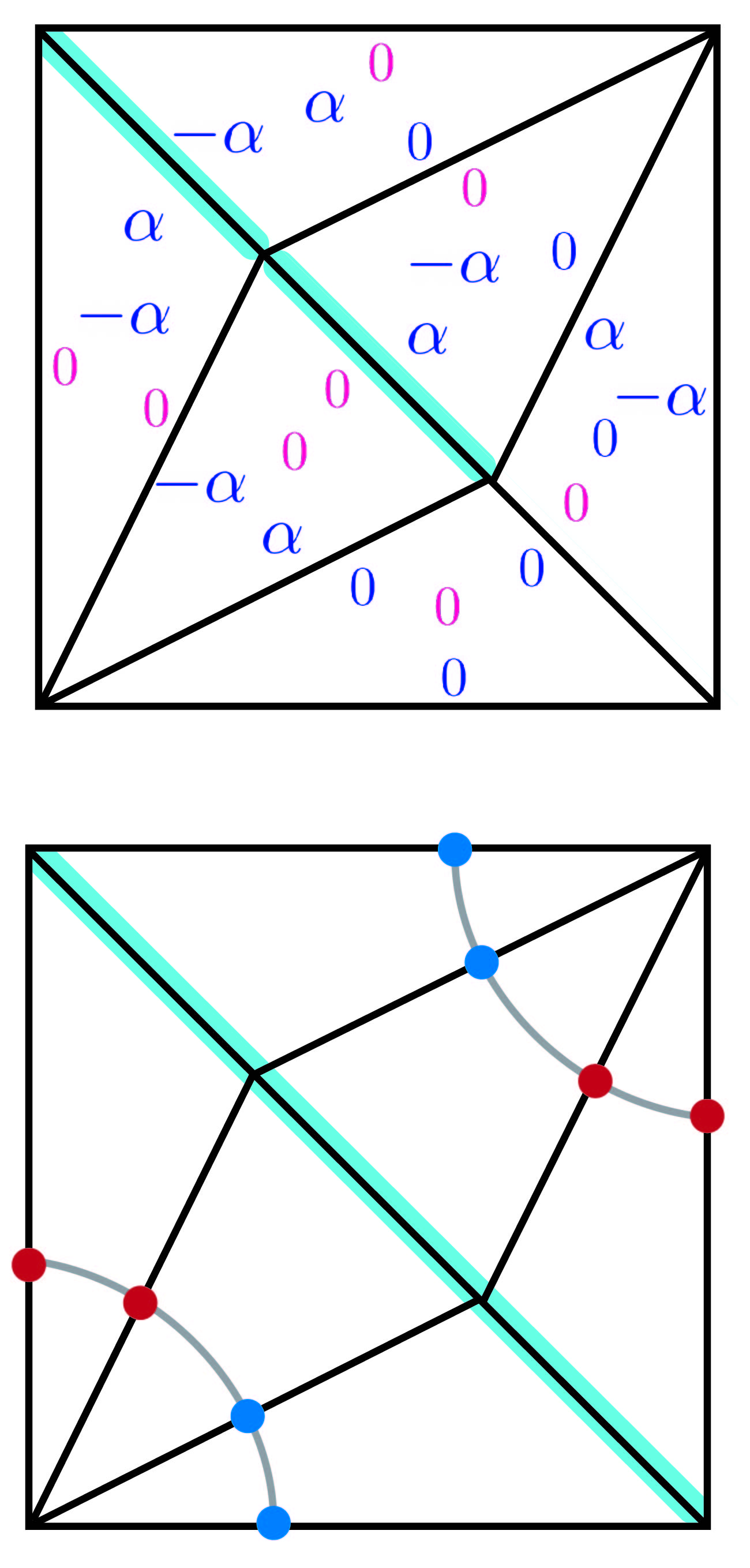}
\caption{
}
\label{fig:signed-loop-58to99}
\end{subfigure}%
\begin{subfigure}{0.33\textwidth}
\centering
  \includegraphics[width=0.8\linewidth]{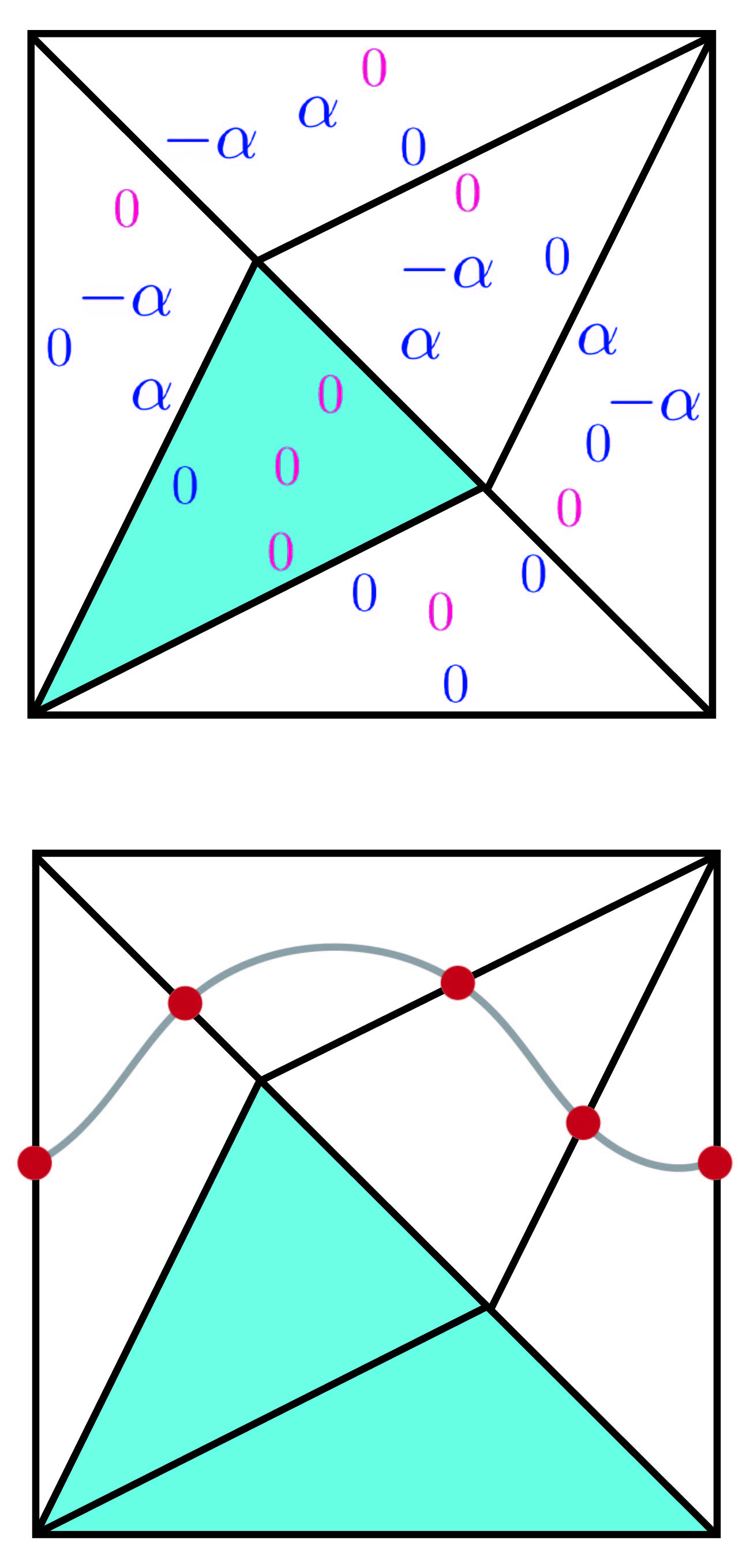}
\caption{
}
\label{fig:signed-loop-58to22}
\end{subfigure}
\caption{An edge (top) in $\MP_{1}$ between (a) canonical type 2 vertex {\VFe}  and canonical type 1 vertex $\VFS$ (b)  {\VFe}  and  {\Vnn} (c) {\VFe} and {\Vtt}, as well as the corresponding loops (bottom). The edges are parameterized by $\alpha \in [0,1/2]$ such that $\alpha = 0$ corresponds to {\VFe} and $\alpha=1/2$.
}
\label{fig:signed-loops}
\end{figure}

\Lem{\label{lem:loop-edges}
Let $p$ be a distribution in $\MP_1$ and $Z$ denote a subset of tight inequalities such that $|Z|=8$. If $\rank(A[Z])=8$ then there exists precisely two  deterministic edges. Moreover, an edge $p\in \MP_1$
between two vertices $q_1$ and $q_2$ is given by 
\begin{eqnarray}\label{eq:loop-edge}
p(\alpha) = q_1 + \alpha p^\varphi,\;\;\; \alpha \in [0,1/2],
\end{eqnarray}
for some signed loop $\varphi:\Omega_l^c \to \set{\pm 1}$, where $l\in \ell(K_{3,3})$, such that $p(1/2)=q_2$.
}
\Proof{
Let us start with the case of no deterministic edges. In this case $A[Z]\leq 6$, hence we don't have enough zeros to obtain an edge in the polytope. When there is one deterministic edge, say denoted by $z$, we consider the diamond $D$ consisting of two adjacent triangles $C,C'$ at $z$. By Lemma \ref{lem:rank-two-triangles} we have $\rank(A[Z_{C,C'}])=|Z_{C,C'}|-1$ which implies $\rank(A[Z])\leq 7$. The case of three deterministic edges, or more, with at least two of them anticommuting is studied in Section \ref{sec:mermin-vertices}. According to Lemmas \ref{lem:C1}-\ref{lem:C3} we obtain either a vertex of $\MP_1$ or a distribution that lies outside of this polytope. Remaining cases are two deterministic edges which either commute or anticommute. Note that by Lemma \ref{lem:two-edges} the commuting case also covers the three pairwise commuting deterministic edges.
To establish Eq.~(\ref{eq:loop-edge}) we note that two distributions $q_{1}$ and $q_{2}$ are connected by an edge   
if and only if they have in common 
$8$ linearly independent tight inequalities 
preserved along the edge.
Given such a set of tight inequalities, we proceed to construct $p^{\varphi}$ by placing the corresponding zeros on the torus and then use the compatibility conditions together with the fact that $\sum_{s} p^\varphi(s)=0$. 

To see how this works, let us consider a representative $\Omega = \{x,y\} \subset M$ for the case of two (a) anti-commuting and (b) commuting deterministic edges (see Fig.~(\ref{fig:58-to-neighbor})) and notice that these are both cnc sets, although not maximal. Moreover, let us choose a value assignment $s:\Omega\to\ZZ_{2}$, which by Eq.~(\ref{eq:prob-from-marginal}) determines the marginals $p_{\{x\}}$ and $p_{\{y\}}$. 
By Lemma~\ref{lem:g1-transitive-vert} the action of $G_1$ on the set of pairs $(\Omega,s)$ is transitive. Even though $\Omega$ is not maximal we can always embed it into a maximal one, extend $s$ and apply the transitivity of the action of $G_1$.

In both cases, as depicted in Fig.~(\ref{fig:Mermin-loop-58to22b}) and  (\ref{fig:Mermin-loop-58to99b}), there are $6$ linearly independent tight inequalities, thus we must choose two additional probabilities to set to zero. The possible choices are as follows: (1) Set two (or one) of the given parameters $\alpha,\beta,\gamma = 0,1$. (2) Place both remaining zeros in a single shaded triangle. (3) Place one zero in each of the shaded triangles. It is straightforward to see that both options (1) and (2) will fix the distribution to be a specific vertex, and thus will not be an edge.
 For (3) we let $p$ and $q$ denote the distribution on the shaded triangles. In Fig.~(\ref{fig:Mermin-loop-58to22b}) suppose $p$ corresponds to the triangle   whose boundary has marginals for the outcome $0$ given by $(-\alpha,-\beta,-\gamma)$ and $q$ corresponds to $(-\alpha,\beta,-\gamma)$. Then from these marginals one can compute
$$
\begin{aligned}
p^{01} &= -\alpha -p^{00} \\
p^{10} &= -\beta -p^{00} \\
p^{11} &= \gamma -p^{00}
\end{aligned}
$$
and similarly for $q$.
Using $\sum_{a,b}p^{ab}=\sum_{c,d}q^{cd}=0$ and solving for $p^{ab}$ and $q^{cd}$ we obtain
$$
\begin{aligned}
p^{00} &= (- \alpha  - \beta +  \gamma)/2\\
p^{01} &= (-\alpha + \beta -  \gamma  )/2\\
p^{10} &= ( \alpha - \beta -  \gamma )/2\\
p^{11} &= ( \alpha + \beta +  \gamma )/2
\end{aligned}\;\;\;\;\;\;\;\;\;\;
\begin{aligned}
q^{00} &= (- \alpha +  \beta +  \gamma)/2=-p^{10}\\
q^{01} &= (- \alpha -  \beta -  \gamma )/2=-p^{11}\\
q^{10} &= ( \alpha +  \beta -  \gamma )/2=-p^{00}\\
q^{11} &= ( \alpha -  \beta +  \gamma)/2=-p^{01}.
\end{aligned}
$$
If $p^{ab}$ is set to zero then we can set one of $q^{cd}$, where $(c,d)\in \ZZ_2^2-\set{a+1,b}$, equal to zero.
In this way we obtain 
a type $2$ loop. For example, setting $p^{01}=q^{10}=0$ gives the signed loop Fig.~(\ref{fig:signed-loop-58to22}) 
For this choice $p(1/2)$ is the vertex $\Vtt$. 

Fig.~(\ref{fig:Mermin-loop-58to99b}) is handled similarly. Suppose $p$ corresponds to the triangle   whose boundary has marginals 
given by $(-\alpha,-\beta,\gamma)$ and $q$ corresponds to $(\alpha,\beta,\gamma)$. Then we have
$$
\begin{aligned}
p^{00} &= (- \alpha +  \beta +  \gamma)/2\\
p^{01} &= (- \alpha -  \beta -  \gamma)/2\\
p^{10} &= ( \alpha -  \beta +  \gamma)/2\\
p^{11} &= ( \alpha +  \beta -  \gamma)/2
\end{aligned}\;\;\;\;\;\;\;\;\;\;
\begin{aligned}
q^{00} &= ( \alpha +  \beta +  \gamma)/2=-p^{01}\\
q^{01} &= ( -\alpha  + \beta -  \gamma)/2=-p^{10}\\
q^{10} &= (- \alpha -  \beta +  \gamma  )/2=-p^{11}\\
q^{11} &= ( \alpha -  \beta -  \gamma )/2=-p^{00}.
\end{aligned}
$$
This case gives either a type $1$ or a type $2$ loop. For example, setting 
$p^{10}=q^{00}=0$
gives the signed loop in Fig.~(\ref{fig:signed-loop-58to99}), and $p(1/2)$ is the vertex $\Vnn$. 
\begin{figure}[h!]
\centering
\begin{subfigure}{\textwidth}
\centering
  \includegraphics[width = 0.7\linewidth]{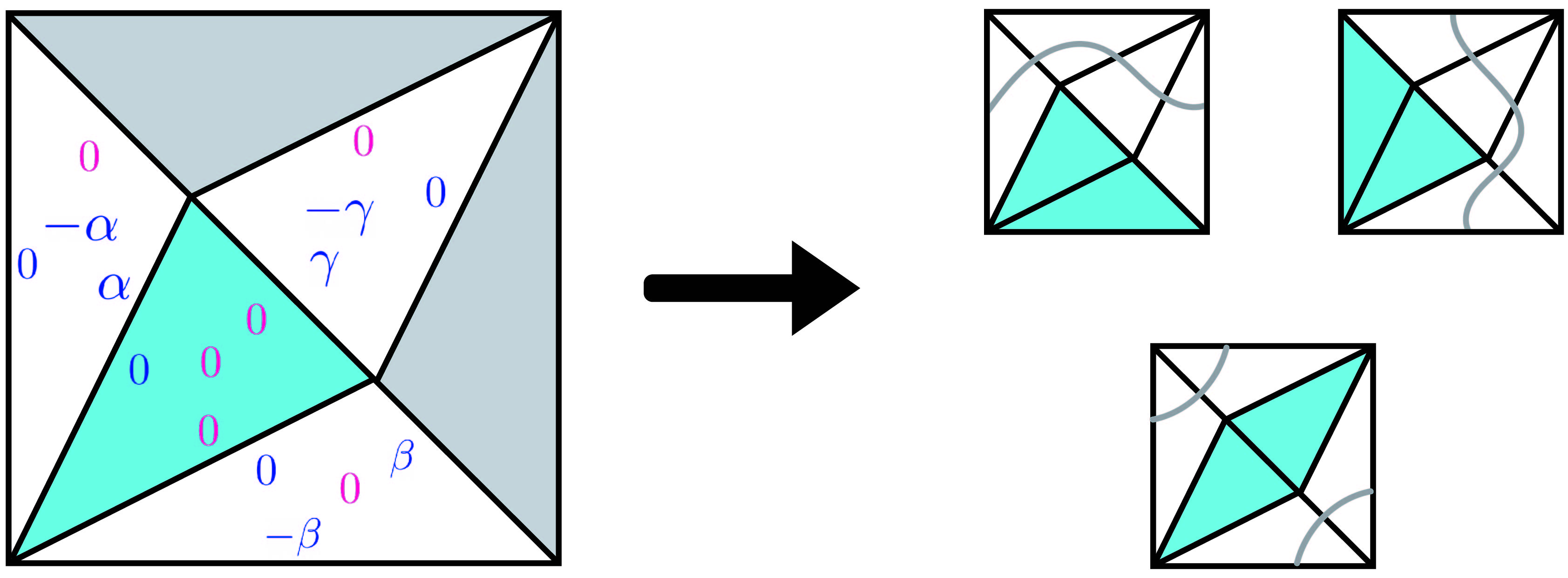}
\caption{
}
\label{fig:Mermin-loop-58to22b}
\end{subfigure}\\
\begin{subfigure}{\textwidth}
\centering
  \includegraphics[width = 0.7\linewidth]{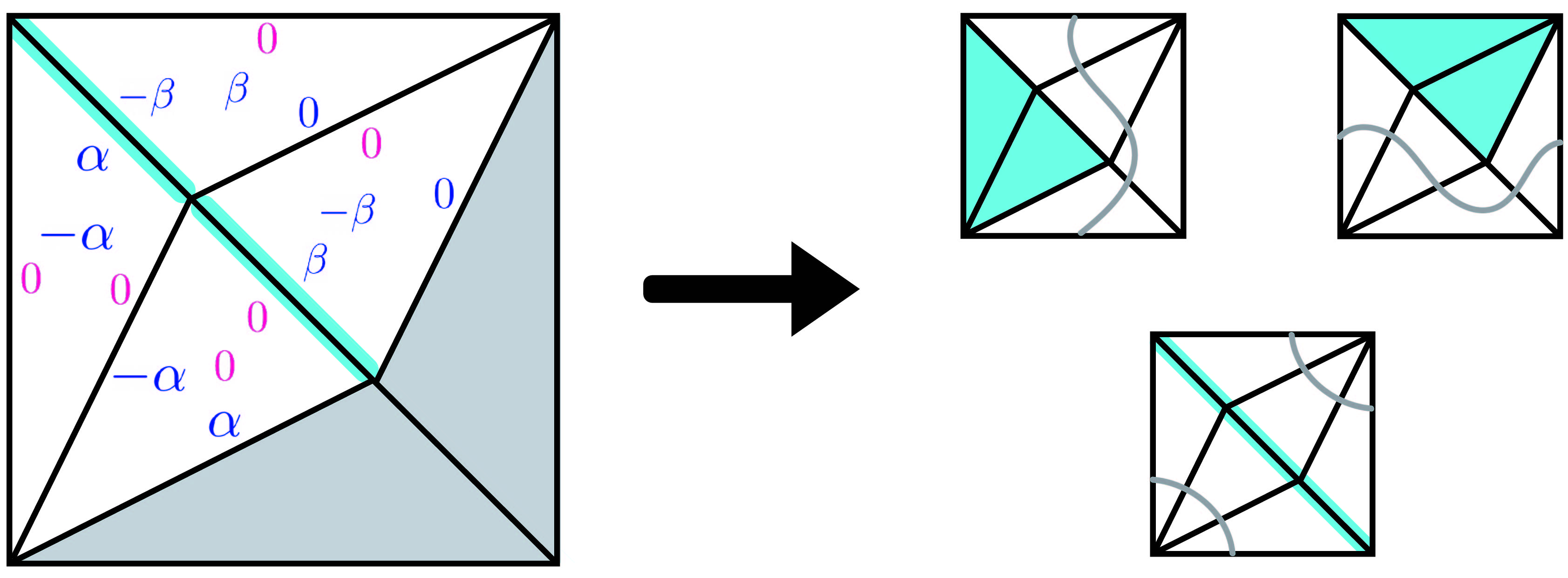}
\caption{
}
\label{fig:Mermin-loop-58to99b}
\end{subfigure}
\caption{
(a) Two commuting edges generate three possible loops; all of these are type-2.  The zeros in pink correspond to linearly independent tight inequalities. (b) Two anti-commuting edges generate three possible loops; i.e., one type-1 and two type-2.
}
\label{fig:58-to-neighbor}
\end{figure}

}


Next, we will describe the graph of $\MP_1$.
By Lemma \ref{lem:stabilizer-MP1} the action of $G_1$ on the type $1$ and $2$ vertices is transitive. Therefore to understand the local structure, i.e., the neighbors, at a given vertex we can fix one type $1$ vertex and one type $2$ vertex.
Our canonical representative for a  type $1$ vertex is $\VFS$ given in Fig.~(\ref{fig:V57-dist}), which as an operator given as follows:
\begin{equation}\label{eq:V57}
q_0=\frac{1}{4}(\mathbb{1}+X\otimes Y-Y\otimes Y+Z\otimes Y).
\end{equation}
Here we are using Lemma \ref{lem:M1-quantum} to identify points of $\MP_1$ as operators and don't distinguish them notationally from the probability distributions.
For a  type $2$ vertex our canonical choice is $\VFe$ given in Fig.~(\ref{fig:V58-dist}):
\begin{equation}\label{eq:V58}
\VFe =\frac{1}{4}(\mathbb{1}+X\otimes X+X\otimes Y+Y\otimes X-Y\otimes Y+Z\otimes Z) .
\end{equation}
 
%

 
\Lem{\label{lem:stabilizer-MP1}
Let $p$ be a vertex of $\MP_1$. 
\begin{itemize}
\item If $p$ is of type $1$ then its stabilizer is  isomorphic to the dihedral  group $D_{24}$ of order $24$. For the canonical type $1$ vertex $q_0$ we have
$$
\Stab_{G_1}(q_0) = \Span{YS\otimes X, YH\otimes H}, 
$$  
where $S$ is the phase gate and $H$ is the Hadamard gate.

\item If $p$ is of type $2$ then its stabilizer is 
isomorphic to the dihedral  group $D_{16}$ of order $16$. For the canonical type $2$ vertex $p_0$ we have
$$
\Stab_{G_1}(p_0) = \Span{X\otimes YS, \SWAP},
$$ 
where $\SWAP$ is the swap gate that permutes the parties.
\end{itemize} 
In particular, $G_1$ acts transitively on the set of type $1$ and $2$ vertices.
}
\Proof{
Proof is given in Lemma \ref{lem:stab57} and Lemma \ref{lem:stab58}. The last statement about the transitivity of the action follows from Lemma \ref{lem:g1-transitive-vert}.
}
 
%


\begin{figure}[h!]
\centering
\begin{subfigure}{0.45\textwidth}
\centering
\includegraphics[width=0.6\linewidth]{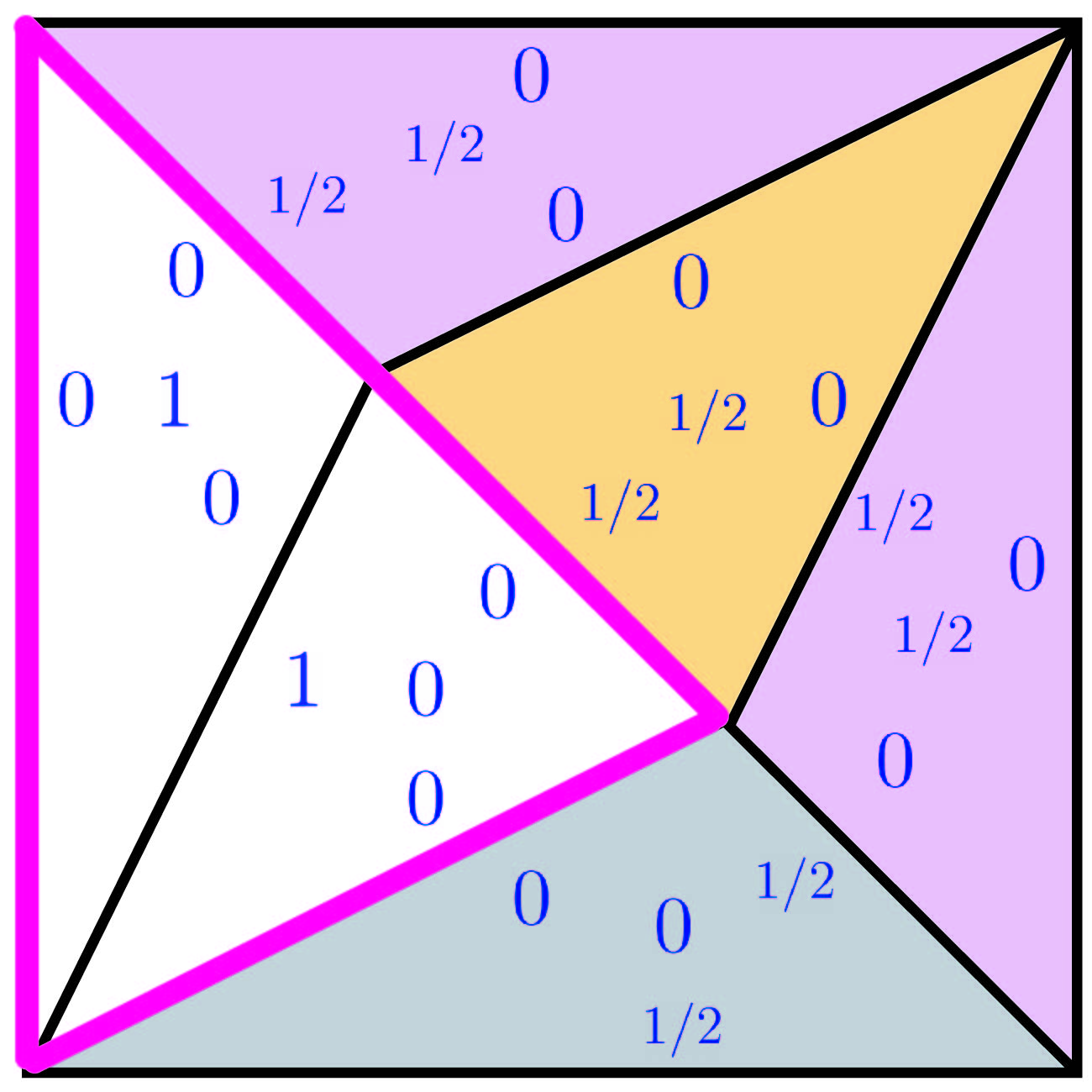}
\caption{ 
}
\label{fig:V58-dist}
\end{subfigure}
\begin{subfigure}{0.45\textwidth}
\centering
\includegraphics[width=0.6\linewidth]{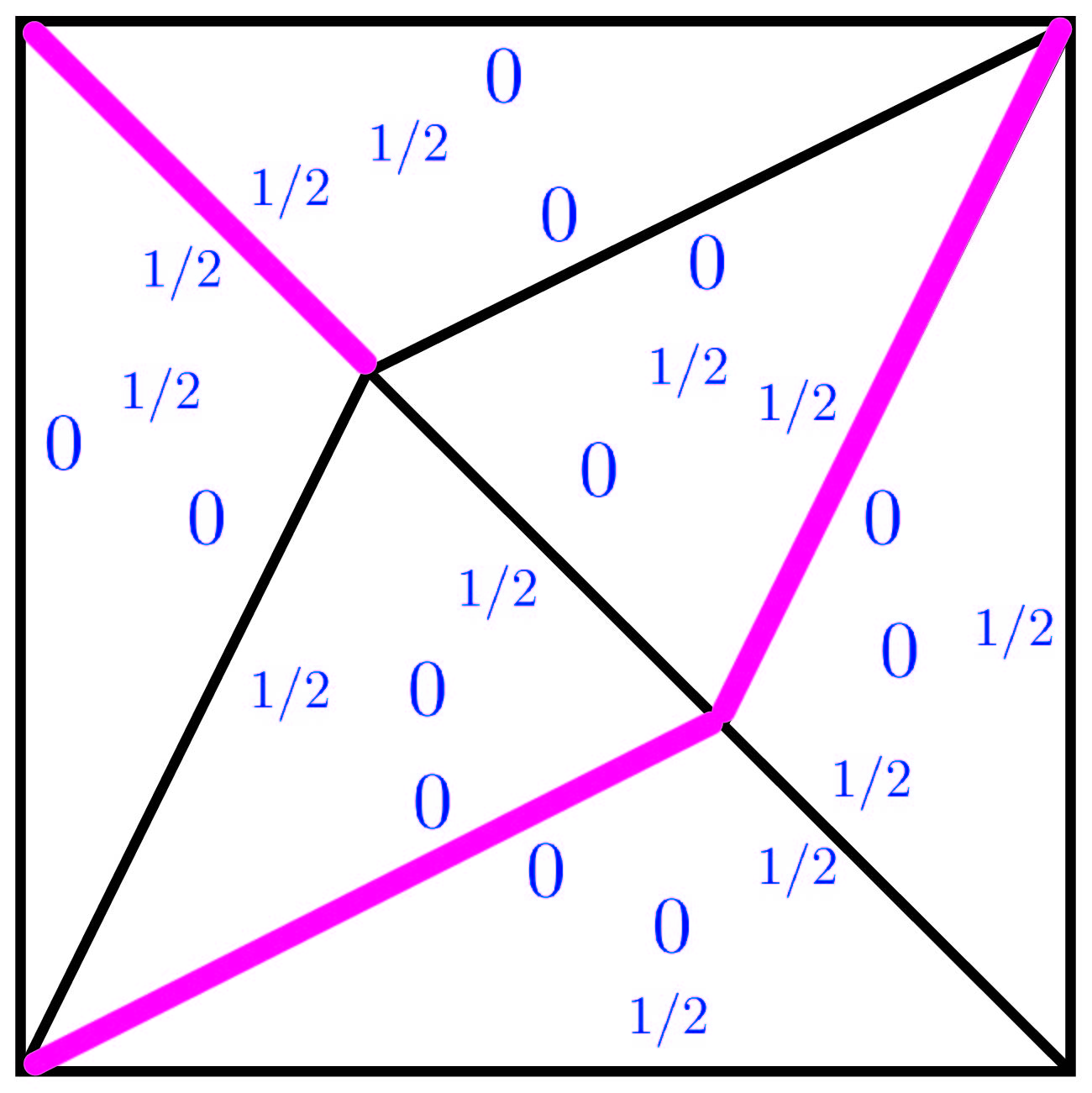}
\caption{ 
}
\label{fig:V57-dist}
\end{subfigure}
\caption{ 
}
\label{fig:canonical-dist}
\end{figure}

 For a vertex $p\in \MP_1$ let $N(p)$ denote the set of neighbor vertices of $p$. 
 
\Thm{\label{thm-MP1-graph}
The graph of $\MP_1$ consists of $120$ vertices partitioned into two kinds: $48$ type $1$ and $72$ type $2$ vertices. 
The local structure at these vertices is as follows:
\begin{itemize}
\item $N(\VFS)$ consists of $12$ type $2$ vertices given in Fig.~(\ref{fig:T2NofT1}). $\Stab_{G_1}(\VFS)$ acts transitively on these neighbors.
\item $N(\VFe)$ consists of $8$ type $1$ vertices and $16$ type $2$ vertices given in Fig.~(\ref{fig:T1NofT2}) and (\ref{fig:T2NofT2}); respectively. $\Stab_{G_1}(\VFe)$ acts transitively on the type $1$ neighbors, whereas the type $2$ neighbors break into  two orbits.
\end{itemize}  
} 
\Proof{Vertices of $\MP_1$ are classified in part (ii) of Theorem \ref{thm:VertexClassification}. 
Lemma \ref{lem:loop-edges} shows that edges of $\MP_1$ are described by signed loops.
To describe the local structure of the graph at a vertex we consider the canonical vertices in Eq.~(\ref{eq:V57}) and (\ref{eq:V58}), since by Lemma \ref{lem:stabilizer-MP1} $G_1$ acts transitively on each type of vertex. 

\begin{figure}[h!]
\centering
\includegraphics[width=.9\linewidth]{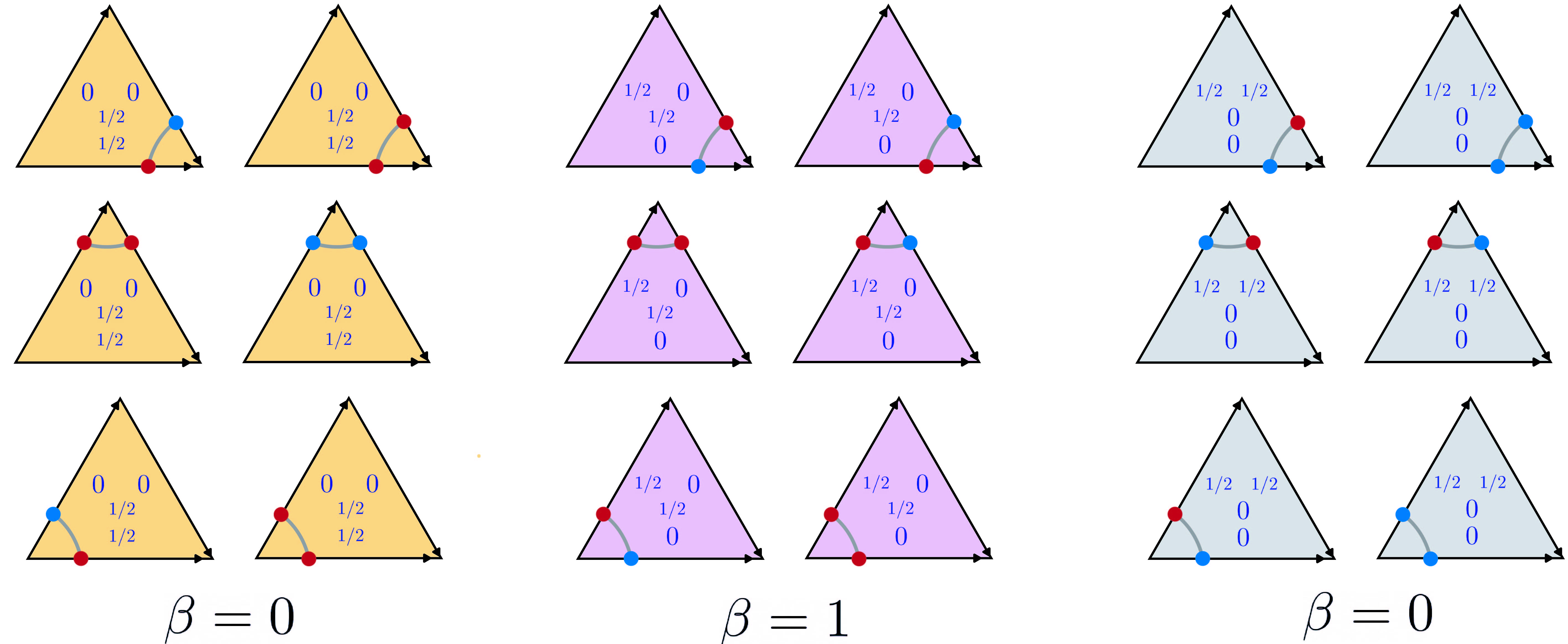}
\caption{
}
\label{fig:triangle-loop}
\end{figure}

Our strategy is to find the signed loops such that $p(\alpha)$ in Eq.~(\ref{eq:loop-edge}) gives $p(0)=\VFS$ or $\VFe$. Let us start with $\VFe$, the corresponding distribution is given in Fig(\ref{fig:V58-dist}). 
Let $\Omega$ denote the maximal cnc set corresponding to $\VFe$ and $\Omega^c$ denote its complement. 
We will partition a loop $l$ into two parts $\Omega_l \cap \Omega$ and $\Omega_l \cap \Omega^c$. 
The restriction of $\varphi$ to $\Omega_l \cap \Omega$ is determined by the outcome assignment corresponding to $\VFe$. We begin by considering the restriction of $\varphi$ to $\Omega_l \cap \Omega^c$. 
The region $\Omega^c$ consists of four triangles. Each of these triangles has exactly one deterministic edge. Let $C$ be one of those four triangles.  The intersection $\Omega_l\cap C$ is either empty or consists of two edges. 
There are two choices for the restriction of the sign $\varphi$ to this intersection, which is dictated by the distribution on $C$. All the possibilities are given in Fig.~(\ref{fig:triangle-loop}).
Observe that $\Omega_l \cap \Omega$ can be given by one of the following possibilities:
$$
\includegraphics[width=0.45\linewidth]{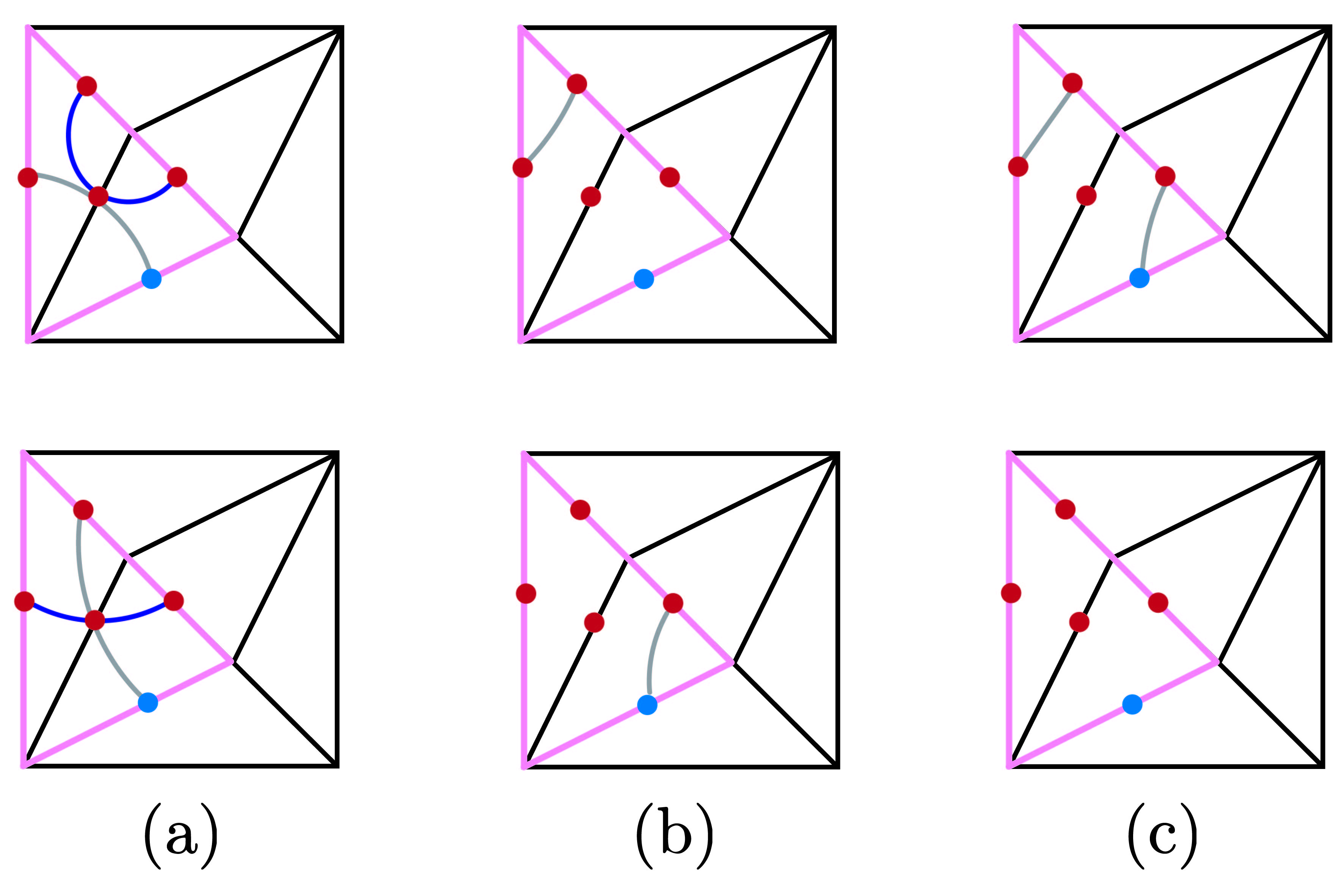}
$$
We analyze each case.
\begin{enumerate}
\item[(a)] There are two ways to complete the paths to a loop. The sign on $\Omega_l\cap \Omega^c$ is determined by two adjacent triangles $C,C'$. There are two possibilities for the sign on $(\Omega_l\cap \Omega^c)\cap (C\cup C')$. If $l$ is type $2$ then
we obtain the type $2$ neighbors given in the first two columns of  Fig.~(\ref{fig:T2NofT2}). The action of $\Stab_{G_1}(\VFe)$ is transitive by Lemma \ref{lem:G1-transitive-V58} on these neighbors; see Table (\ref{tab:V58-type2-neighbors}). A representative vertex in this orbit is
\begin{equation}\label{eq:V99}
\Vnn=\frac{1}{4}(\one + X\otimes X + X\otimes Y - Y\otimes Z + Z\otimes X - Z\otimes Y).
\end{equation} 
If $l$ is type $1$ we obtain  the type $1$ neighbors in   Fig.~(\ref{fig:T1NofT2}). By Lemma \ref{lem:G1-transitive-V58} $\Stab_{G_1}(\VFe)$ acts transitively; see Table (\ref{tab:V58-type1-neighbors}). A representative vertex in this orbit is $\VFS$ given in Eq.~(\ref{eq:V57}).

\item[(b)] This is similar to (a): Two ways to complete to a loop and two choices for the sign on the complement. We obtain the type $2$ neighbors in the last two columns of Fig.~(\ref{fig:T2NofT2}). The action of $\Stab_{G_1}(\VFe)$ is transitive by Lemma \ref{lem:G1-transitive-V58} on these neighbors; see Table (\ref{tab:V58-type2-neighbors}). A representative vertex in this orbit is
\begin{equation}\label{eq:V22}
 \Vtt =\frac{1}{4}(\one + X\otimes X - Y\otimes Y - Y\otimes Z - Z\otimes Y + Z\otimes Z).
\end{equation} 

\item[(c)] Top figure: There are two ways to complete to a loop. The sign on the complement is determined by two nonadjacent triangles. Hence there are four possibilities for the sign on the complement. We obtain the signed loops in Fig.~(\ref{fig:T1NoNofT2}).
By Lemma \ref{lem:not-neigh} the vertices at $p(1/2)$ are not neighbors of $\VFe$. Also in the proof of this lemma we see that $\Stab_{G_1}(\VFe)$ acts transitively; see Table (\ref{tab:V58-type1-non-neighbors}). Our representative vertex is
\begin{equation}\label{eq:V28}
\Vte=\frac{1}{4}(\one + Z\otimes X - Z\otimes Y + Z\otimes Z).
\end{equation}
Bottom figure: There is a unique loop on $\Omega^c$. However, no sign is compatible with the restrictions onto the triangles given in Fig.~(\ref{fig:triangle-loop}). This loop does not produce an edge in the graph that initiates from $\VFe$; see Lemma \ref{lem:not-neigh}.
\end{enumerate}
The distributions connecting $\VFe$ to $\VFS$, $\Vnn$ and $\Vtt$ are given in Fig.~(\ref{fig:signed-loop-58to57}), (\ref{fig:signed-loop-58to99}) and (\ref{fig:signed-loop-58to22}); respectively. 

For $\VFS$ given in Fig~(\ref{fig:V57-dist}) the argument is similar. Let $p(\alpha)$ be a path obtained from a signed loop such that $p(0)=\VFS$. The distribution $p(1/2)$ will consists of triangles with a single deterministic edge on the boundary. Hence it is a vertex of type $1$. However, we need to determine whether $p(\alpha)$ is an edge in $\MP_1$.
There are three cases to consider.
\begin{enumerate}
\item[(a)] $l$ is of type $2$: Then $p(\alpha)$ will be obtained from $\VFS$ by swapping a $1/2$ with $0$ in each triangle. This means that the common set of zeros between $\VFS$ and $p(1/2)$ is $6$. Therefore $p(\alpha)$ cannot be an edge.
\item[(b)] $l$ is of type $1$ and intersects two of the edges in $\set{X\otimes Y, Z\otimes Y, Y\otimes Y}$: Similarly $p(1/2)$ is a type $1$ vertex. Looking at the common zeros we see that there are $8$. However, as in the proof of Lemma \ref{lem:not-neigh} we can argue that Lemma \ref{lem:rank-two-triangles} to a pair of adjacent triangles to reduce the rank by $1$. This implies that the path $p(\alpha)$ is not an edge in $\MP_1$. 
\item[(b)] $l$ is of type $1$ and intersects one of the edges in $\set{X\otimes Y, Z\otimes Y, Y\otimes Y}$: Then $p(1/2)$ is a type $2$ vertex as listed in Fig.~(\ref{fig:T2NofT1}).  By Lemma \ref{lem:G1-transitive-V57} $\Stab_{G_1}(\VFS)$ acts transitively on them. The distribution connecting $\VFS$ to $\VFe$ is given in Fig.~(\ref{fig:signed-loop-58to57}). 
\end{enumerate}

}

\begin{figure}[h!]
\centering
\begin{subfigure}{.33\textwidth}
  \centering
  \includegraphics[scale = 0.055]{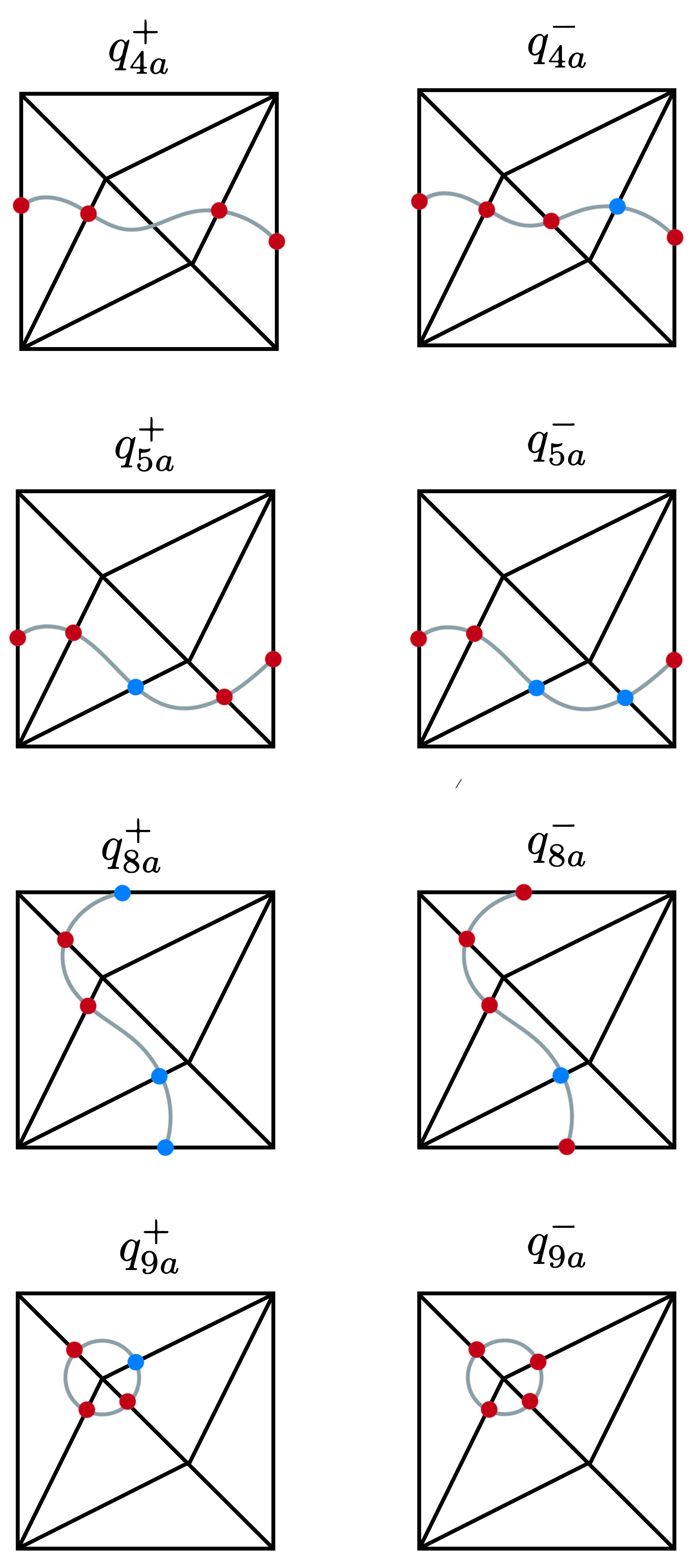}
  \caption{}
  \label{fig:T1NofT2}
\end{subfigure}%
\vspace{0.5cm}
\begin{subfigure}{.66\textwidth}
  \centering
  \includegraphics[scale = 0.055]{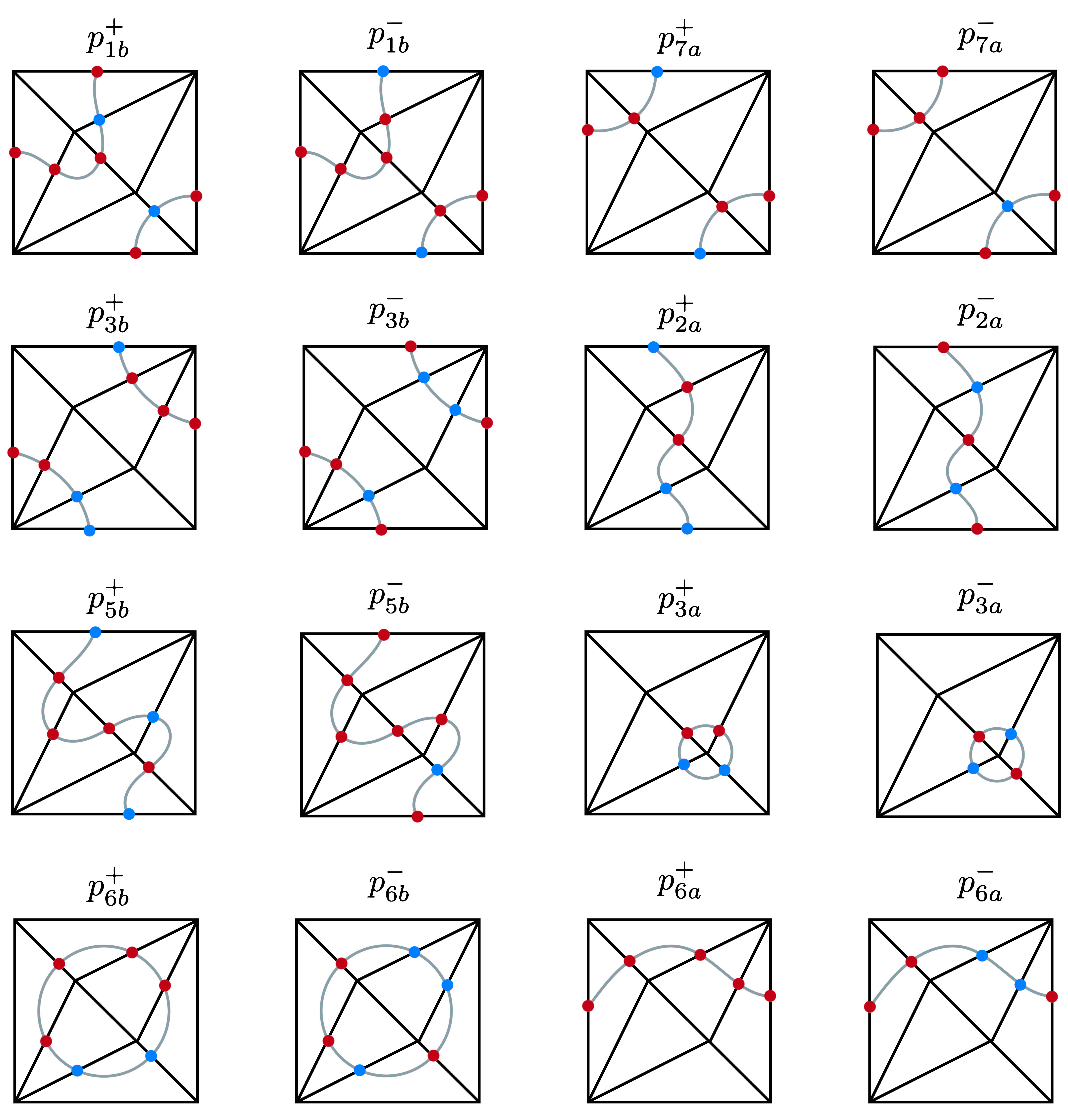}
  \caption{}
  \label{fig:T2NofT2}
\end{subfigure}
\caption{ (a) Type $1$ neighbors of $\VFe$.  (b) Type $2$ neighbors of $\VFe$. 
}
\label{fig:NofT2}
\end{figure}

\begin{figure}[h!]
\centering
\begin{subfigure}{.66\textwidth}
  \centering
  \includegraphics[scale = 0.055]{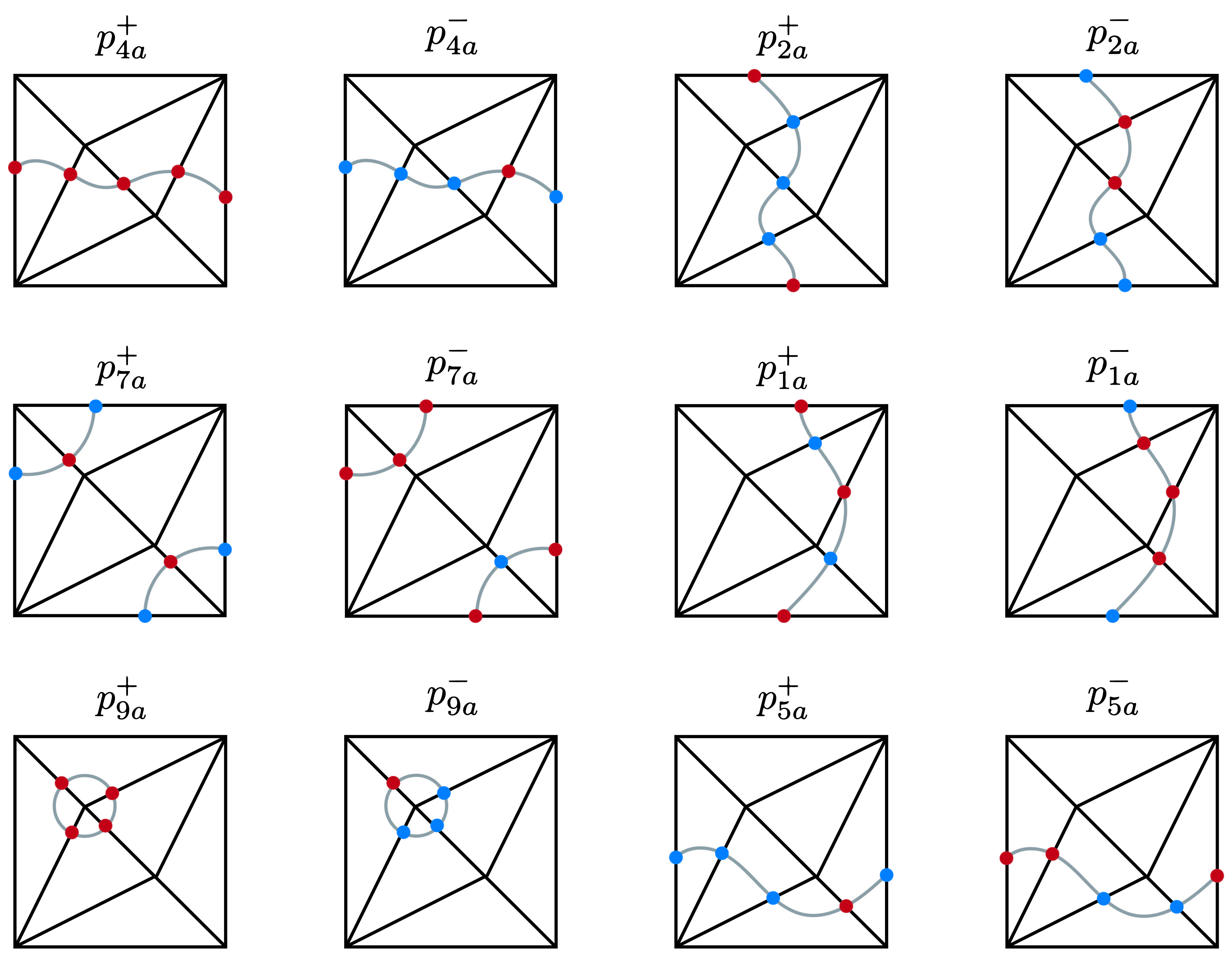}
  \caption{}
  \label{fig:T2NofT1}
\end{subfigure} 
\begin{subfigure}{.33\textwidth}
  \centering
  \includegraphics[scale = 0.055]{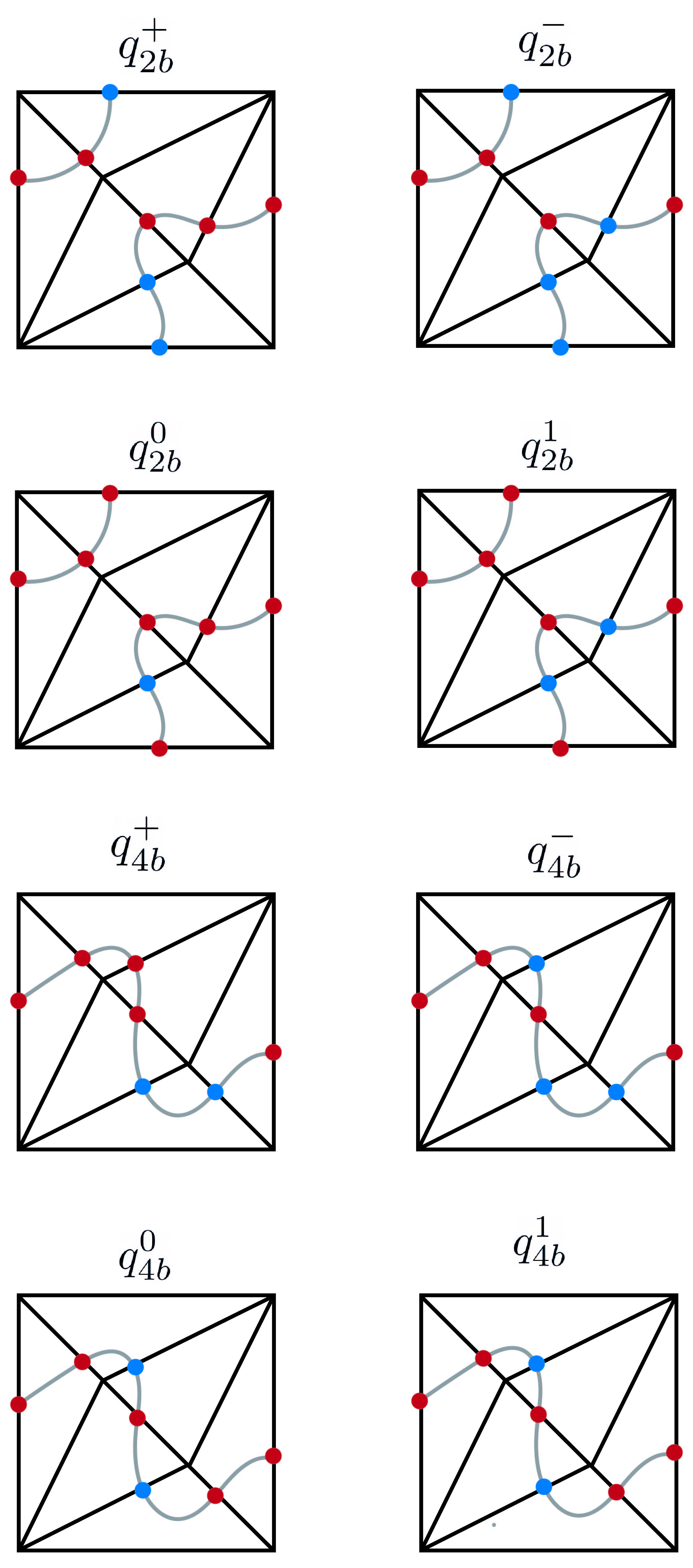}
  \caption{}
  \label{fig:T1NoNofT2}
\end{subfigure}
\caption{ (a) Type $2$ neighbors of $\VFS$. 
The neighbors $\sVfT$, $\sVTz$, $\sVee$, $\sVfn$, $\sVFo$, $\sVfe$ coincide with the type $2$ neighbors $\VfT$, $\VTz$, $\Vee$, $\Vfn$, $\VFo$, $\Vfe$ of the canonical type $2$ vertex $\VFe$; respectively.
 (b) Signed loops that connect $\VFe$ to a type $1$ vertex that is not a neighbor.
}
\label{fig:NofT1-NoNofT2}
\end{figure}







\section{Applications}\label{sec:applications}
Mermin polytopes $\MP_{\beta}$, besides having an interesting structure in their own right, also have utility in understanding aspects of quantum foundations ($\MP_{0}$) as well as quantum computation ($\MP_{1}$). We explore these topics here.

\subsection{A new topological proof of Fine's theorem}\label{sec:fine-thm}
Here we combine the current results on Mermin polytopes together with the topological framework of \cite{okay2022simplicial} to provide a novel proof of Fine's theorem \cite{fine1982hidden}. Before proceeding to the precise statement of Fine's theorem, however, we recall from Section~\ref{sec:mp0-case} that the CHSH scenario consists of four measurements $x_{i}$, $y_{j}$ and four measurement contexts consisting of pairs $\{x_{i},y_{j}\}$, where $i,j\in\ZZ_{2}$. Our first goal will be to represent this scenario topologically.

In the simplicial approach to contextuality, first introduced in \cite{okay2022simplicial}, and discussed briefly in Section~\ref{sec:top-rep}, measurement contexts are represented by simplicies (triangles) and to each simplex we associate a probability distribution. The collection of distributions on each simplex constitutes a \emph{simplicial distribution}, which generalizes the notion of nonsignaling distributions. In particular, a well-studied class of measurement scenarios are the bipartite scenarios which in the simplicial framework are given by collections of triangles (i.e., $2$-simplicies) where edges (i.e., $1$-simplicies) represent measurements; not necessarily local. Nonsignaling (or compatibility) constraints are then formalized as the gluing of triangles along edges. For instance, the $(2,1,2)$ Bell scenario is just a single triangle as in Fig.~(\ref{fig:simple-scenarios}), while the so-called diamond scenario consists of two triangles glued along a single edge; see Fig.~(\ref{fig:diamond}).
The diamond scenario will prove useful for our proof of Fine's theorem.  
\begin{figure}[h!]
\centering
\includegraphics[width=.15\linewidth]{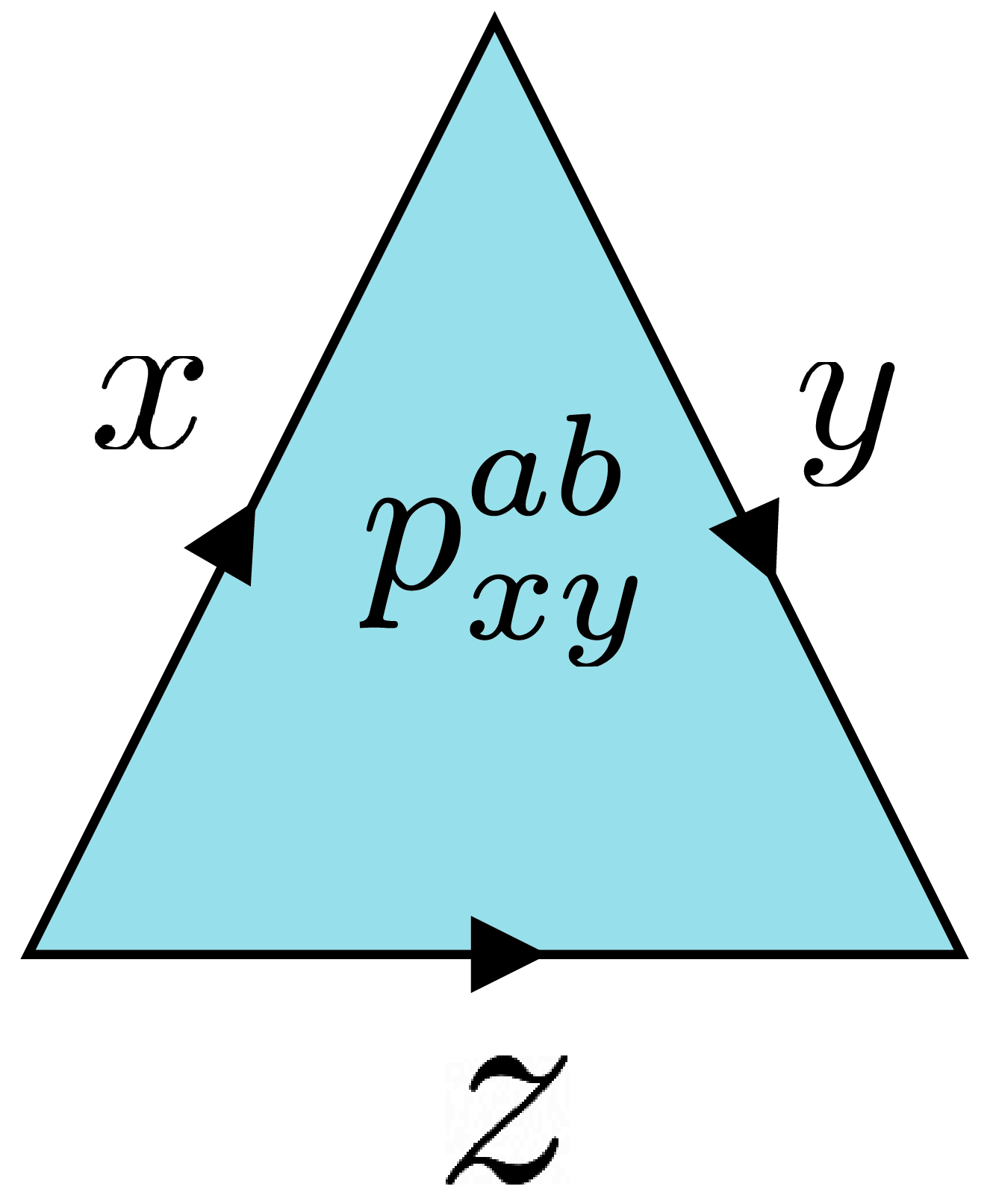}
\caption{
The $(2,1,2)$ Bell scenario in the simplicial setting.  
}
\label{fig:simple-scenarios}
\end{figure}


A topological representation of the CHSH scenario is given by four triangles glued along their $x_{i}$, $y_{j}$ edges. 
We assemble these four triangles into 
a punctured torus as in Fig.~(\ref{fig:punctured-torus}). That is, as a Mermin scenario with the $\{x_{0}\oplus y_{0},x_{1}\oplus y_{1},z\}$ and $\{x_{0}\oplus y_{1},x_{1}\oplus y_{0},z\}$ contexts removed. For convenience we denote the CHSH scenario as $T_{0}$ and the Mermin scenario as $T$.
\begin{figure}[h!]
\centering
  \includegraphics[width=.3\linewidth]{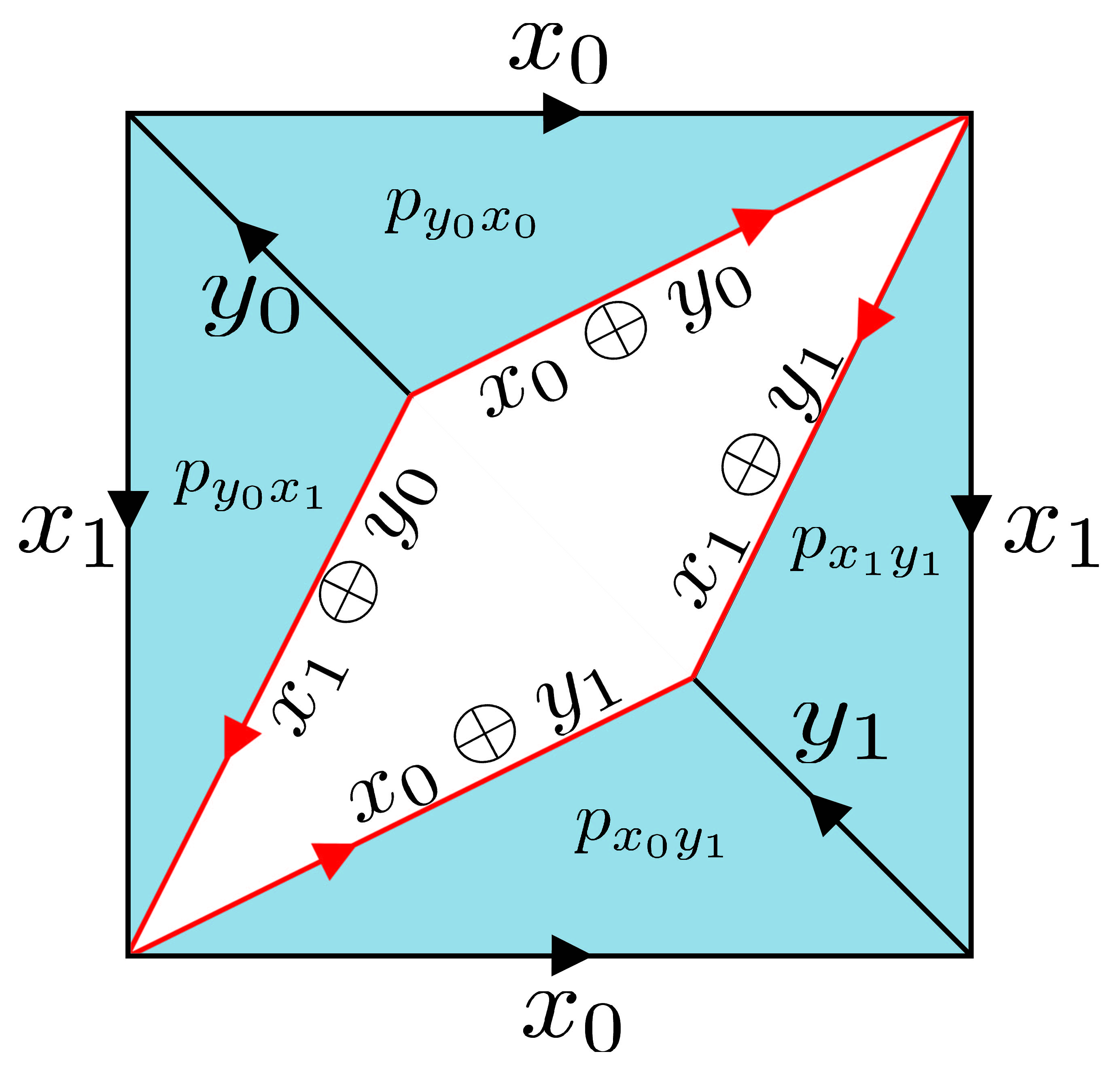}
\caption{ %
CHSH scenario represented topologically as a punctured torus.  
}
\label{fig:punctured-torus}
\end{figure}

\noindent Before we analyze this scenario, let us establish some terminology.
\Def{%
\cite[Def. 3.10]{okay2022simplicial}
A simplicial distribution $p$ is called \emph{noncontextual} if it can be written as a convex combination of \emph{deterministic} distributions.  
Otherwise we call it \emph{contextual}.
}




\noindent 
This notion of contextuality specializes to the usual notion for the CHSH scenario.
As is well-known, the CHSH scenario is contextual since there are distributions, the so-called Popescu-Rohrlich boxes \cite{popescu1994quantum}, which cannot be written as a probabilistic mixture of deterministic distributions. It was established by CHSH \cite{clauser1969proposed} that necessary for a distribution on the CHSH scenario to be noncontextual is that the following CHSH inequalities be satisfied:
\begin{equation}\label{eq:CHSH234}
\begin{aligned}
0\leq p_{x_0\oplus y_0}^0 + p_{x_0\oplus y_1}^0 + p_{x_1\oplus y_0}^0 - p_{x_1\oplus y_1}^0 \leq 2&\\
0\leq p_{x_0\oplus y_0}^0 + p_{x_0\oplus y_1}^0 - p_{x_1\oplus y_0}^0 + p_{x_1\oplus y_1}^0 \leq 2&\\
0\leq p_{x_0\oplus y_0}^0 - p_{x_0\oplus y_1}^0 + p_{x_1\oplus y_0}^0 + p_{x_1\oplus y_1}^0 \leq 2&\\
0\leq -p_{x_0\oplus y_0}^0 + p_{x_0\oplus y_1}^0 + p_{x_1\oplus y_0}^0 + p_{x_1\oplus y_1}^0 \leq 2&.
\end{aligned}
\end{equation}
Fine \cite{fine1982hidden,fine1982joint} then established the sufficiency of these inequalities:

\Thm{[Fine]\label{thm:fine}%
A distribution on the CHSH scenario is noncontextual if and only if the CHSH inequalities are satisfied.
}

To provide a new proof of Fine's theorem we will rely on a couple of key observations. One is that $T_{0}$ can be embedded into $T$ by inclusion, which allows us to study the CHSH scenario via the Mermin scenario. The other is the following immediate consequence of the vertex classification of $\MP_0$.


\Cor{\label{cor:Mermin-noncontextual}
Any distribution on the Mermin torus, whose topological realization is given in Fig.~(\ref{fig:mermin-scenario}), is noncontextual.
}
\Proof{
The distributions on the Mermin torus satisfying the nonsignaling conditions given in Eq.~(\ref{eq:Mermin-ns}) constitute the polytope $\MP_0$. In Theorem \ref{thm:VertexClassification} part (1) we have seen that all the vertices of this polytope are deterministic. Therefore any distribution on the Mermin torus can be written as a probabilistic mixture of deterministic distributions.
}

Next, we prove Proposition \ref{pro:noncontextual-torus}, which is stated in a more topological form below.

\begin{figure}[h!] 
  \centering
  \includegraphics[width=.5\linewidth]{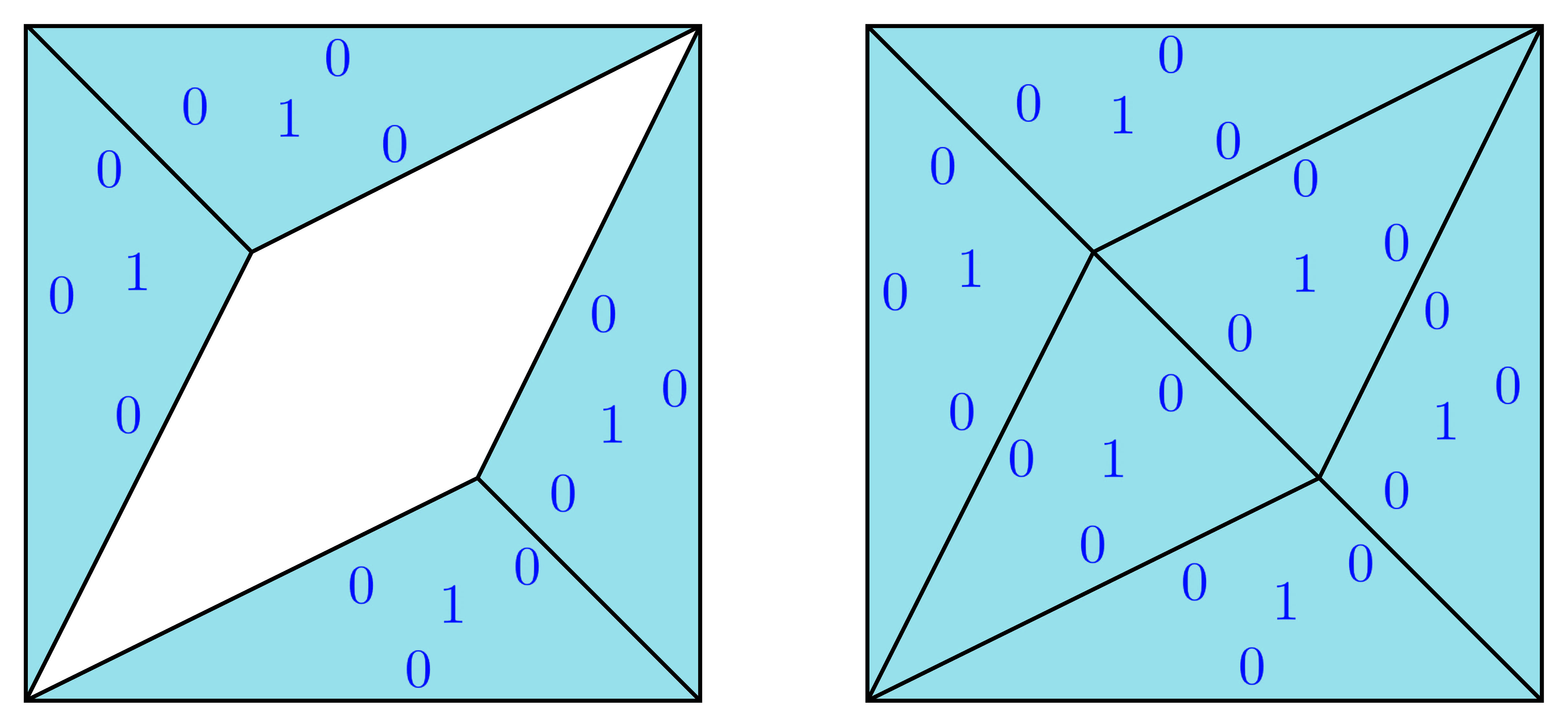}
\caption{(Left) The deterministic distribution $\delta^s$ on the punctured torus corresponding to the outcome assignment $s:x_0\mapsto 1,\;x_1,y_j\mapsto 0$. (Right) The extension $\tilde \delta^s$ of the distribution to the torus. 
}
\label{fig:extension}
\end{figure}  
\Pro{\label{pro:simpcont-4.7} 
A distribution $p$ on the punctured torus $T_0$ extends to a distribution on the torus $T$ if and only if $p$ is noncontextual.
}
\begin{proof} 
This result is a special case of the extension result proved in \cite[Pro. 4.7]{okay2022simplicial}.
The key observation is that an outcome assignment 
$s:\set{x_i,y_j:\,i,j,\in \ZZ_2}\to \ZZ_2$ specifies both a deterministic distribution $\delta^s$ on $T_0$ and a deterministic distribution on $T$, which we denote by $\tilde \delta^s$. See Fig.~(\ref{fig:extension}). Assume that $p$ is noncontextual. We can express $p$ as a probabilistic mixture $\sum_s \lambda(s)\, \delta^s $ of deterministic distributions. Then $\tilde p$ defined as the probabilistic mixture $\sum_s \lambda(s)\, \tilde \delta^s $ is the desired extension.
Conversely, assume that $p$ extends to a distribution $\tilde p$ on the torus. By Corollary
 \ref{cor:Mermin-noncontextual} 
 every distribution in $\MP_0$ is noncontextual, i.e., can be expressed as a probabilistic mixture of deterministic distributions $\tilde \delta^s$.
 Then restricting onto $T_0$ we can write  $p$ as a probabilistic mixture of $\delta^s$.
Thus $p$ is noncontextual.
\end{proof}

 Since an extension from $T_{0}$ to $T$ amounts to filling in the diamond whose boundary is given by the measurements $x_{i}\oplus y_{j}$, $i,j\in\ZZ_{2}$, it is useful to establish 
the following fact:

\begin{lem} 
\label{lem:diamond-chsh} 
A distribution $p$ on the boundary of the diamond scenario extends to the diamond if and only if $p$ satisfies the CHSH inequalities in Eq.~(\ref{eq:CHSH234}).
\end{lem}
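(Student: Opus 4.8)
The plan is to reduce the extension problem to choosing a single real number---the marginal on the interior edge $z$ shared by the two triangles of the diamond---and to show that the feasibility of this choice is controlled exactly by the inequalities in Eq.~(\ref{eq:CHSH234}). Write $a=x_0\oplus y_0$, $b=x_1\oplus y_1$ for the two boundary edges of the first triangle and $c=x_0\oplus y_1$, $d=x_1\oplus y_0$ for those of the second, so that the diamond is the pair of contexts $\{a,b,z\}$ and $\{c,d,z\}$, both with $\beta=0$ since $z=a\oplus b=c\oplus d$. A distribution on the boundary is the data of the four expectation values $\bar a,\bar b,\bar c,\bar d\in[-1,1]$. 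By Lemma~\ref{lem:dist-by-edge} and the remark following it, extending to the diamond is the same as fixing the interior marginal $\bar z$: once $\bar z$ is chosen the distribution on each triangle is determined by its three edge marginals, and the nonsignaling condition along $z$ holds automatically because the same $\bar z$ is used on both sides. So the only question is for which $\bar z$ both triangle distributions are nonnegative.

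Next I would convert nonnegativity into bounds on $\bar z$. Applying Eq.~(\ref{eq:pCab}) with $\beta=0$ to the first triangle, the four conditions $p^{st}\ge 0$ read $1+(-1)^s\bar a+(-1)^t\bar b+(-1)^{s+t}\bar z\ge 0$ for $s,t\in\ZZ_2$; these rearrange into the two lower bounds $\bar z\ge -1-\bar a-\bar b$ and $\bar z\ge -1+\bar a+\bar b$ together with the two upper bounds $\bar z\le 1+\bar a-\bar b$ and $\bar z\le 1-\bar a+\bar b$. The second triangle gives the same bounds with $(\bar a,\bar b)$ replaced by $(\bar c,\bar d)$. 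A common $\bar z$ exists if and only if the interval cut out by these eight constraints is nonempty, i.e. if and only if each of the four lower bounds is at most each of the four upper bounds.

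Finally I would split the resulting sixteen comparisons into two families. The eight comparisons pairing a lower and an upper bound from the \emph{same} triangle collapse to $|\bar a|\le 1$, $|\bar b|\le 1$, $|\bar c|\le 1$, $|\bar d|\le 1$, which hold automatically because the boundary marginals are genuine probabilities, so they impose nothing. The remaining eight comparisons pair a lower bound of one triangle with an upper bound of the other, and each takes the form $\pm\bar a\pm\bar b\pm\bar c\pm\bar d\le 2$ with an odd number of minus signs; since there are exactly $\binom{4}{1}+\binom{4}{3}=8$ such sign patterns, these are all of them. Substituting $\bar m=2p^0_m-1$ via Eq.~(\ref{eq:expectation-marginal}) identifies these eight inequalities with the upper and lower halves of the four CHSH inequalities in Eq.~(\ref{eq:CHSH234}), which gives the claimed equivalence. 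The only real labor is the bookkeeping of matching each cross-comparison to its CHSH counterpart, which is mechanical; the one conceptual point worth stating explicitly is that the same-triangle comparisons are vacuous given $\bar m\in[-1,1]$, so that the CHSH inequalities alone decide feasibility of $\bar z$.
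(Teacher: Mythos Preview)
Your proof is correct and follows essentially the same approach as the paper's: both reduce the extension problem to the feasibility of a single parameter (the interior-edge marginal) and eliminate it via Fourier--Motzkin, with the paper working in probability coordinates $p_m^0$ and you working in expectation coordinates $\bar m=2p_m^0-1$. Your explicit sign-pattern count and your observation that the same-triangle comparisons are vacuous make the elimination slightly more transparent than the paper's absolute-value manipulation, but the argument is the same.
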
 
\Proof{%
This is proved in \cite[Pro 4.9]{okay2022simplicial}, we include the proof here for the convenience of the reader.
For our purposes we will assume that the diamond $Z$ is such that the triangles are glued along their XOR edge; see Fig.~(\ref{fig:diamond-extension}).
\begin{figure}[h!]
\centering
\begin{subfigure}{.49\textwidth}
  \centering
  \includegraphics[width=.3\linewidth]{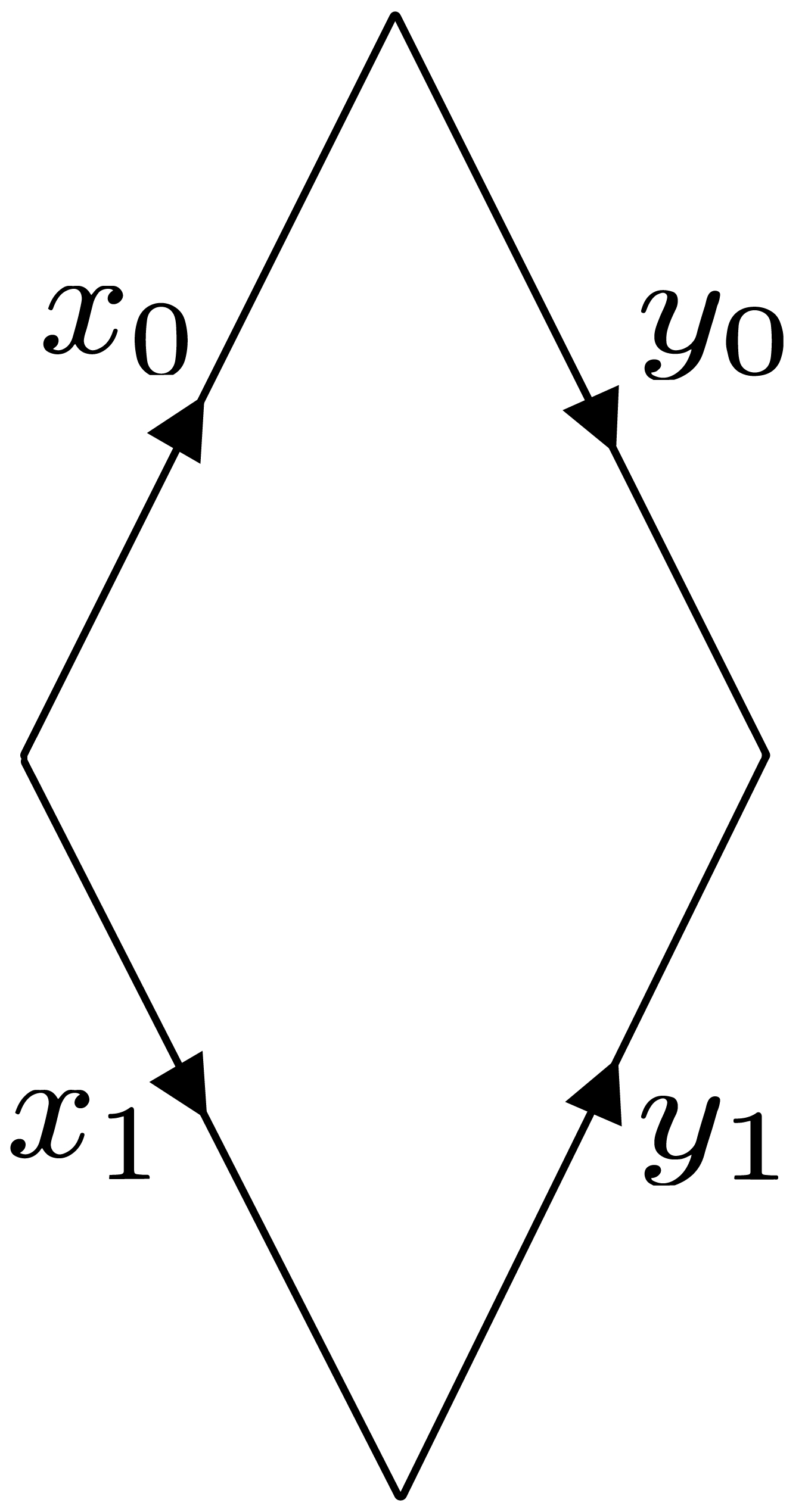}
  \caption{}
  \label{fig:diamond-ext}
\end{subfigure}
\begin{subfigure}{.49\textwidth}
  \centering
  \includegraphics[width=.3\linewidth]{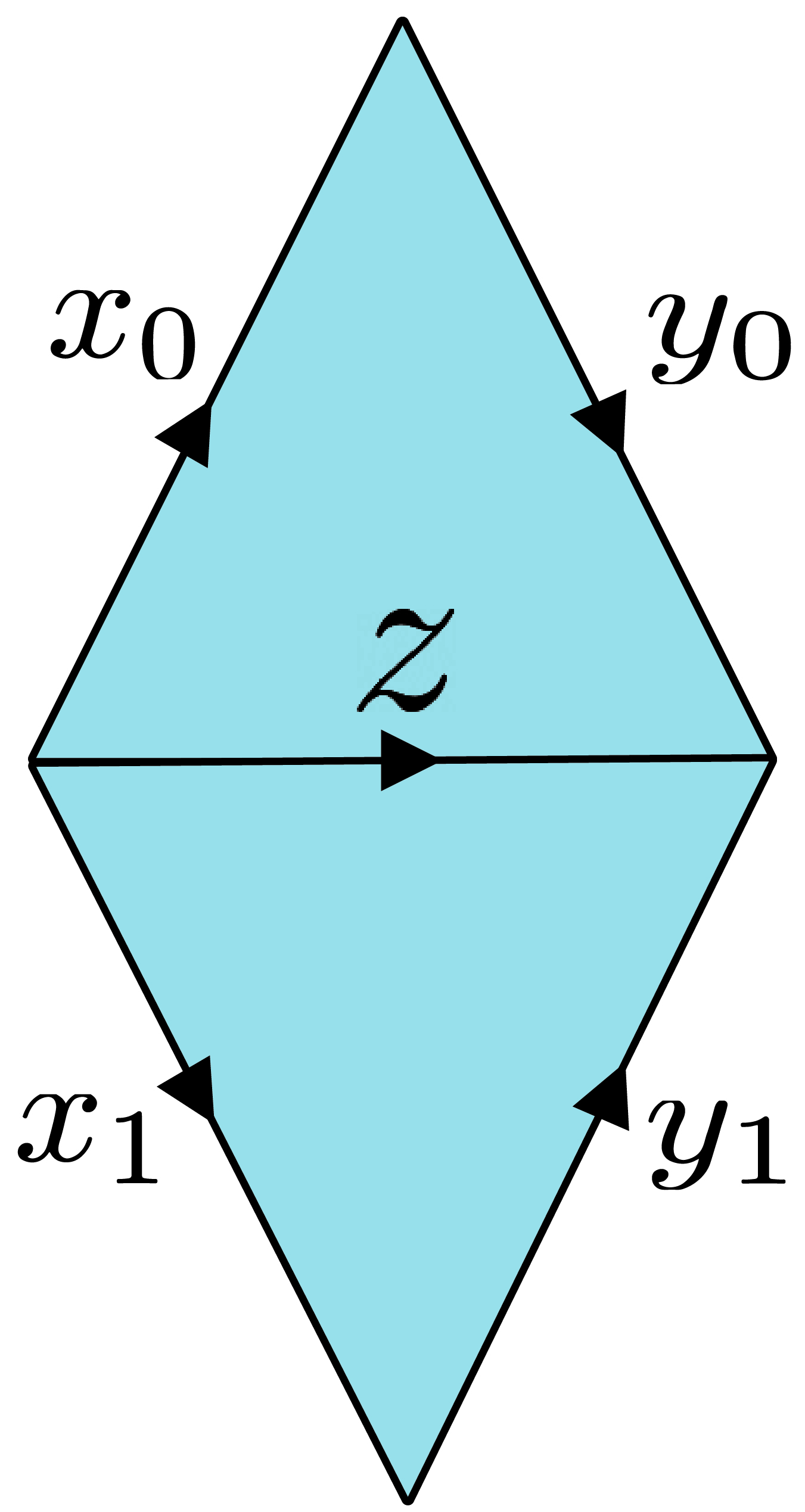}
  \caption{}
  \label{fig:diamond}
\end{subfigure}
\caption{(a) The boundary of the diamond. (b) Topological representation of the diamond scenario.
}
\label{fig:diamond-extension}
\end{figure}

The argument for the other choices is similar. The distribution $p_{\partial Z}$ on the boundary of the diamond is specified by $(p_{x_{0}}^0,p_{y_{0}}^0,q_{x_{1}}^0,q_{y_{1}}^0)\in [0,1]^4$. On the other hand, a distribution $p_{Z}$ on the diamond, requires compatible distributions $p_{xy}^{ab}$ and $q_{vw}^{rs}$, which by using Eq.~(\ref{eq:dist-by-edge}) can be specified by $(p_{x_{0}}^0,p_{y_{0}}^0,q_{x_{1}}^0,q_{y_{1}}^0,q_{z}^{0})$, where $q_{z}^{0}$ is the marginal along the common edge. It is possible to extend from   $\partial Z$ to $Z$ if and only if there exists a $q_{z}^{0}$ such that all $p_{x_{0}y_{0}}^{ab}, q_{x_{1}y_{1}}^{rs} \geq 0$. This occurs precisely when 
\begin{eqnarray}
\max\set{|p_{x_{0}}^0+p_{y_{0}}^0-1|,|q_{x_{1}}^0+q_{y_{1}}^0-1|} \leq  q_{z}^{0} \leq \min\set{1-|p_{x_{0}}^0-p_{y_{0}}^0|,1-|q_{x_{1}}^0-q_{y_{1}}^0|}.\notag
\end{eqnarray}
By Fourier-Motzkin elimination this single inequality is equivalent to the following four 
$$
\begin{aligned}
|p^0_{x_{0}}+p^0_{y_{0}}-1| &\leq 1-|p^0_{x_{0}}-p^0_{y_{0}}|\\
|p^0_{x_{0}}+p^0_{y_{0}}-1| &\leq 1-|q^0_{x_{1}}-q^0_{y_{1}}|\\
|q^0_{x_{1}}+q^0_{y_{1}}-1| &\leq 1-|p^0_{x_{0}}-p^0_{y_{0}}|\\
|q^0_{x_{1}}+q^0_{y_{1}}-1| &\leq 1-|q^0_{x_{1}}-q^0_{y_{1}}|,
\end{aligned}
$$
in addition to the trivial inequalities corresponding to $0\leq p_{i_{0}}^{0}, q_{i_{1}}^{0}\leq 1$, where $i = x,y$. Expanding the absolute values gives the inequalities
\begin{equation}\label{eq:CHSH-prob}
\begin{aligned}
0&\leq  p_{x_{0}}^0 + p_{y_{0}}^0 + q_{x_{1}}^0 - q_{y_{1}}^0 \leq 2 \\
0&\leq  p_{x_{0}}^0 + p_{y_{0}}^0 - q_{x_{1}}^0 + q_{y_{1}}^0 \leq 2 \\
0&\leq  p_{x_{0}}^0 - p_{y_{0}}^0 + q_{x_{1}}^0 + q_{y_{1}}^0 \leq 2 \\
0&\leq  -p_{x_{0}}^0 + p_{y_{0}}^0 + q_{x_{1}}^0 + q_{y_{1}}^0 \leq 2. 
\end{aligned}
\end{equation}
These equations are formally identical to the CHSH inequalities appearing in Eq.(\ref{eq:CHSH234}).
}



\begin{proof}[{\bf Proof of Theorem~\ref{thm:fine}}]
Let $p$ be a distribution on $T_0$ and $p_\partial$ denote the restriction (marginalization) of $p$ to the boundary of $T_0$. 
Observe that the torus is obtained from the punctured torus by filling in the diamond in the middle. Therefore $p$ extends to $T$ if and only if $p_\partial$ extends to the diamond.
Combining this observation with Proposition \ref{pro:simpcont-4.7} and Lemma \ref{lem:diamond-chsh} gives the desired result. 
\end{proof}

 \subsection{Decomposing the $2$-qubit $\Lambda$-polytope}


In this section, we provide a decomposition for $\Lambda_2$, the $\Lambda$-polytope for $2$-qubits, using the Mermin polytope $\MP_1$. This decomposition will provide valuable insight into the vertex enumeration problem for $\Lambda$-polytopes. This problem is a fundamental mathematical obstacle in the complexity analysis of the $\Lambda$-simulation algorithm introduced in \cite{zurel2020hidden}. 

\begin{figure}[h!]
\centering
\begin{subfigure}{.49\textwidth}
  \centering
  \includegraphics[width=.7\linewidth]{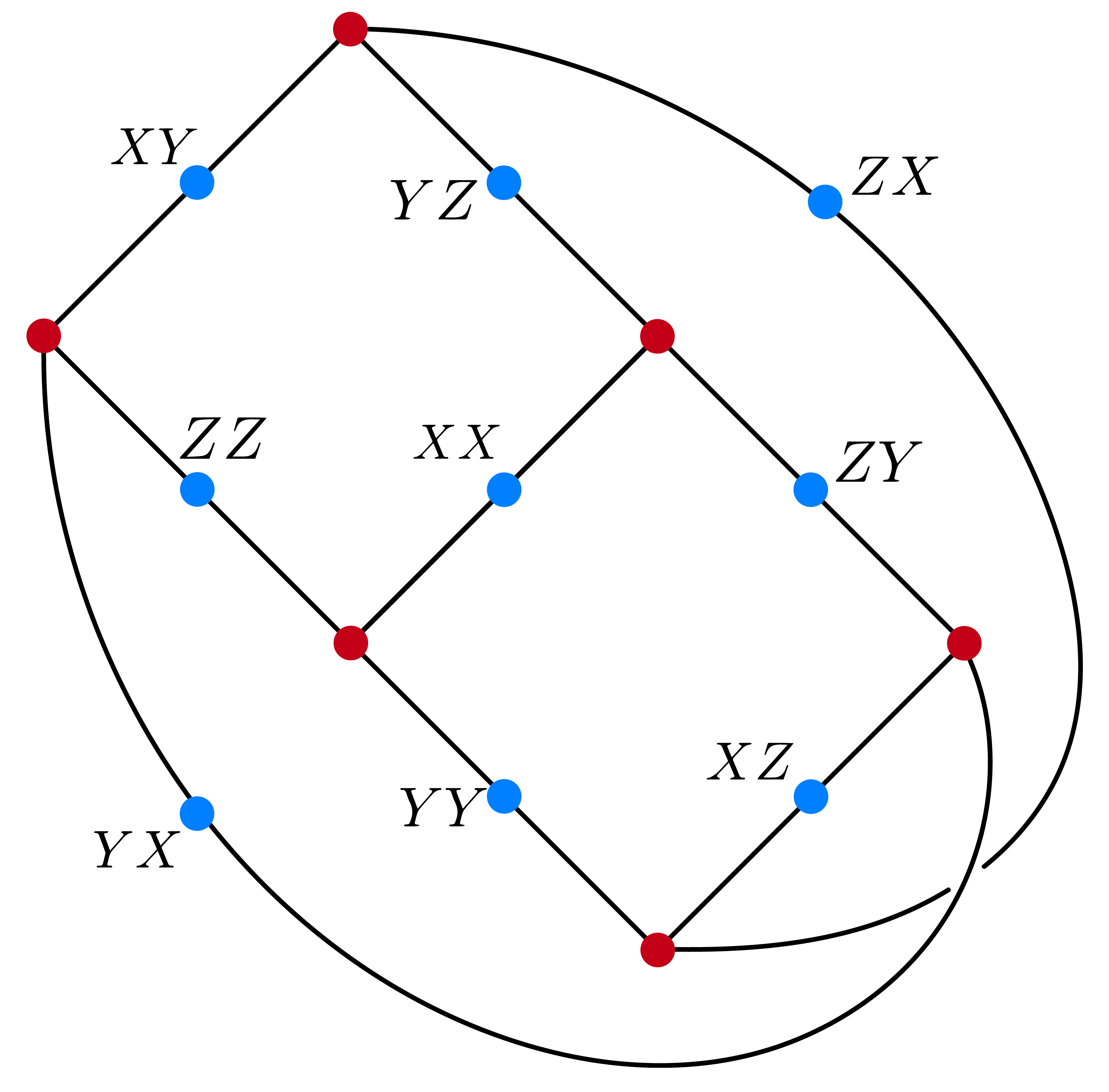}
  \caption{}
  \label{fig:I2-nloc}
\end{subfigure}%
\begin{subfigure}{.49\textwidth}
  \centering
  \includegraphics[width=.7\linewidth]{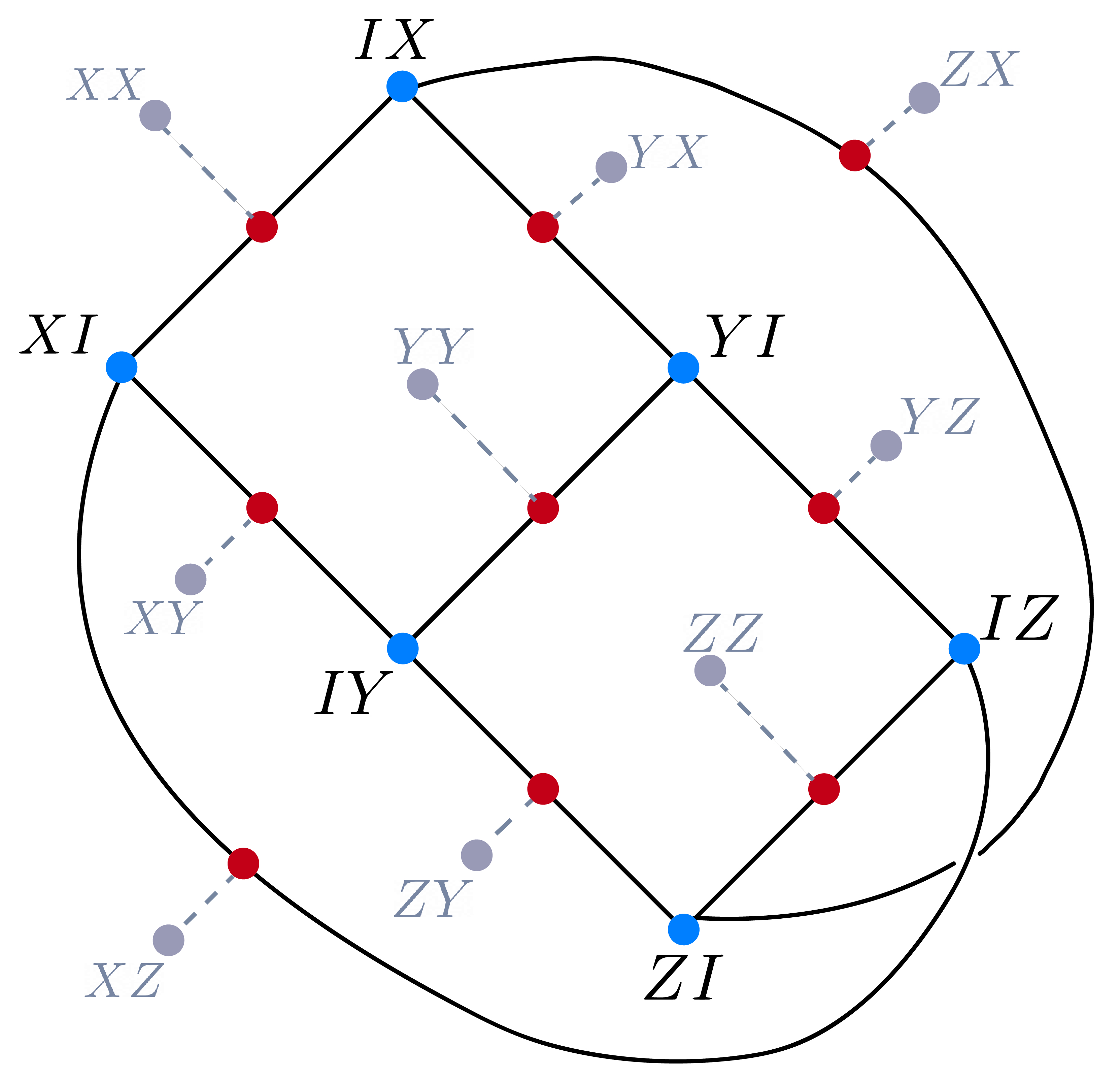}
  \caption{}
  \label{fig:I2-loc}
\end{subfigure}
\caption{ (a) Subspaces in $\iI_2^\nloc$ and their intersections (zero subspace is omitted).  
(b)  Subspaces in $\iI_2^\loc$ and their intersections (nonlocal operators do not belong to this set, they are only indicated to reveal the connection to the nonlocal part).
}
\label{fig:I2}
\end{figure}

Recall the set $\sS_2$ of $2$-qubit stabilizer states and the (non)local version from Eq.~(\ref{eq:stabilizer-loc-nloc}).
The $2$-qubit $\Lambda$-polytope is defined as follows:
$$
\Lambda_2 = \set{X\in \Herm((\CC^2)^{\otimes 2})):\, \Tr(X)=1,\; \Tr(X \Pi)\geq 0,\;\forall\, \Pi\in \sS_2  }
$$ 
Our decomposition will be derived from the local vs. nonlocal decomposition of Pauli operators introduced in Section \ref{sec:MP1}.  
Let us write $E^\loc$ and $E^\nloc$ for the subsets of $E$ corresponding to the local and nonlocal Pauli operators; respectively. This gives us the following decomposition:
$$
E = \set{0} \sqcup E^\loc \sqcup E^\nloc.
$$
Let $\iI$ denote the set of maximal isotropic subspaces in $E$. 
This set also decomposes into a local $\iI^\loc$ and a nonlocal part $\iI^\nloc$; see Fig.~(\ref{fig:I2}).
Recall that the Mermin scenario $(M,\cC)$ can be identified with $(E^\nloc,\iI^\nloc)$ via the map in Eq.~(\ref{eq:embedding-M-E}).  
The function $\beta_1: \cC \to \ZZ_2$ extends to a function $\tilde \beta: \iI\to \ZZ_2$ where $\tilde \beta(C)=0$ for $C\in \iI^\loc$. We begin with a result that is a local version of Lemma \ref{lem:M1-quantum}. 
We define the local $2$-qubit $\Lambda$-polytope: 
$$
\Lambda_2^\loc = \set{X\in \Herm((\CC^2)^{\otimes 2})):\, \Tr(X)=1,\; \Tr(X \Pi)\geq 0,\;\forall\, \Pi\in \sS_2^\loc  }.
$$ 
Note that by definition 
$
\Lambda_2 \subset \Lambda_2^\loc.
$
This local polytope is, in fact, a well-known nonsignaling polytope. The $(2,3,2)$ Bell scenario consists of
\begin{itemize}
\item the measurement set $M_{232}=
\set{x_i,y_j:\, i,j,\in \ZZ_3}$,
\item the collection $\cC_{232}$ of contexts $C_{ij}$, where $i,j\in \ZZ_3$, given by
$$
C_{ij} = \set{x_i,y_j}.
$$
\end{itemize}

\Lem{\label{lem:Lambda-local}
The local polytope $\Lambda^\loc_2$ can be identified with  the nonsignaling polytope $\NS_{232}$ of the $(2,3,2)$ Bell scenario. 
}
\Proof{
The argument is similar to the proof of Lemma \ref{lem:M1-quantum}. 
An operator $X\in \Lambda_2^\loc$ specifies a distribution $p_X$ in $\NS_{\iI^\loc,0}$ (see Eq.~(\ref{eq:NSbeta}) for the definition of the polytope) via the Born rule.  
For the bijection we need an inverse map, which comes from by first marginalizing to a single measurement and then computing the expectation $\Span{A}_X$ of the corresponding Pauli operator. 
The identification of $\NS_{\iI^\loc,0}$ with the nonsignaling polytope $\NS_{232}$ follows from realizing the measurements in the Bell scenario as quantum mechanical measurements
\begin{equation}\label{eq:embedding-local}
\begin{aligned}
x_0 \mapsto X\otimes \one,\;\;\; x_1 \mapsto Y\otimes \one,\;\;\; x_2 \mapsto Z\otimes \one\\
y_0 \mapsto \one\otimes X,\;\;\; y_1 \mapsto \one\otimes Y,\;\;\; y_2 \mapsto \one\otimes Z.
\end{aligned}
\end{equation}
}

Let $\MP_1^\RR$ denote the Mermin polytope for quasiprobability distributions; see Definition \ref{def:MerminPolytope}. 
We introduce an important map
\begin{equation}\label{eq:ext}
\ext: \NS_{232} \to \MP_1^\RR
\end{equation}
using the identification of Lemma \ref{lem:Lambda-local}. 
For the explicit description of the $\ext$ map  we need to extend the $(2,3,2)$ Bell scenario $(M_{232},\cC_{232})$  
by including the nonlocal measurements. 
Formally we introduce an extended scenario:
\begin{itemize}
\item $\tilde M= M_{232} \sqcup \set{x_i\oplus y_j\,:\,i,j\in \ZZ_3}$,
\item $\tilde \cC =\set{\tilde C_{ij}\,:\, i,j\in \ZZ_3 }$ where $\tilde C_{ij}=C_{ij} \sqcup \set{x_i\oplus y_j}$.
\end{itemize}  
Now we are ready to describe the $\ext$ map explicitly.
Let $d=\set{d_{C_{ij}}}_{i,j\in \ZZ_3}$   be a nonsignaling distribution defined on the $(2,3,2)$ Bell scenario. 
We define $\tilde d$ as a nonsignaling distribution on $(\tilde M,\tilde \cC)$ by setting 
$$
\tilde d_{\tilde C_{ij}}(s) = \left\lbrace
\begin{array}{ll}
d_{C_{ij}}(s|_{C_{ij}}) & s(x_i\oplus y_j) = s(x_i) + s(y_j) \\
0 & \text{otherwise}.
\end{array}
\right.
$$
The mapping in Eq.~(\ref{eq:embedding-local}) can be used to define an embedding $M_{232} \subset E$. Together with the embedding of Eq.~(\ref{eq:embedding-M-E}) we   obtain a local vs nonlocal decomposition
$$
E = M_{232} \sqcup M. 
$$
With this convention we will give the explicit form of the $\ext$ map on a context of the form 
$$C=\set{(v,w), (v',w'), (v+v',w+w') }$$
 with $\omega((v,w),(v',w'))=0$. 
For $s\in \ZZ_2^{C}$ we set $a=s(v , w)$, $b=s(v' , w')$ and $c=a+b+\beta((v,w),(v',w'))$.
Then we have
\begin{equation}\label{eq:ext-def}
(\ext\, d)_C(s) =  \frac{1}{2}\left( \tilde d|_{\set{(v, w)}}(a)  +\tilde d|_{\set{(v' , w')}}(b) -\tilde d|_{\set{(v+w,v'+w')}}(c+1)     \right)
\end{equation}
if $s(v+w,v'+w') = s(v , w) + s(v', w')+\beta((v,w),(v',w'))$ and $(\ext\,d)_C(s)=0$ otherwise.  

\Thm{\label{thm:Lambda2NS}
The polytope $\Lambda_2$ is precisely the subpolytope of the nonsignaling polytope $\NS_{232}$ for the $(2,3,2)$ Bell scenario given by those distributions that map to a probability distribution in $\MP_1$ under the $\ext$ map given in Eq.~(\ref{eq:ext-def}).
}
\Proof{
Using the identification  given in  Lemma \ref{lem:Lambda-local} and the operator-theoretic description of $\MP_1^\RR$ in Lemma \ref{lem:M1-quantum} 
 the $\ext $ map given in Eq.~(\ref{eq:ext}) can be identified with the map  
\begin{equation}\label{eq:local-to-Mermin}
\Lambda_2^\loc \to \MP_1^\RR
\end{equation}
obtained by sending $X$  to the operator $\bar X$ such that $\Span{\bar X}_A = \Span{X}_A$ for nonlocal Pauli operators $A$ (including $\one$) and $\Span{\bar X}_A = 0$ for the remaining local Pauli operators.  
Those operators $X\in \Lambda_2^\loc$ which give a probability distribution on the Mermin scenario, instead of a quasiprobability distribution, are precisely those that come from $\Lambda_2$. 
}

Theorem \ref{thm:Lambda2NS} gives a description of $\Lambda_2$ in terms of well-understood polytopes: $\NS_{232}$ whose vertices are described in \cite{jones2005interconversion} and $\MP_1$ described in Theorem \ref{thm:VertexClassification}.

\section{Conclusion}\label{sec:conclusion}
Motivated by a classic example of contextuality known as Mermin's square \cite{mermin1993hidden}, and its topological realization given in \cite{Coho}, in this paper we considered variations of this scenario parametrized by a function $\beta$
and studied the corresponding nonsignaling polytopes $\MP_{\beta}$. We showed that these polytopes 
fall into two equivalence classes, determined by $[\beta]$, which has a cohomological interpretation \cite{Coho}. Among our main results is the characterization of the vertices of $\MP_{\beta}$. We demonstrated that all vertices of $\MP_{0}$ are deterministic, which facilitates a novel proof of Fine's theorem \cite{fine1982hidden,fine1982joint}. On the other hand, $\MP_{1}$ has two types of vertices, both of which are cnc \cite{raussendorf2020phase}.
We also described the graphs associated with the polytopes. In the case of $\MP_1$, the edges in this graph are essentially given by the loops on the Mermin torus.
These loops correspond to complements of cnc sets and play a significant role throughout the paper.

An important connection is established between $\MP_{1}$ and computation through the notion of $\Lambda$-simulation \cite{zurel2020hidden}. Indeed, if one restricts to just measurements of non-local Pauli operators then one can define a simulation algorithm for $\MP_{1}$ in the spirit of \cite{zurel2020hidden}; although, since the vertices of $\MP_{1}$ are cnc, all resulting quantum computations can be efficiently simulated classically \cite{raussendorf2020phase}. Alternatively, here we have established that $\MP_{1}$ corresponds precisely to the non-local part of the polytope $\Lambda_{2}$ \cite{zurel2020hidden,okay2021extremal}, with the local part being related to $\NS_{232}$, the polytope associated with the $(2,3,2)$ Bell scenario \cite{jones2005interconversion}. We expect that this decomposition will be important in understanding the combinatorial structure of $\Lambda_{2}$; an important first step in analyzing the complexity of classical simulation based on $\Lambda$-polytopes.

An interesting but yet unexplored topic of research is the study more generally of polytopes associated with measurement scenarios, or topological spaces with non-trivial $[\beta]$. An interesting example of this is the well-known Mermin star scenario \cite{mermin1993hidden}, which also has a topological realization as a torus \cite{Coho}. Particularly appealing about this line of research is that the Mermin's star is closely related to the so-called Greenberger-Horne-Zeilinger (GHZ) paradox \cite{greenberger1989going}, which can be exploited for computational advantage; see \cite{anders2009computational,raussendorf2013contextuality}.

\paragraph{Acknowledgments.}
This  work is supported by the Air Force Office of Scientific Research under award
number FA9550-21-1-0002.


\appendix

\section{Proof of Proposition \ref{pro:MPbeta-cohomology} }\label{sec:ProofPro}

In this section we will prove Proposition \ref{pro:MPbeta-cohomology}. 
For this we will introduce a generalized version of the Mermin polytope (Definition \ref{def:MerminPolytope}). Recall the $K_{3,3}$ graph associated to the Mermin scenario with vertex set $\cC=\cC^\hor\sqcup \cC^\ver$ and edge set $M$; see Fig.~(\ref{fig:mermin-scenario-k33}). We begin with generalizing the definition of $\beta$. Let $R$ denote the set of pairs $(C,m)\in \cC\times M$ such that $m\in C$. We will consider incidence weights on $K_{3,3}$, that is functions of the form $\beta:R\to \ZZ_2$. Let us write $K_{3,3}^\beta$ to indicate the weight. 

\begin{defn}\label{def:MerminPolyGeneral}
{\rm
Let $\widetilde \MP_\beta$ denote the polytope 
given by the set of functions
$$
p: R\to \RR_{\geq 0}
$$
satisfying the following conditions:
\begin{enumerate}[(a)]
\item $\sum_{m\in C} p(C,m) \leq 1$ for all $C\in \cC$,
\item For a context $C\in \cC$ define $p_C:C\to \RR_{\geq 0}$ by 
$$
p_C(m) = \left\lbrace
\begin{array}{cc}
\sum_{m'\in C-\set{m}} p(C,m') & \beta(C)=1, \\
1-\sum_{m'\in C-\set{m}} p(C,m') & \beta(C)=0. \\ 
\end{array}
\right.
$$
Then for all $m\in M$ and $C,C'\in \cC$ such that $m\in C\cap C'$ we require that
$$
p_C(m) = p_{C'}(m).
$$
\end{enumerate}
}
\end{defn}

\begin{figure}[h!]
\centering
\begin{subfigure}{.33\textwidth}
  \centering
  \includegraphics[width=.6\linewidth]{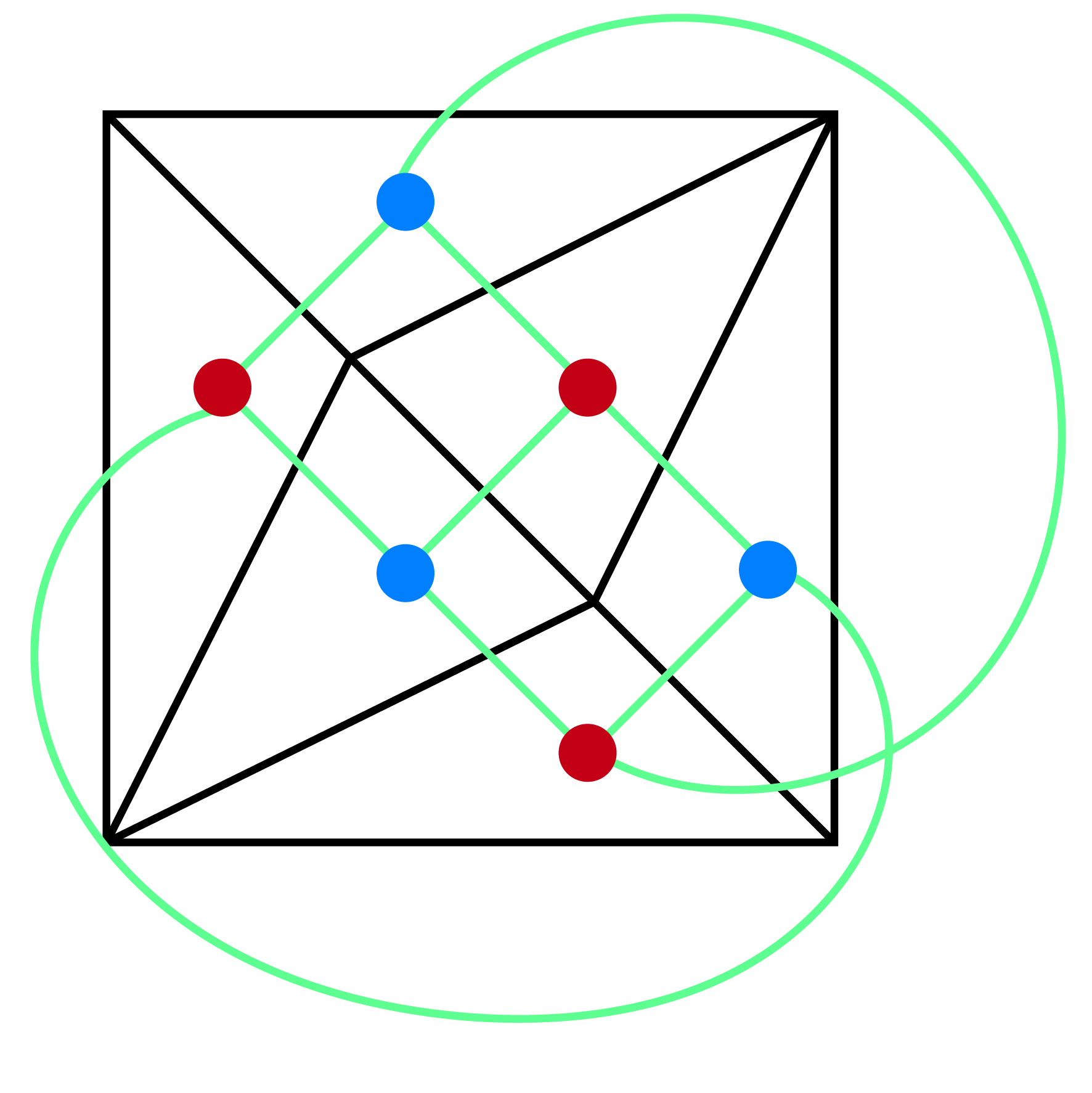}
  \caption{}
  \label{fig:K33-beta0}
\end{subfigure}%
\begin{subfigure}{.33\textwidth}
  \centering
  \includegraphics[width=.6\linewidth]{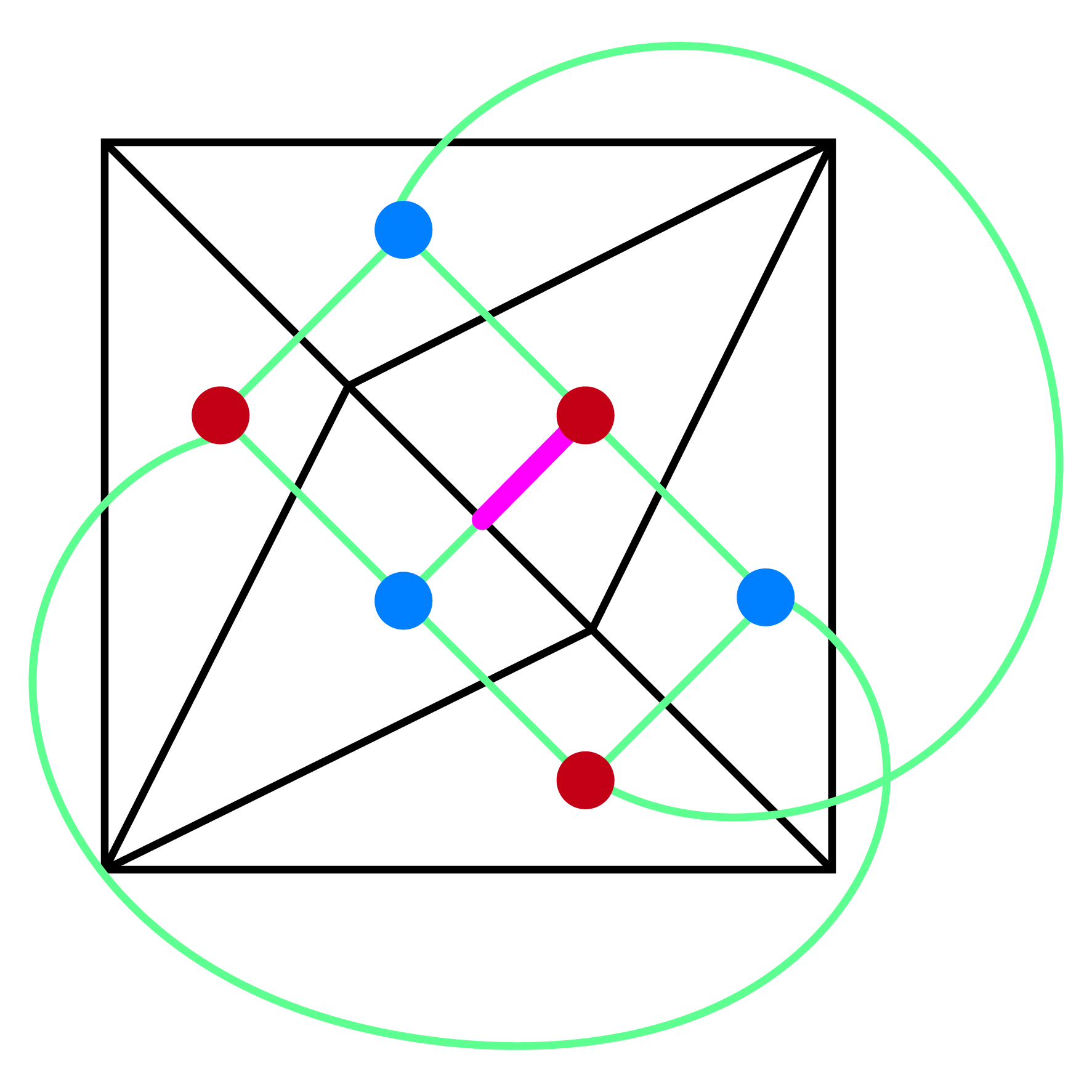}
  \caption{}
  \label{fig:K33-beta1}
\end{subfigure}
\begin{subfigure}{.33\textwidth}
  \centering
  \includegraphics[width=.6\linewidth]{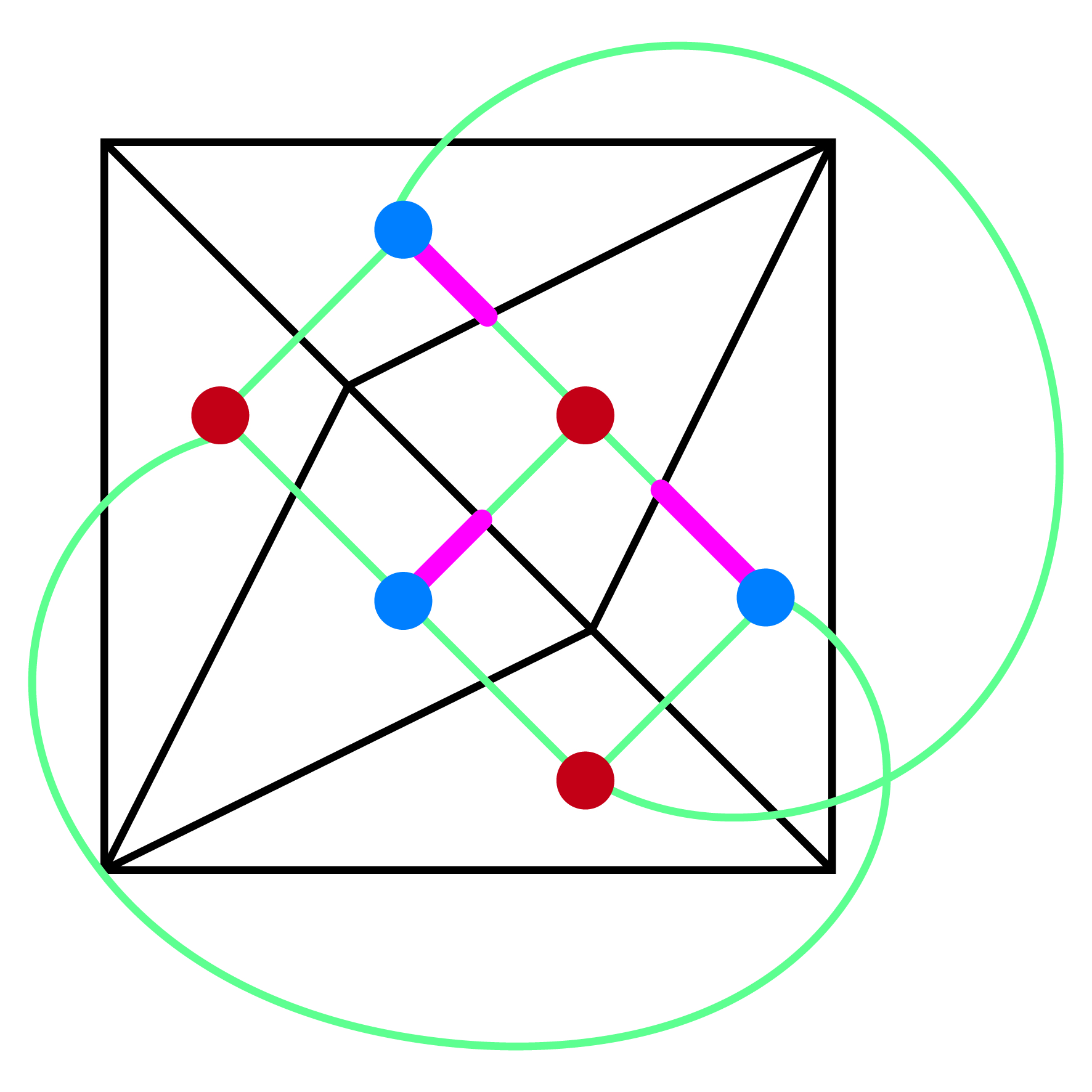}
  \caption{}
  \label{fig:K33-beta3}
\end{subfigure}
\caption{$K_{3,3}^\beta$ for three different choices of $\beta$. Pink color on the part of an edge $x$ incident to $C$ implies that $\beta(C,x)=1$, otherwise $\beta$ takes the value of zero. 
}
\label{fig:K33-beta}
\end{figure}

To have a better idea of this definition consider a context $C=\set{x,y,z}$ and let 
\begin{equation}\label{eq:abcd}
a=p(C,x),\;\;\;
b=p(C,y),\;\;\;
c=p(C,z),\;\;\;
d=1-(a+b+c).
\end{equation}
 Condition (a) says that $p=\set{a,b,c,d}$ is a probability distribution. The choice of $\beta$ at $C$ determines the way $p$ marginalizes to each single measurement. This is given by condition (b). For example, if $\beta(C,x)=1$ then we have $p_x^0 = b+c$, but if $\beta(C,x)=0$ then $p_x^0=a+d=1-(b+c)$. In the notation of (b) the value $p_C(x)$ coincides with $p_x^0$; similarly for $y$ and $z$. 
This definition generalizes $\MP_\beta$; for example the $\beta$ choices given in Fig.~(\ref{fig:mermin-scenario-beta}) can be captured by the weights given in Fig.~(\ref{fig:K33-beta}).

\Lem{\label{lem:beta-rules}
Let $\beta'$ be an incidence weight on $K_{3,3}$ defined in the same way as $\beta$ except possible at
a single context as in (1) and (2), or at two contexts as in (3).
\begin{enumerate}[(1)]
\item There is a single context on which $\beta$ and $\beta'$ are defined as one of the following: 
$$
\includegraphics[width=.6\linewidth]{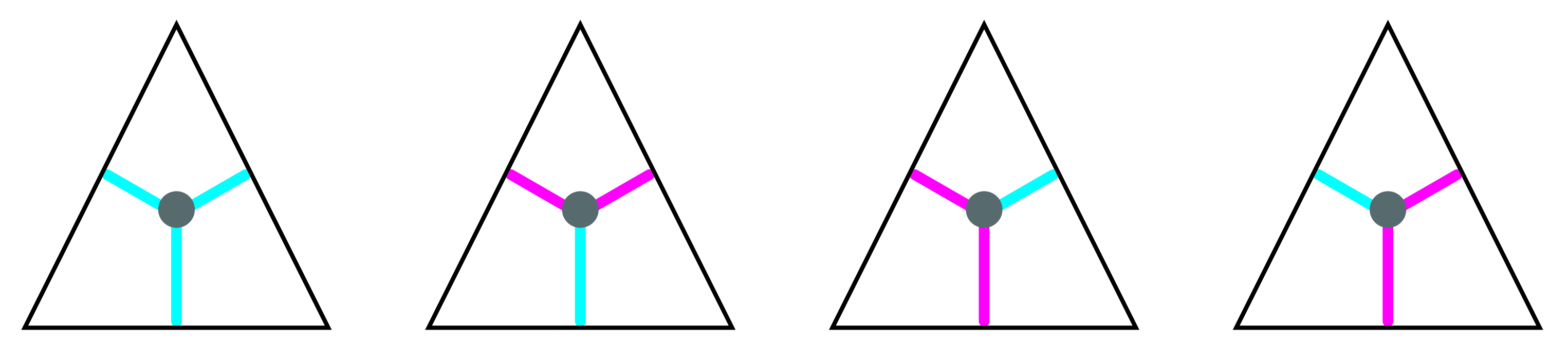}
$$ 
\item There is a single context on which $\beta$ and $\beta'$ are defined as one of the following:
$$
\includegraphics[width=.6\linewidth]{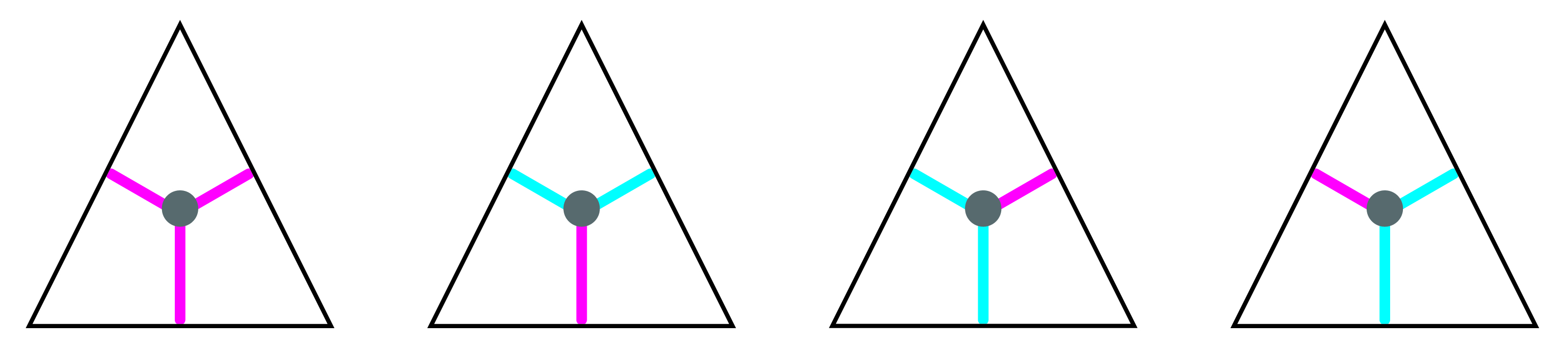}
$$ 
\item There are two contexts on which $\beta$ and $\beta'$ are defined as one of the following: 
$$
\includegraphics[width=.3\linewidth]{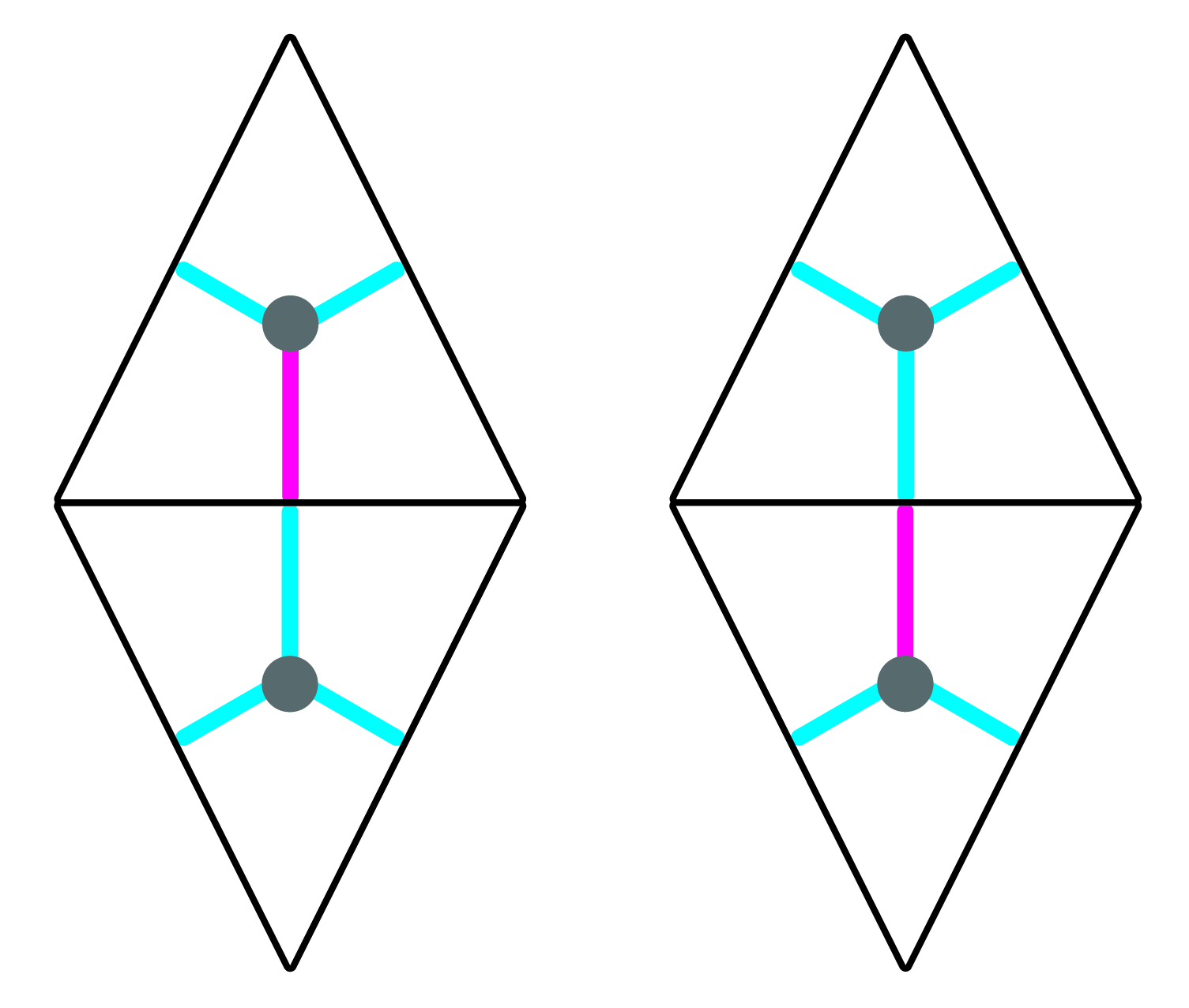}
$$ 
\end{enumerate}
At each case $\MP_\beta$ is combinatorially isomorphic to $\MP_{\beta'}$.
} 
\Proof{
The isomorphism is given by permuting the probability coordinates inside the contexts. Let $C=\set{x,y,z}$ denote the context in cases (1) and (2). (See Fig.~(\ref{fig:simple-scenarios}) for the labeling convention.) In case (1) we can obtain the isomorphism between the polytope corresponding to the first context and the next ones (from right to left) by flipping the outcome of $x$, $y$ and both $x$ and $y$; respectively.
Case (2) is similar.
In (3)  the constrained imposed at the common edge is the same in both cases, hence they specify the same polytope.
}

Proposition \ref{pro:MPbeta-cohomology} is a consequence of the following more general result.

\Pro{\label{pro:tildeMPbeta}
$\widetilde\MP_\beta$ is combinatorially isomorphic to $\widetilde\MP_{\beta'}$ if and only if 
$$
\sum_{(C,m)\in R}\beta(C,m)  = \sum_{(C,m)\in R}\beta'(C,m) \mod 2.
$$ 
}
\Proof{
The main idea of the proof is to use Lemma \ref{lem:beta-rules} to show that every case is either isomorphic to Fig.~(\ref{fig:K33-beta0}) or Fig.~(\ref{fig:K33-beta1}).
First we observe that applying the transformations in (1) and (2) in Lemma \ref{lem:beta-rules}
we can assume that at each context $C$ either $\beta(C,m)=0$ for all $m\in C$ or $\beta(C,m)=1$ for exactly one $m\in C$ and zero otherwise. Applying the transformation  (3)  we can assume that every context where $\beta$ is nonzero is adjacent. Furthermore, again using (3) we can cancel a pair of adjacent contexts with $\beta=1$ by first rotating  $\beta(C,m)=1$ once using (1) and then applying (3) to obtain a pair of contexts where on one of them $\beta=0$ and on the other there are two measurements for which $\beta(C,m)=1$. Using (1) the remaining context with two nonzero $\beta$'s can be replaced  by $\beta=0$. This procedure terminates either at Fig.~(\ref{fig:K33-beta0}), or, after successive application of the transformation in (3),  at Fig.~(\ref{fig:K33-beta1}).
}

\section{Proof of Proposition~\ref{pro:G0-G1}}\label{sec:proof-g0-g1}

Let $\Sigma_n$ denote the symmetric group on $n$ letters.

\Pro{\label{prop-pres}
The group presentation of $\Cl_1$ and $G_1$ are given as follows:
$$\Cl_1=\Span{H,S}\cong \langle h,s\suchthat h^2=s^4=(hs)^3=1\rangle\cong \Sigma_4,$$
where $s$ and $h$ correspond to $S$ and $H$, and
\begin{equation}\label{eq:G1-presentation}
G_1=\langle \Cl_1\times \Cl_1,\SWAP\rangle\cong \langle h,s,w\suchthat w^2=h^2=s^4=(hs)^3=[h,whw]=[h,wsw]=[s,whw]=[s,wsw]=1\rangle,
\end{equation}
where $s$, $h$ and $w$ correspond to $S\otimes \one$, $H\otimes \one$ and $\SWAP$; respectively. 
}
\Proof{The group $\langle h,s\suchthat h^2=s^4=(hs)^3=1\rangle$ is isomorphic to $\Sigma_4$ (cf. \cite[Theorem 8.1]{bray2011short}). In particular, it has order $24$. Sending $H\mapsto h$ and $S\mapsto s$ defines a group homomorphism from $\Cl_1$ since  $H^2=S^4=(HS)^3=\one$. 
By comparing the orders of the groups we see that this is an isomorphism.


By the first part, the presentation of $\Cl_1\times \Cl_1$ is given by
$$\Cl_1\times \Cl_1\cong \langle h_1,s_1,h_2,s_2\suchthat h_1^2=s_1^4=(h_1s_1)^3=h_2^2=s_2^4=(h_2s_2)^3=[h_1,h_2]=[h_1,s_2]=[s_1,h_2]=[s_1,s_2]=1\rangle$$
where we identify $h_1,s_1,h_2,s_2$ with $H\otimes\one,S\otimes\one,\one\otimes H,\one\otimes S$; respectively. The presentation of $G_1$ is obtained by adding one more generator namely $w$ (which identify with $\SWAP$) and add relations that correspond to the action of $w$ on $h_1,s_1,h_2$ and $s_2$. 
Thus we have
\begin{multline*}
G_1\cong \langle h_1,s_1,h_2,s_2,w\suchthat w^2=h_1^2=s_1^4=(h_1s_1)^3=h_2^2=s_2^4=(h_2s_2)^3=1\\
[h_1,h_2]=[h_1,s_2]=[s_1,h_2]=[s_1,s_2]=wh_1wh_2^{-1}=ws_1ws_2^{-1}=1\rangle
\end{multline*}
By relations $wh_1wh_2^{-1}=ws_1ws_2^{-1}=1$, we can remove the generators $h_2$ and $s_2$. Then we obtain
\begin{multline*}
G_1\cong \langle h_1,s_1,w\suchthat w^2=h_1^2=s_1^4=(h_1s_1)^3=(wh_1w)^2=(ws_1w)^4=(wh_1wws_1w)^3=1\\
[h_1,wh_1w]=[h_1,ws_1w]=[s_1,wh_1w]=[s_1,ws_1w]=1\rangle
\end{multline*}
Note that the relations $(wh_1w)^2=1$, $(ws_1w)^4=1$ and $(wh_1wws_1w)^3$ can be obtained from  $h_1^2=w^2=1$, $w^2=s_1^4=1$ and $w^2=(h_1s_1)^3=1$, respectively. Thus those three relations can be removed. Finally, we obtain
$$G_1=\langle \Cl_1\times \Cl_1,\text{SWAP}\rangle\cong \langle h,s,w\suchthat w^2=h^2=s^4=(hs)^3=[h,whw]=[h,wsw]=[s,whw]=[s,wsw]=1\rangle$$
where we identify $h,s,w$ with $H\otimes\one,S\otimes\one$ and $\SWAP$, respectively.
}

\begin{proof}[{\bf Proof of Proposition \ref{pro:G0-G1}}]

We will construct a function $\phi:G_1\to G_0$, show that it is a group homomorphism, and  makes the following diagram  commute:
$$
	\begin{tikzpicture}[node distance=2.3cm, auto]
	\node (0) {$0$};
	\node (1) [right of=0] {$\mathbb{Z}^4$};
	\node (2) [right of=1] {$G_1$};
	\node (3) [right of=2] {$p(G_1)\subset Sp_4(\mathbb{Z}_2$)};
	\node (4) [right of=3] {$1$};
	\node (5) [below of=0] {$0$};
	\node (6) [below of=1] {$G_l\cong\mathbb{Z}_2^4$};
	\node (7) [below of=2] {$G_0$};
	\node (8) [below of=3] {$\Aut(K_{3,3})$};
	\node (9) [below of=4] {$1$};
	\draw[->] (0) to node {}(1);
	\draw[->] (1) to node {$\iota$}(2);
	\draw[->] (2) to node {$p$}(3);
	\draw[->] (3) to node {}(4);
	\draw[->] (5) to node {}(6);
	\draw[->] (6) to node {$\iota'$}(7);
	\draw[->] (7) to node {$p'$}(8);
	\draw[->] (8) to node {}(9);
	\draw[->] (2) to node {$\phi$}(7);
	\draw[->] (1) to node {$f$}(6);
	\draw[->] (3) to node {$g$}(8);
	\end{tikzpicture}
$$
Since $G_1$ is a subgroup of $\Cl_2$, the top row of the group extension corresponds to decomposing $G_1$ into the symplectic part and the Pauli part. Define the following sets:
$$
\begin{aligned}
C_1&=\{Y\otimes X,X\otimes Y,Z\otimes Z\},\\
C_2&=\{X\otimes X,Y\otimes Y,Z\otimes Z\},\\
C_3&=\{X\otimes Z,Z\otimes X,Y\otimes Y\},\\
C_4&=\{X\otimes Y,Y\otimes Z,Z\otimes X\},\\
C_5&=\{X\otimes X,Y\otimes Z,Z\otimes Y\},\\
C_6&=\{X\otimes Z,Z\otimes Y,Y\otimes X\}.
\end{aligned}
$$
The group $\Aut(K_{3,3})$ permutes these sets, hence we think of it as a subgroup of $\Sigma_6$.
We define $f$ and $\phi$ as follows
$$
\begin{aligned}
    && \one \otimes X \mapsto & l_{6b} \\
f:  && \one \otimes Z \mapsto & l_{2b}\\
    && X\otimes \one \mapsto & l_{3b}\\
    && Z\otimes \one\mapsto & l_{4b}
\end{aligned}\;\;\; \text{ and }\;\;\;
\begin{aligned}
    && H\otimes \one \mapsto &  (l_{5b},(1\, 6) (2\,3) (4\,5) )\\
\phi:  && S\otimes \one \mapsto & (l_{3b},(1\,2) (3\,4) (5\,6) )\\
    && \SWAP \mapsto & (l_{0},(4\,6) )
\end{aligned}
$$
where we write $l_0$ for the trivial element of $G_l$.
Note that $\phi$ factors through $g$ and its surjective.
 It is clear that $f$ is an isomorphism. 
 It remains to show that $\phi$ is group homomorphism and the left square of the diagram commutes.
By Proposition \ref{prop-pres}, we know that the group presentation of $G_1$ is given by Eq.~(\ref{eq:G1-presentation}).
We show that $\phi$ is a group homomorphism by checking that it respects all the relations.

We will need the products $\phi(w)\phi(h)\phi(w)$ and $\phi(w)\phi(s)\phi(w)$:
$$\begin{aligned}
\phi(w)\phi(h)\phi(w)&=(l_{0},(46))(l_{5b},(16)(23)(45))(l_{0},(46))\\
&=(l_{1b},(1456)(23))(l_{0},(46))\\
&=(l_{1b},(14)(23)(56))
\end{aligned}$$
$$\begin{aligned}
\phi(w)\phi(s)\phi(w)&=((l_{0},(46))(l_{3b},(12)(34)(56))(l_{0},(46))\\
&=(l_{6b},(12)(3654))(l_{0},(46))\\
&=(l_{6b},(12)(36)(45))
\end{aligned}$$
We check the commutation relation $[\phi(h),\phi(w)\phi(h)\phi(w)]=1$:
$$\begin{aligned}
\phi(h)(\phi(w)\phi(h)\phi(w))&=(l_{5b},(16)(23)(45))(l_{1b},(14)(23)(56))\\
&=(l_{5b}+l_{1b},(15)(46))\\
&=(l_{6a},(15)(46))\\
\\
(\phi(w)\phi(h)\phi(w))\phi(h)&=(l_{1b},(14)(23)(56))(l_{5b},(16)(23)(45))\\
&=(l_{1b}+l_{5b},(15)(46))\\
&=(l_{6a},(15)(46))\\
\end{aligned}$$
The remaining commutation relations $[\phi(h),\phi(w)\phi(s)\phi(w)]=[\phi(s),\phi(w)\phi(h)\phi(w)]=[\phi(s),\phi(w)\phi(s)\phi(w)]=1$ can be checked similarly.
Next, we  check the remaining relations:
$$\begin{aligned}
(\phi(h)\phi(s))^3&=((l_{5b},(16)(23)(45))(l_{3b},(12)(34)(56)))^3\\
&=(l_{5b}+l_{4b},(16)(23)(45)(12)(34)(56))\\
&=(l_{3b},(135)(264))(l_{3b},(135)(264))^2\\
&=(l_{3b}+l_{5b},(153)(246))(l_{3b},(135)(264))\\
&=(l_{4b},(153)(246))(l_{3b},(135)(246))\\
&=(l_{4b}+l_{4b},())\\
&=(l_{0},())
\end{aligned}$$
The relations $\phi(w)^2=\phi(h)^2=\phi(s)^4=1$ can be checked similarly.

Finally, we need to check the left square commutes. First, we express all generators of $\mathbb{Z}_2^4\subset G_1$ using $H\otimes\one,S\otimes\one$ and $\SWAP$:
$$\begin{aligned}
X\otimes \one&=(H\otimes\one)(S\otimes\one)^2(H\otimes\one)\\
Z\otimes \one&=(S\otimes\one)^2\\
\one\otimes X&=\SWAP (X\otimes\one) \SWAP\\
\one\otimes Z&=\SWAP (Z\otimes\one) \SWAP
\end{aligned}$$
Then we calculate the image of each generator:
$$\begin{aligned}
\phi\circ\iota(Z\otimes\one) &= (l_{3b},(12)(34)(56))(l_{3b},(12)(34)(56))\\
&=(l_{3b}+l_{5b},())\\
&=(l_{4b},())\\
&=\iota'\circ f(Z\otimes\one)\\
\end{aligned}$$
and
$$\begin{aligned}
\phi\circ\iota(X\otimes\one)&= (l_{5b},(16)(23)(45))(l_{4b},())(l_{5b},(16)(23)(45))\\
&=(l_{5b}+l_{3b} ,(16)(23)(45))(l_{5b},(16)(23)(45))\\
&=(l_{4b},(16)(23)(45))(l_{5b},(16)(23)(45))\\
&=(l_{4b}+l_{5b} ,())\\
&=(l_{3b},())\\
&=\iota'\circ f(X\otimes\one).
\end{aligned}$$
We can verify $\phi\circ\iota(\one\otimes X)=\iota'\circ f(\one\otimes X)$ and $\phi\circ\iota(\one\otimes Z)=\iota'\circ f(\one\otimes Z)$ in a similar way.

\end{proof}

\section{Stabilizers of $\MP_1$ vertices}
\label{sec:Stabilizer}

\subsection{Stabilizers of type $1$ and $2$ vertices}

In this section we describe the stabilizers of the vertices of $\MP_1$ in the  group $G_1\subset \Cl_2$. Recall that $\Cl_2$ is the quotient of the normalizer of the Pauli group by the central subgroup. When we consider a unitary as an element of the Clifford group, we mean the equivalence class up to a scalar, even though this is not indicated in our notation for the sake of simplicity. 
For the computation of the stabilizers it suffices to choose a representative from each type of vertices. We choose $\VFS$ (type $1$) and $\VFe$ (type $2$).
For the description of the stabilizers we will need   
the dihedral group whose presentation is given as follows:
\begin{equation}\label{eq:dih}
D_{2n}=\langle a,b\suchthat a^n=b^2=(ba)^2=1\rangle.
\end{equation}

\begin{lem}{\label{lem:stab57}}
The stabilizer of $\VFS$ is given by
$$
\Stab_{G_1}(\VFS)=\langle Q,R \rangle\cong D_{24}
$$
where 
$Q = YS \otimes X$ and  $R = YH\otimes H$.
\end{lem}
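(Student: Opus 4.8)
The plan is to verify directly that $Q=YS\otimes X$ and $R=YH\otimes H$ fix $\VFS$ under conjugation, then identify $\langle Q,R\rangle$ as a dihedral group by computing orders, and finally pin down the \emph{full} stabilizer by an orbit--stabilizer count so that the inclusion becomes an equality.

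First I would check containment $\langle Q,R\rangle\subseteq\Stab_{G_1}(\VFS)$. Since $\VFS=q_0=\tfrac14(\one + X\otimes Y - Y\otimes Y + Z\otimes Y)$ and conjugation is linear and fixes $\one$, it suffices to track the action of $Q$ and $R$ on the three nonlocal Paulis $X\otimes Y$, $Y\otimes Y$, $Z\otimes Y$. Using the single-qubit rules $SXS^\dagger=Y$, $SYS^\dagger=-X$, $SZS^\dagger=Z$ and $HXH=Z$, $HYH=-Y$, $HZH=X$, together with conjugation by $Y$ (which flips the signs of $X,Z$ and fixes $Y$), a short calculation gives for $Q$ the assignment $X\otimes Y\mapsto -Y\otimes Y$, $Y\otimes Y\mapsto -X\otimes Y$, $Z\otimes Y\mapsto Z\otimes Y$, and for $R$ the assignment $X\otimes Y\mapsto Z\otimes Y$, $Y\otimes Y\mapsto Y\otimes Y$, $Z\otimes Y\mapsto X\otimes Y$. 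Substituting into $q_0$ shows both conjugations return $q_0$, so $Q,R\in\Stab_{G_1}(\VFS)$.

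Next I would identify the group. Computing squares in $\Cl_1$ gives $(YS)^2=i\one$ and $(YH)^2=-\one$, hence $Q^2=(YS)^2\otimes X^2$ and $R^2=(YH)^2\otimes H^2$ are scalars, i.e.\ $Q$ and $R$ are involutions in $G_1\subset\Cl_2$ (recall elements are taken modulo phase). A group generated by two involutions is dihedral of order twice the order of their product, so it remains to compute $\mathrm{ord}(QR)$. Since $QR=(YS\cdot YH)\otimes(X\cdot H)\in\Cl_1\times\Cl_1$, its order is $\mathrm{lcm}$ of the orders of the two factors. The conjugation action of $g:=YS\cdot YH$ is the $3$-cycle $X\mapsto Z\mapsto -Y\mapsto X$, so $\mathrm{ord}(g)=3$; the conjugation action of $h:=XH$ satisfies $h^2=$ (conjugation by $Y$), so $\mathrm{ord}(h)=4$. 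Therefore $\mathrm{ord}(QR)=\mathrm{lcm}(3,4)=12$, and $\langle Q,R\rangle$ is dihedral of order $24$, matching the presentation in Eq.~(\ref{eq:dih}) with $a=QR$ of order $12$ and $b=Q$, so that $a^{12}=b^2=(ba)^2=1$.

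Finally, to upgrade the inclusion to an equality I would use orbit--stabilizer, taking care to avoid circular use of Lemma~\ref{lem:stabilizer-MP1} (which cites this lemma). By Lemma~\ref{lem:g1-transitive-vert}, $G_1$ acts transitively on the set of type $1$ vertices, of which there are $48$ by Theorem~\ref{thm:VertexClassification}; hence the orbit of $\VFS$ has size $48$. From the exact sequence Eq.~(\ref{eq:G1-exact}), $|G_1|=|\ZZ_2^4|\cdot|\Aut(K_{3,3})|=16\cdot 72=1152$. Therefore $|\Stab_{G_1}(\VFS)|=1152/48=24=|\langle Q,R\rangle|$, which forces $\Stab_{G_1}(\VFS)=\langle Q,R\rangle\cong D_{24}$. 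The only step carrying any subtlety is this last count, since it must be derived independently from Lemma~\ref{lem:g1-transitive-vert} and the order $|G_1|=1152$ rather than from the claimed isomorphism type; everything else reduces to routine single-qubit Clifford bookkeeping.
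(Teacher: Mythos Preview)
Your proof is correct and follows essentially the same strategy as the paper's: verify that $Q$ and $R$ fix $q_0$, identify $\langle Q,R\rangle$ as dihedral of order $24$, and use orbit--stabilizer via Lemma~\ref{lem:g1-transitive-vert} (together with the count of $48$ type~$1$ vertices) to force equality. The only cosmetic differences are that the paper takes $A=QR$ and directly checks the presentation $A^{12}=R^2=(RA)^2=\one$, whereas you instead show $Q,R$ are involutions and compute $\mathrm{ord}(QR)=\mathrm{lcm}(3,4)=12$ from the tensor factors, then invoke the standard fact that two involutions generate a dihedral group; you also make the size $|G_1|=1152$ explicit from the exact sequence Eq.~(\ref{eq:G1-exact}) rather than leaving it implicit.
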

\begin{proof}
Let $K=\langle Q,R\rangle$. 
It is straight-forward to verify that $K$ is contained in the stabilizer by explicitly checking that the vertex is fixed by $Q$ and $R$.
Hence $K\subset \Stab_{G_1}(\VFS)$. 
Since there are $48$ type $1$ vertices and $G_1$ acts transitively on them by Lemma \ref{lem:g1-transitive-vert} we have
$$
\left|\frac{G_1}{\Stab_{G_1}(\VFS)}\right| =48,
$$
which implies that $|\Stab_{G_1}(\VFS)|=24$. We finish the proof by showing that  $K\cong D_{24}$. 
Let $A=QR$ then one can verify  that 
$$
K = \Span{A,R \suchthat A^{12} = R^2 = (RA)^2=\one} \subset G_1.
$$
\end{proof}

\begin{lem}{\label{lem:stab58}}
The stabilizer of $\VFe$ is given by
$$
\Stab_{G_1}(\VFe)=\langle M,\SWAP\rangle\cong D_{16}
$$
where 
$M=X\otimes YS$.
\end{lem}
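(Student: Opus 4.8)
The plan is to follow verbatim the strategy used for the type~$1$ vertex in Lemma~\ref{lem:stab57}. Set $K=\langle M,\SWAP\rangle$. First I would establish the inclusion $K\subseteq \Stab_{G_1}(\VFe)$ by checking directly that both generators fix $\VFe$. Using the expression for $\VFe$ in Eq.~(\ref{eq:V58}), conjugation by $\SWAP$ interchanges the two tensor factors of each term, so it swaps $X\otimes Y$ and $Y\otimes X$ (both occurring with coefficient $+1$) and fixes $\one$, $X\otimes X$, $-Y\otimes Y$ and $Z\otimes Z$; hence $\SWAP\,\VFe\,\SWAP^\dagger=\VFe$. For $M=X\otimes YS$ I would record the single-qubit conjugation rules $X\mapsto X$, $Y\mapsto -Y$, $Z\mapsto -Z$ on the first factor and $X\mapsto Y$, $Y\mapsto X$, $Z\mapsto -Z$ on the second, and then apply them termwise; one checks that $M$ permutes the six Pauli terms of $\VFe$ with matching signs, so $M\,\VFe\,M^\dagger=\VFe$ as well.

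Next I would determine the order of the stabilizer by orbit--stabilizer. By Lemma~\ref{lem:g1-transitive-vert} the group $G_1$ acts transitively on the $72$ type~$2$ vertices, and from the extension in Eq.~(\ref{eq:G1-exact}) we have $|G_1|=|\ZZ_2^4|\cdot|\Aut(K_{3,3})|=16\cdot 72=1152$. Therefore
$$
|\Stab_{G_1}(\VFe)|=\frac{|G_1|}{72}=\frac{1152}{72}=16 .
$$

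It then remains to identify $K$ with $D_{16}$ and to deduce $K=\Stab_{G_1}(\VFe)$. Working in $\Cl_2$, i.e.\ modulo a global phase, both generators are involutions: $\SWAP^2=\one$, and since $(YS)^2=i\,\one$ we get $M^2=\one$. A group generated by two involutions is dihedral, so $K\cong D_{2m}$ with $m=\mathrm{ord}(M\SWAP)$, the rotation $a=M\SWAP$ and the reflection $b=\SWAP$ satisfying $a^m=b^2=(ba)^2=\one$ in the form of Eq.~(\ref{eq:dih}). I would compute the order of $M\SWAP$ by tracking its symplectic action on the generators $X\otimes\one$, $Z\otimes\one$, $\one\otimes X$, $\one\otimes Z$ together with the accumulated signs; this yields $m=8$, whence $|K|=16$. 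Combined with $K\subseteq\Stab_{G_1}(\VFe)$ and $|\Stab_{G_1}(\VFe)|=16$, this forces $K=\Stab_{G_1}(\VFe)\cong D_{16}$.

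The main obstacle is this last step: verifying that $M\SWAP$ has order exactly $8$ rather than a proper divisor of $8$. This is a finite but delicate computation, since it requires keeping careful track of the Pauli \emph{signs} (not merely the symplectic permutation) through the iterated conjugation; an error here would wrongly collapse $K$ onto a smaller dihedral group and break the order count. Everything else is routine verification entirely parallel to Lemma~\ref{lem:stab57}.
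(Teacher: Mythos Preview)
Your proposal is correct and follows essentially the same route as the paper: verify that $M$ and $\SWAP$ fix $\VFe$, use orbit--stabilizer with Lemma~\ref{lem:g1-transitive-vert} to get $|\Stab_{G_1}(\VFe)|=16$, and then identify $\langle M,\SWAP\rangle$ with $D_{16}$ via the rotation $N=M\,\SWAP$ and reflection $\SWAP$. Your explicit observation that both generators are involutions (so the group is automatically dihedral, with only $\mathrm{ord}(M\,\SWAP)$ left to determine) is a nice clean way to organize the final step; the paper simply asserts the dihedral relations for $N$ and $\SWAP$, and the later tables (e.g.\ Table~\ref{tab:V58-type1-neighbors}) confirm that $N$ indeed has order~$8$.
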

\begin{proof}
Proof is similar to Lemma \ref{lem:stab57}.
Let $L=\Span{M,\SWAP}$.
First one verifies that the given generators fix $\VFe$, which implies that $L$ is contained in the stabilizer. Transitivity of the action of $G_1$ on the set of type $2$ vertices (Lemma \ref{lem:g1-transitive-vert}) implies that $|\Stab_{G_1}(\VFe)|=16$. To conclude the proof we observe that 
$$L=\Span{N,\SWAP \suchthat N^6 = \SWAP^2 = (\SWAP N)^2=\one}\subset G_1,$$ 
where $N=M\,\SWAP$. 
Therefore $L\cong D_{16}$.  
\end{proof}

\subsection{Stabilizer action on the neighbors}
\begin{lem}{\label{dih}}
Consider the generator $a\in D_{2n}$ in the presentation of $D_{2n}$; see Eq.~(\ref{eq:dih}).
If $a^{n/2}\not\in G$, then either $D_{2n}\cap G=\{1\}$ or there exists a unique $i\in\{0,\cdots,n-1\}$ such that $D_{2n}\cap G=\langle a^ib\rangle\cong C_2$.
\end{lem}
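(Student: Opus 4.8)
The plan is to analyze the subgroup $H := D_{2n}\cap G$ by separating the rotation part of $D_{2n}$ from its reflection part. Write $C=\langle a\rangle\cong C_n$ for the cyclic subgroup of rotations; using the presentation in Eq.~(\ref{eq:dih}) one has $bab=a^{-1}$, so $C$ has index $2$ in $D_{2n}$ and the nontrivial coset consists exactly of the $n$ reflections $a^ib$ $(0\le i\le n-1)$, each of which is an involution and which are pairwise distinct. The whole argument then reduces to controlling $H\cap C$.

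The first step is to show that $H\cap C=\{1\}$. Here I would use that $H\cap C$ is a subgroup of the cyclic group $C\cong C_n$ with $n$ even, and that \emph{every even-order subgroup of $C_n$ contains the unique involution $a^{n/2}$}: a subgroup is of the form $\langle a^d\rangle$ with $d\mid n$ and order $n/d$, and if $n/d$ is even then $n/2=d\cdot\tfrac{n/d}{2}$ is a multiple of $d$, whence $a^{n/2}\in\langle a^d\rangle$. In the situations where the lemma is applied the intersection $H$ is a $2$-group — one of the two stabilizers being intersected is always the type $2$ stabilizer $\Stab_{G_1}(\VFe)\cong D_{16}$, which is a $2$-group by Lemma~\ref{lem:stab58} — so a nontrivial $H\cap C$ would have order a power of $2$, hence even, forcing $a^{n/2}\in H\subseteq G$ and contradicting the hypothesis $a^{n/2}\notin G$. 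Therefore $H\cap C=\{1\}$.

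Granting $H\cap C=\{1\}$, the composite $H\hookrightarrow D_{2n}\twoheadrightarrow D_{2n}/C\cong C_2$ is injective, so $|H|$ divides $2$. If $|H|=1$ we are in the first alternative. If $|H|=2$, its nontrivial element $h$ lies outside $C$ and is therefore one of the reflections $h=a^ib$; since the reflections $a^0b,\dots,a^{n-1}b$ are pairwise distinct, the index $i\in\{0,\dots,n-1\}$ is uniquely determined by $h$, and $H=\langle a^ib\rangle\cong C_2$. This is exactly the second alternative.

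The one delicate point — and the step I expect to be the real obstacle — is ruling out a nontrivial rotation in $H$. For a completely arbitrary subgroup $G$ this can genuinely fail (for instance $\langle a^4\rangle\cong C_3$ inside $D_{24}$ avoids $a^{6}=a^{n/2}$), so the argument must invoke that the relevant intersection is a $2$-group rather than using the hypothesis $a^{n/2}\notin G$ alone. I would therefore make the $2$-group hypothesis explicit at this step, justifying it from the orders of the vertex stabilizers in Lemmas~\ref{lem:stab57} and~\ref{lem:stab58} (so that whenever $D_{2n}$ is intersected with another stabilizer $G$, at least one of the two is the $2$-group $D_{16}$ and hence $H$ is a $2$-group). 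Everything else is routine rotation/reflection bookkeeping in a dihedral group.
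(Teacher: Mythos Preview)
Your diagnosis of the gap is exactly right: the lemma as stated is false without an extra hypothesis (your $C_3\subset D_{24}$ counterexample is on the nose), and some $2$-power condition must be smuggled in from the applications. Your route and the paper's route differ only in \emph{where} that condition is inserted.

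The paper's proof simply opens with the assertion ``any nontrivial subgroup of $\langle a\rangle$ contains $a^{n/2}$'' and then argues directly: no nontrivial rotation lies in $G$, and two distinct reflections $a^ib,\,a^jb$ in $G$ would multiply to the nontrivial rotation $a^{i-j}\in G$, contradiction. That opening assertion is of course only valid when $n$ is a power of $2$; the paper never says so, but in the sole application (Lemma~\ref{intersection}) one has $D_{2n}=\Stab_{G_1}(\VFe)\cong D_{16}$, so $n=8$ and $\langle a\rangle\cong C_8$ has a unique involution contained in every nontrivial subgroup. Thus the paper is really using ``$n$ is a $2$-power'' as the hidden hypothesis, which makes the rotation step immediate without ever mentioning $H$ or $G$ being a $2$-group.

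Your version instead pushes the $2$-power information onto the intersection $H$: since $H\subset D_{16}$ it is a $2$-group, hence $H\cap\langle a\rangle$ has $2$-power order, hence (if nontrivial) contains $a^{n/2}$. This is logically equivalent in the case at hand---both amount to observing that $D_{16}$ is a $2$-group---but your phrasing has the advantage of making explicit what the paper leaves tacit. The quotient-map argument you use for the reflection part and the paper's ``product of two reflections is a rotation'' argument are also just two phrasings of the same fact. So: correct, same approach in substance, with your write-up doing the honest bookkeeping the paper skips.
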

\begin{proof}
Observe that any non-trivial subgroup of $\langle a\rangle$ will contains $a^{n/2}$. 
Since $a^{n/2}\not\in G$, it follows that $a^i\not\in G$ for all $i\in\{0,1,\cdots,n-1\}$ (otherwise $a^{n/2}\in \langle a^i\rangle\subset G$, which is a contradiction). Thus either $D_{2n}\cap G$ is trivial or $D_{2n}\cap G$ is generated by elements of form $a^ib$ where $i\in\{0,1,\cdots,n-1\}$. 
Let $g=a^ib$ and $h=a^jb$ be two distinct elements. 
We have $gh=a^{i-j}$, which is a non-trivial elements in $\langle a\rangle$. Thus either $a^ib\not\in G$ for all $i\in\{0,1,\cdots,n-1\}$, or there exists an unique $k\in\{0,1,\cdots,n-1\}$ such that $a^k b\in G$. 
This proves the statement.
\end{proof}

We will consider the following type $2$ neighbors of $\VFe$: $\Vtt$ given in Eq.~(\ref{eq:V22}), $\Vnn$ in Eq.~(\ref{eq:V99}) and $\Vte$ in Eq.~(\ref{eq:V28}).

 
\begin{lem}{\label{intersection}}
Let $N=M\,\SWAP$ where $M=X\otimes YS$. We have
\begin{enumerate}
\item $\Stab_{G_1}(\VFe)\cap \Stab_{G_1}(\Vtt)=\langle \SWAP\rangle\cong C_2$,
\item $\Stab_{G_1}(\VFe)\cap \Stab_{G_1}(\Vte)=\langle N\SWAP\rangle\cong C_2$,
\item $\Stab_{G_1}(\VFe)\cap \Stab_{G_1}(\VFS)=\langle N^{-1}\SWAP\rangle\cong C_2$,
\item $\Stab_{G_1}(\VFe)\cap \Stab_{G_1}(\Vnn)=\langle N\SWAP\rangle\cong C_2$.
\end{enumerate}
\end{lem}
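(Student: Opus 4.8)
The plan is to reduce each of the four intersections to an intersection inside the dihedral group $\Stab_{G_1}(\VFe)\cong D_{16}$ computed in Lemma~\ref{lem:stab58} and then to apply the purely group-theoretic Lemma~\ref{dih}. Recall from Lemma~\ref{lem:stab58} that $\Stab_{G_1}(\VFe)=\langle N,\SWAP\rangle$ with $N=M\,\SWAP$ realizing the rotation $a$ and $\SWAP$ the reflection $b$ of the presentation (\ref{eq:dih}); since this group has order $16$, the rotation subgroup is $\langle N\rangle\cong C_8$, so $n=8$ and the distinguished central involution of Lemma~\ref{dih} is $a^{n/2}=N^{4}$. Every element of the coset $\langle N\rangle\,\SWAP$, in particular $\SWAP$, $N\SWAP$ and $N^{-1}\SWAP$, is a reflection of the form $a^{i}b$. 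Throughout I would work in the operator picture of Lemma~\ref{lem:M1-quantum}, where $G_1$ acts by Clifford conjugation on the nonlocal Pauli expansions of the vertices given in Eqns.~(\ref{eq:V58}), (\ref{eq:V22}), (\ref{eq:V99}), (\ref{eq:V28}) and (\ref{eq:V57}).

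For each target vertex $v\in\{\Vtt,\Vte,\VFS,\Vnn\}$ I would carry out two verifications. First, I would check the \emph{negative} statement $N^{4}\cdot v\neq v$, i.e.\ that the central rotation does not stabilize $v$. The point is that every nontrivial subgroup of the cyclic group $\langle N\rangle\cong C_8$ contains its unique involution $N^{4}$; hence $N^{4}\notin\Stab_{G_1}(v)$ forces $\Stab_{G_1}(\VFe)\cap\Stab_{G_1}(v)$ to meet the rotation subgroup trivially. This is precisely the hypothesis $a^{n/2}\notin G$ of Lemma~\ref{dih} with $G=\Stab_{G_1}(v)$, whose conclusion then restricts the intersection to being either trivial or a single $C_2$ generated by a reflection $N^{i}\SWAP$.

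Second, to pin down the reflection and exclude the trivial case, I would exhibit the claimed generator and verify directly that it fixes $v$: namely $\SWAP$ for $\Vtt$, $N\SWAP$ for $\Vte$ and for $\Vnn$, and $N^{-1}\SWAP$ for $\VFS$. Each of these already lies in $\Stab_{G_1}(\VFe)$ by construction, so verifying that it also fixes $v$ places it in the intersection; the uniqueness clause of Lemma~\ref{dih} then identifies the intersection with the order-$2$ group it generates, giving exactly statements (1)--(4). Concretely, verifying $g\cdot v=v$ (resp.\ $N^{4}\cdot v\neq v$) amounts to computing how the conjugation action of $N=(X\otimes YS)\,\SWAP$ and its powers, together with $\SWAP$, permute-and-resign the handful of nonlocal Paulis appearing in each vertex, and then matching coefficients.

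The main obstacle I anticipate is the sign bookkeeping in these Clifford conjugation computations: I must track how $N$, $N^{4}$ and the reflections $N^{i}\SWAP$ act on the operators $X\otimes X,\,X\otimes Y,\,\dots,\,Z\otimes Z$, using the single-qubit conjugation rules for $H$, $S$ and $Y$ combined with the interchange of tensor factors by $\SWAP$. Each check is finite and mechanical, but care is essential for the \emph{negative} assertions $N^{4}\cdot v\neq v$, where a single sign error would spuriously enlarge the rotation part of the intersection and invalidate the application of Lemma~\ref{dih}. Once the per-vertex conjugation tables are in place, all four equalities follow uniformly.
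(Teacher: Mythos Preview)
Your proposal is correct and follows essentially the same approach as the paper's own proof: both arguments verify that the central rotation $N^{4}$ fails to fix each of the four target vertices, exhibit the specific reflection $N^{i}\,\SWAP$ that does fix each vertex, and then invoke Lemma~\ref{dih} to conclude that the intersection is the $C_2$ generated by that reflection. The paper carries out the conjugation bookkeeping you anticipate by tabulating the action of $N,N^{4},N^{-1},\SWAP,N\SWAP,N^{-1}\SWAP$ on the nine nonlocal Paulis, which is exactly the ``per-vertex conjugation table'' you describe as the main computational step.
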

\begin{proof}
The table below shows the action of $N,N^4,N^{-1},\SWAP,N \SWAP$ and $N^{-1}\SWAP$ on the non-local Pauli operators. For simplicity we omit the tensor product notation.

 \begin{table}[h!] 
\centering
\begin{tabular}{|l|l|l|l|l|l|l|l|l|l|}
\hline
$A$: non-local Pauli & XX & XY & XZ & YX & YY & YZ & ZX & ZY & ZZ \\ \hline
$N A N^\dagger$ & XY & -YY & -ZY & XX  & -YX  & -ZX  & -XZ  & YZ  & ZZ  \\ \hline
$N^4 A (N^4)^\dagger$ & XX & XY & -XZ & YX  & YY  & -YZ & -ZX  & -ZY  & ZZ  \\ \hline
$N^\dagger  A N$ & YX & XX & -ZX & -YY  & -XY  & ZY  & -YZ  & -XZ  & ZZ  \\ \hline
$(\SWAP)  A (\SWAP)^\dagger$ & XX & YX & ZX & XY  & YY  & ZY & XZ  & YZ  & ZZ  \\ \hline
$(N\SWAP) A (N\SWAP)^\dagger$ & XY & XX & -XZ & -YY  & -YX  & YZ & -ZY  & -ZX  & ZZ  \\ \hline
$(N^\dagger \SWAP) A (N^\dagger \SWAP)^\dagger$ & YX & -YY & -YZ & XX  & -XY  & -XZ & -ZX  & ZY  & ZZ  \\ \hline
\end{tabular}
\caption{The action of some  unitaries in $\Stab_{G_1}(\VFe)$. 
}\label{tab:A-nonlocalPauli}
\end{table} 

Using table we can show that 
$N^4$ does not fix $\Vtt, \VFS$, $\Vte$ and $\Vnn$. 
On the other hand, $\SWAP$, $N^{-1}\SWAP$, and $N (\SWAP)$ fixes the vertices $\Vtt$, $\VFS$,  and $\Vte$ respectively, and $N(\SWAP)$ fixes $\Vnn$.
Then the statement follows from Lemma \ref{dih}.
\end{proof}

\begin{lem}\label{lem:G1-transitive-V57}
$Stab_{G_1}(\VFS)$ acts transitively on the set of neighbor vertices of $\VFS$.
\end{lem}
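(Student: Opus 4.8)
The plan is to show that $\Stab_{G_1}(\VFS)$ acts transitively on $N(\VFS)$ by exploiting the cyclic subgroup of the dihedral stabilizer computed in Lemma~\ref{lem:stab57}. Recall that $\Stab_{G_1}(\VFS)\cong D_{24}$, and by Theorem~\ref{thm-MP1-graph} the neighbor set $N(\VFS)$ consists of exactly $12$ type $2$ vertices; so transitivity of a group of order $24$ on a $12$-element set is exactly what the orbit-stabilizer theorem permits, provided the stabilizer of a single neighbor within $\Stab_{G_1}(\VFS)$ has order $2$.

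First I would fix a convenient neighbor $v\in N(\VFS)$, taking $v=\VFe$ as a representative (the canonical type $2$ vertex that we already know is a neighbor via the signed loop in Fig.~(\ref{fig:signed-loop-58to57})). The strategy is to compute the pointwise stabilizer $\Stab_{\Stab_{G_1}(\VFS)}(\VFe)=\Stab_{G_1}(\VFS)\cap \Stab_{G_1}(\VFe)$. By symmetry with Lemma~\ref{intersection}(3), which already records $\Stab_{G_1}(\VFe)\cap \Stab_{G_1}(\VFS)=\langle N^{-1}\SWAP\rangle\cong C_2$, this intersection has order $2$. Then by orbit-stabilizer the orbit of $\VFe$ under $\Stab_{G_1}(\VFS)$ has size $24/2=12$, which equals $|N(\VFS)|$. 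Since the orbit is contained in $N(\VFS)$ (the stabilizer of $\VFS$ permutes its neighbors), equality of cardinalities forces the orbit to be all of $N(\VFS)$, giving transitivity.

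The main obstacle, and the step requiring genuine verification rather than bookkeeping, is confirming that the intersection $\Stab_{G_1}(\VFS)\cap\Stab_{G_1}(\VFe)$ is exactly $C_2$ and not larger. For this I would invoke Lemma~\ref{dih}: inside the dihedral presentation $D_{24}=\langle a,b\rangle$ of $\Stab_{G_1}(\VFS)$, one checks that the order-$2$ central element $a^{6}$ does not lie in $\Stab_{G_1}(\VFe)$, so that the intersection is either trivial or a single order-$2$ reflection subgroup; exhibiting one element (namely $N^{-1}\SWAP$) that fixes both vertices then pins the intersection down to exactly $\langle N^{-1}\SWAP\rangle\cong C_2$. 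The verification that no power of the rotation generator other than the identity fixes $\VFe$ can be read off from the action table (Table~(\ref{tab:A-nonlocalPauli})) already assembled for Lemma~\ref{intersection}. With the intersection confirmed to have order $2$, the orbit-stabilizer count closes the argument.

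One subtlety to flag: I must ensure the orbit genuinely lands inside $N(\VFS)$, i.e.\ that $\Stab_{G_1}(\VFS)$ preserves the property of being a neighbor. This is immediate because combinatorial automorphisms of the polytope send edges to edges, so any $g\in\Stab_{G_1}(\VFS)$ maps the edge $\{\VFS,\VFe\}$ to the edge $\{\VFS,g\cdot\VFe\}$, whence $g\cdot\VFe\in N(\VFS)$. Thus the whole orbit of $\VFe$ consists of neighbors, and the cardinality comparison completes the proof.
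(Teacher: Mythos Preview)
Your approach is essentially identical to the paper's: invoke Lemma~\ref{intersection}(3) to get $|\Stab_{G_1}(\VFS)\cap\Stab_{G_1}(\VFe)|=2$, then apply orbit--stabilizer to conclude the orbit of $\VFe$ has $|D_{24}|/|C_2|=12$ elements and hence exhausts $N(\VFS)$. One caution: citing Theorem~\ref{thm-MP1-graph} for $|N(\VFS)|=12$ is formally circular since that theorem invokes the present lemma in its proof; you should instead point to the direct enumeration of neighbors via signed loops (Fig.~(\ref{fig:T2NofT1})), which is established prior to and independently of this transitivity statement.
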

\begin{proof}
By Lemma \ref{intersection}, we have $Stab_{G_1}(\VFS)\cap  Stab_{G_1}(\VFe)\cong C_2$.  
Then the orbit of $\VFS$ under the $Stab_{G_1}(\VFS)$ action 
has $|D_{24}|/|C_2|=12$ elements, which is the whole set of neighbors of $\VFS$.
\end{proof}

\begin{lem}\label{lem:G1-transitive-V58}
The action of $Stab_{G_1}(\VFe)$ on the set of neighbor vertices of $\VFe$ breaks into three orbits with representatives given by $\VFS$ (type $1$), $\Vtt$ and $\Vnn$ (both type $2$).
\end{lem}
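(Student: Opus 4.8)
The plan is to obtain the orbit decomposition by a counting argument built on the orbit--stabilizer theorem and the intersection computations of Lemma~\ref{intersection}. By Lemma~\ref{lem:stab58} the group $H := \Stab_{G_1}(\VFe)$ is the dihedral group $D_{16} = \Span{N,\SWAP}$ of order $16$, where $N = M\,\SWAP$ and $M = X\otimes YS$. The three proposed representatives are genuine neighbors of $\VFe$: the signed loops of Fig.~(\ref{fig:signed-loop-58to57}), (\ref{fig:signed-loop-58to99}) and (\ref{fig:signed-loop-58to22}) produce, via Lemma~\ref{lem:loop-edges}, edges from $\VFe$ to $\VFS$, $\Vnn$ and $\Vtt$ respectively, with $\VFS$ of type $1$ and $\Vtt,\Vnn$ of type $2$. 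For each such vertex $v$, the point stabilizer of the $H$-action is $H \cap \Stab_{G_1}(v)$, which Lemma~\ref{intersection} identifies as a copy of $C_2$ (generated by $N^{-1}\SWAP$, $\SWAP$ and $N\SWAP$ for $\VFS$, $\Vtt$ and $\Vnn$ respectively). Hence by orbit--stabilizer each of the three $H$-orbits has exactly $|H|/2 = 8$ elements.

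First I would separate the type $1$ orbit from the type $2$ orbits. Every element of $G_1 \subset \Cl_2$ is a combinatorial automorphism of $\MP_1$, and the vertex type is intrinsic to the associated maximal cnc set (Theorem~\ref{thm:VertexClassification}); thus $H$ permutes vertices preserving their type. Consequently the $8$-element orbit of $\VFS$ consists entirely of type $1$ vertices and is disjoint from the orbits of $\Vtt$ and $\Vnn$. Since $\VFe$ has exactly $8$ type $1$ neighbors by Theorem~\ref{thm-MP1-graph}, this single orbit already exhausts them, which proves that $H$ acts transitively on the type $1$ neighbors.

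Next I would show that $\Vtt$ and $\Vnn$ fall into distinct $H$-orbits. This is the step not forced by the size bookkeeping alone, since two $8$-element orbits are equally compatible with a single orbit or with two disjoint ones. The plan is to resolve it by the explicit action recorded in Table~(\ref{tab:A-nonlocalPauli}): applying the $16$ elements of $D_{16}$ to $\Vtt$ yields its $8$ distinct type $2$ images, and a direct inspection shows that $\Vnn$ is not among them (equivalently, the orbits are listed explicitly in the tables cited in Theorem~\ref{thm-MP1-graph}). With $\Vtt$ and $\Vnn$ in different orbits, their two disjoint orbits contribute $8 + 8 = 16$ type $2$ vertices; as $\VFe$ has exactly $16$ type $2$ neighbors (Theorem~\ref{thm-MP1-graph}), these two orbits exhaust the type $2$ neighbors, so the type $2$ neighbors split into precisely two orbits.

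Putting the pieces together, the neighbor set $N(\VFe)$, of cardinality $24 = 8 + 16$, is partitioned into the three $H$-orbits of $\VFS$, $\Vtt$ and $\Vnn$, each of size $8$, which is exactly the claimed decomposition into three orbits with the stated representatives. The main obstacle is the distinctness of the $\Vtt$- and $\Vnn$-orbits: the orbit--stabilizer count fixes the orbit sizes but not the partition, so one must verify by the action table (or by producing an $H$-invariant that separates them) that no symmetry fixing $\VFe$ carries $\Vtt$ to $\Vnn$.
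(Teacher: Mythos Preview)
Your argument is correct and follows essentially the same route as the paper: orbit--stabilizer combined with Lemma~\ref{intersection} fixes each orbit size at $8$, type-preservation isolates the type~$1$ orbit, and an explicit computation (the paper lists the $N$-orbit of $\Vnn$ and observes $\Vtt$ is absent) separates the two type~$2$ orbits. One caveat on presentation: you invoke Theorem~\ref{thm-MP1-graph} for the neighbor counts $8$ and $16$, but that theorem's proof cites this very lemma for its transitivity claims, so to avoid circularity you should point instead to the loop enumeration producing Figs.~(\ref{fig:T1NofT2}) and~(\ref{fig:T2NofT2}), which establishes those counts independently of the orbit structure.
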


\begin{proof}
By Lemma \ref{intersection}, we have $Stab_{G_1}(\VFe)\cap  Stab_{G_1}(\VFS)\cong C_2$. 
Then the orbit of $\VFe$ under the action of the stabilizer 
has $|D_{16}|/|C_2|=8$ elements, which is the whole set of type $1$ neighbors of $\VFe$.
By Lemma \ref{intersection}, we have
\[Stab_{G_1}(\VFe)\cap Stab_{G_1}(\Vnn)\cong Stab_{G_1}(\VFe) \cap Stab_{G_1}(\Vtt)\cong C_2.\]
Since there are $16$ type $2$ neighbors of $\VFe$, the orbit of $Stab_{G_1}(\VFe)$  on $\Vtt$ and $\Vnn$ both have size equal to $8$. It remains to check that these orbits are distinct. For this we compute the orbit:
\begin{align*}
N  \Vnn N^\dagger&=\frac{1}{4}(\one+X\otimes  Y-Y\otimes  Y+Z\otimes  X-X\otimes  Z-Y\otimes Z)=\VTz\\
(N^2)  \Vnn (N^2)^\dagger &=\frac{1}{4}(\one-Y\otimes Y+Y\otimes X-X\otimes Z+Z\otimes Y+Z\otimes X)=\Vfn\\
(N^3)  \Vnn (N^3)^\dagger&=\frac{1}{4}(\one+Y\otimes X+X\otimes X+Z\otimes Y+Y\otimes Z-X\otimes Z)=\VSn\\
(N^4)  \Vnn (N^4)^\dagger&=\frac{1}{4}(\one+X\otimes X+X\otimes Y+Y\otimes Z-Z\otimes X+Z\otimes Y)=\Vee\\
(N^5)  \Vnn (N^5)^\dagger&=\frac{1}{4}(\one+X\otimes Y-Y\otimes Y-Z\otimes X+X\otimes Z+Y\otimes Z)=\VfT\\
(N^6)  \Vnn (N^6)^\dagger&=\frac{1}{4}(\one-Y\otimes Y+Y\otimes X+X\otimes Z-Z\otimes Y-Z\otimes X)=\VtT\\
(N^7)  \Vnn (N^7)^\dagger&=\frac{1}{4}(\one+Y\otimes X+X\otimes X-Z\otimes Y-Y\otimes Z+X\otimes Z)=\Vne
\end{align*}
where $N=(X\otimes YS)(\SWAP)$. 
Observe that $\Vtt$ does not belong to the orbit of $\Vnn$. Thus, the orbits of $\Vtt$ and $\Vnn$ are distinct. 
\end{proof}



We apply the stabilizer computation to show that 
the  type $1$ vertices in 
Fig.~(\ref{fig:T1NoNofT2})
are not neighbors.

\begin{figure}
\centering
\includegraphics[width=0.6\linewidth]{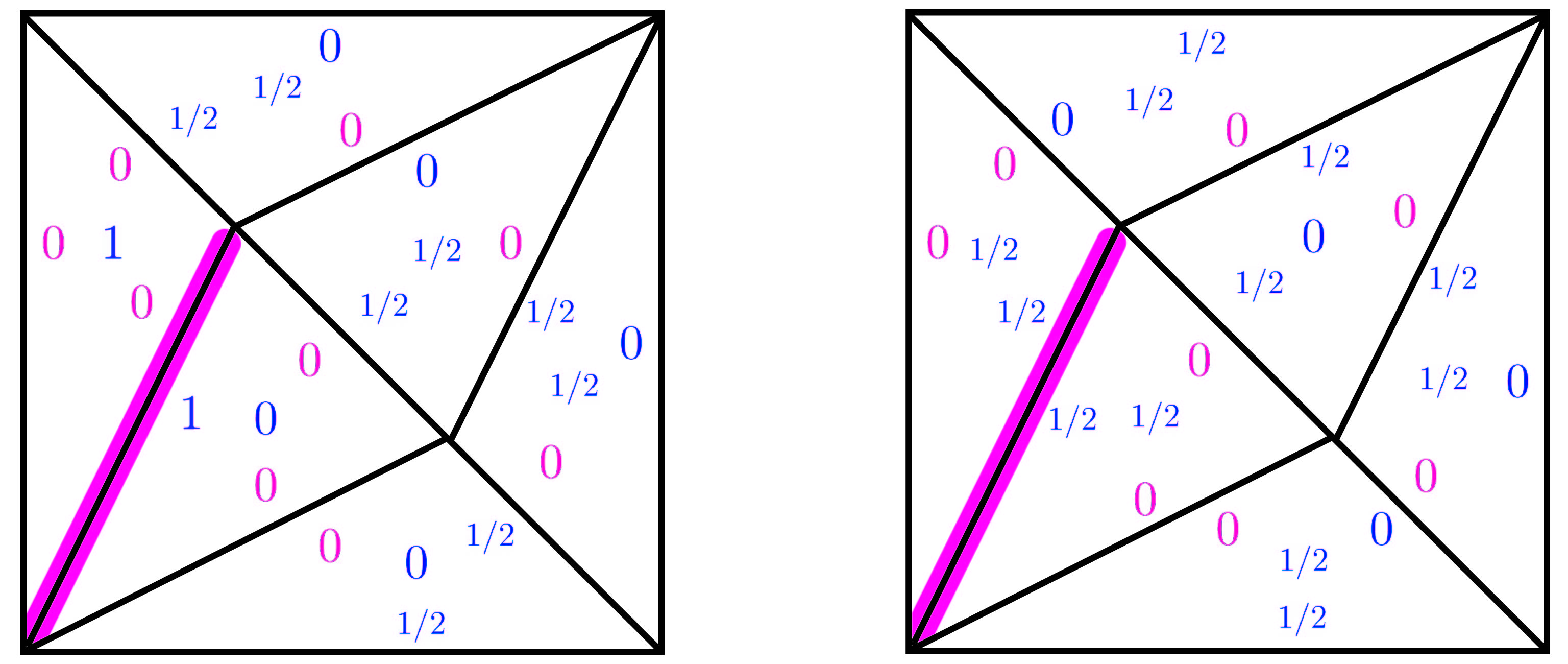}
\caption{We compare the canonical vertex {$p_{0}$} and {\Vte}. The $8$ overlapping zeros are colored in pink. The $4$ zeros on either side of the pink edge cannot all be linearly independent by Lemma~\ref{lem:rank-two-triangles}; see Eq.~(\ref{eq:rank-one-det}).
}
\label{fig:58-overlap-28}
\end{figure}

\Lem{\label{lem:not-neigh}
The vertices in Fig~(\ref{fig:T1NoNofT2}) are not neighbors of $\VFe$.
}
\Proof{
Consider the vertex $\Vte$ given in Eq.~(\ref{eq:V28}) from the list of vertices in Fig~(\ref{fig:T1NoNofT2}). 
By Lemma \ref{intersection} part (2), we have $Stab_{G_1}(\VFe)\cap  Stab_{G_1}(\Vte)\cong C_2$.  
Then the orbit of $\Vte$ under the $\Stab_{G_1}(\VFe)$ action 
has $|D_{16}|/|C_2|=8$ elements since $\Stab_{G_1}(\VFe)\cong D_{16}$ by Lemma \ref{lem:stab58}. This covers the whole set of vertices in Fig~(\ref{fig:T1NoNofT2}).

As discussed in Section \ref{sec:GraphMP1}, 
for two distributions $q_{1}$ and $q_{2}$ to be neighbors they must share $8$ linearly independent tight inequalities.
Let us consider $q_1=\VFe$ and the type $1$ vertex $q_2=\Vte$, and compare the number of overlapping zeros; see Fig.~(\ref{fig:58-overlap-28}).
There are precisely $8$ such zeros. However, by Lemma~\ref{lem:rank-two-triangles}, the two adjacent triangles on either side of the shaded edge cannot have rank $4$, thus the overlapping zeros  have rank $<8$.
Therefore $\VFe$ and $\Vte$ cannot be neighbors. Transitive action of $\Stab_{G_{1}}(\VFe)$ on the set of Fig.~(\ref{fig:T1NoNofT2}) implies that this holds when $q_2$ is one of the other vertices listed in Fig~(\ref{fig:T1NoNofT2}) as well.
}

 \begin{table}[h]
	\begin{subtable}[h]{0.45\textwidth}
        		\centering
		\begin{tabular}{|l|l|l|l|l|l|l|l|l|l|}
			\hline
			$q \in N_1(\VFe)$ & $U\in \Stab_{G_1}(\VFe)$ \\ \hline
			$\VFS$ & $\one,N^7\SWAP$ \\ \hline
			$\VtS$ & $N^4,N^3\SWAP$ \\ \hline
			$\Vooe$ & $N^7,N^6\SWAP$ \\ \hline
			$\Vnf$ & $N^3,N^2\SWAP$ \\ \hline
			$\Voot$ & $N^6,N^5\SWAP$ \\ \hline
			$\VSt$ & $N^2,N\SWAP$ \\ \hline
			$\VFt$ & $N,\SWAP$ \\ \hline
			$\VTo$ & $N^5,N^4\SWAP$ \\ \hline
		\end{tabular}
		\caption{ The action of $\Stab_{G_1}(\VFe)$ on the type $1$ neighbors of $\VFe$. See Lemma \ref{intersection}. The left column are $q\in N_1(\VFe)$, type 1 neighbours of $p_0$. The right column are elements $U\in\Stab_{G_1}(p_0)$ such that $U \VFS U^\dagger=q$.}
		\label{tab:V58-type1-neighbors}
   	 \end{subtable}
\hfill
	\begin{subtable}[h]{0.45\textwidth}
		\centering
		\begin{tabular}{|l|l|l|l|l|l|l|l|l|l|}
			\hline
			$q$ in Fig~(\ref{fig:T1NoNofT2}) & $U\in \Stab_{G_1}(\VFe)$ \\ \hline
			$\Vte$ & $\one,N(\SWAP)$ \\ \hline
			$\Vfs$& $N^2,N^3\SWAP$ \\ \hline
			$\Vtf$ & $N^6,N^7\SWAP$ \\ \hline
			$\VfS$ & $N^4,N^5\SWAP$ \\ \hline
			$\VTf$ & $N^7,\SWAP$ \\ \hline
			$\Vff$ & $N^5,N^6\SWAP$ \\ \hline
			$\VTF$ & $N,N^2\SWAP$ \\ \hline
			$\VFs$ & $N^3,N^4\SWAP$ \\ \hline
		\end{tabular}
	\caption{ The action of $\Stab_{G_1}(\VFe)$ on the vertices  in Fig~(\ref{fig:T1NoNofT2}). See Lemma \ref{intersection}. The left column are $q$, which are vertices  in Fig~(\ref{fig:T1NoNofT2}). The right column are elements $U\in\Stab_{G_1}(p_0)$ such that $U \Vte U^\dagger=q$}
	\label{tab:V58-type1-non-neighbors}
	\end{subtable}
\newline
\newline
\newline
\newline
	\begin{subtable}[h]{0.45\textwidth}
	\centering
	\scalebox{0.8}{
	\begin{tabular}{|l|l|l|l|l|l|l|l|l|l|}
		\hline
		$p \in N_2(\VFe)$  & $U\in \Stab_{G_1}(\VFe)$ & $U\in \Stab_{G_1}(\VFe)$\\ \hline
		$\Vnn$ & $\one,N\SWAP$ & \text{None} \\ \hline
		$\Vee$ & $N^4,N^5\SWAP$ & \text{None}  \\ \hline
		$\VfT$ & $N^5,N^6\SWAP$ & \text{None}  \\ \hline
		$\VTz$ & $N,N^2\SWAP$ & \text{None}  \\ \hline
		$\Vfn$ & $N^2,N^3\SWAP$ & \text{None}  \\ \hline
		$\VtT$ & $N^6,N^7\SWAP$ & \text{None}  \\ \hline
		$\Vne$ & $N^7,\SWAP$ & \text{None}  \\ \hline
		$\VSn$ & $N^3,N^4\SWAP$ & \text{None}  \\ \hline
		$\Vtt$ & \text{None}  & $\one,\SWAP$ \\ \hline
		$\Vfe$ & \text{None}  & $N^4,N^4\SWAP$ \\ \hline
		$\VFT$ & \text{None}  & $N^2,N^2\SWAP$ \\ \hline
		$\VfF$ & \text{None}  & $N^6,N^6\SWAP$ \\ \hline
		$\Vts$ & \text{None}  & $N,N\SWAP$ \\ \hline
		$\VFz$ & \text{None}  & $N^5,N^5\SWAP$ \\ \hline
		$\VTt$ & \text{None}  & $N^7,N^7\SWAP$ \\ \hline
		$\VFo$ & \text{None}  & $N^3,N^3\SWAP$ \\ \hline
	\end{tabular}
	}
	\caption{ The action of $\Stab_{G_1}(\VFe)$ on the type $2$ neighbors of $\VFe$. See Lemma \ref{intersection}. The left column are $p\in N_2(\VFe)$, vertices of type 2 neighbours of $p_0$. The middle column are elements $U\in \Stab_{G_1}(\VFe)$ such that $U \Vnn U^\dagger=p$. The right column are elements $U\in \Stab_{G_1}(\VFe)$ such that $U \Vtt U^\dagger=p$}
	\label{tab:V58-type2-neighbors}
	\end{subtable}
\hfill
	\begin{subtable}[h]{0.45\textwidth}
		\centering
		\begin{tabular}{|l|l|l|l|l|l|l|l|l|l|}
			\hline
			$p \in N(\VFS)$   & $U\in \Stab_{G_1}(\VFS)$ \\ \hline
			$\VFe$ & $\one,AR$ \\ \hline
			$\sVtF$ & $A^6,A^7R$ \\ \hline
			$\sVfn$ & $A^2,A^3R$ \\ \hline
			$\sVFn$ & $A^8,A^9R$ \\ \hline
			$\sVo$ & $A^{11},R$ \\ \hline
			$\sVfe$ & $A^5,A^6R$ \\ \hline
			$\sVee$ & $A^{10},A^{11}R$ \\ \hline
			$\sVs$ & $A^4,A^5R$ \\ \hline
			$\sVfT$ & $A^3,A^4R$ \\ \hline
			$\sVTz$ & $A^9,A^{10}R$ \\ \hline
			$\sVFo$ & $A^7,A^8R$ \\ \hline
			$\sVozt$ & $A,A^2R$ \\ \hline
		\end{tabular}
		\caption{The action of $\Stab_{G_1}(\VFS)$ on the neighbors of $\VFS$. See Lemma \ref{lem:stab57}. The left column are vertex $p\in N(\VFS)$, neighbors of $\VFS$. The right column are elements of $U\in \Stab_{G_1}(\VFS)$ such that $U \VFe U^\dagger=p$.}
		\label{tab:V57-neighbors}
	\end{subtable}
	\caption{The action of $Stab_{G_1}(q_0)$ on the neighbours of $q_0$ and the action of $\Stab_{G_1}(p_0)$ on type $1$ and type $2$ neighbours of $p_0$ and vertices in Fig~(\ref{fig:T1NoNofT2}); respectively.}
\end{table}

\end{document}